\documentclass[journal,twocolumn]{IEEEtran}
\usepackage{amsmath,amssymb,amsthm,mathtools,bm}
\usepackage{graphicx}
\usepackage{microtype}
\usepackage{xcolor}
\usepackage{hyperref}
\usepackage{booktabs}

\IfFileExists{pcrr7t.tfm}{}{}

\mathtoolsset{showonlyrefs}

\hypersetup{
  colorlinks=true,
  citecolor=blue!45!black,
  linkcolor=blue!45!black,
  urlcolor=blue!45!black,
  pdftitle={Exact Critical Curve for Uniform Stabilizer-State Identification}
}

\newtheorem{theorem}{Theorem}
\newtheorem{lemma}[theorem]{Lemma}
\newtheorem{proposition}[theorem]{Proposition}

\theoremstyle{definition}

\newtheorem{remark}[theorem]{Remark}
\theoremstyle{plain}

\newcommand{\F}{\mathbb F}
\newcommand{\Stab}{\mathsf{Stab}}
\newcommand{\Tr}{\operatorname{Tr}}
\newcommand{\rank}{\operatorname{rank}}
\newcommand{\Fix}{\operatorname{Fix}}
\newcommand{\ran}{\operatorname{ran}}
\newcommand{\Ord}{\mathrm{ord}}
\newcommand{\Exc}{\mathrm{exc}}
\newcommand{\Sp}{\operatorname{Sp}}
\newcommand{\St}{\operatorname{St}}
\newcommand{\id}{\operatorname{id}}
\newcommand{\one}{\bm 1}

\newcommand{\cG}{\mathcal G}

\newcommand{\gbinom}[3]{\genfrac{[}{]}{0pt}{}{#1}{#2}_{#3}}

\begin{document}
\title{Exact Critical Curve for Uniform Stabilizer-State Identification}
\author{Masahito Hayashi$^\dagger$, Yimin Lu$^{\ddagger}$
\thanks{$^{\dagger,\ddagger}$ The authors are listed in alphabetical order.}
\thanks{M. Hayashi is with the School of Data Science, The Chinese University
of Hong Kong, Shenzhen, China; the International Quantum Academy, Shenzhen,
China; and the Graduate School of Mathematics, Nagoya University, Japan
(e-mail: hmasahito@cuhk.edu.cn).}%
\thanks{Y. Lu is with 
Yau Mathematical Sciences Center, Tsinghua University,
Beijing 100084, China.}%
\thanks{M. Hayashi was supported in part by the Guangdong Provincial Quantum
Science Strategic Initiative (Grant No. GDZX2505003) and the General R\&D
Projects of 1+1+1 CUHK-CUHK(SZ)-GDST Joint Collaboration Fund (Grant No.
GRDP2025-022).}}

\maketitle

\begin{abstract}
The sample complexity of identifying an unknown pure stabilizer state has been a longstanding problem in quantum information. In this paper, we resolve the entire exact critical crossover for the uniform ensemble of pure $n$-qubit stabilizer states under
arbitrary collective measurements and without side information.  For
$k=n+c$, with fixed integer $c$, the optimal success probability converges to
an explicit positive series $P_{\infty}(c)$.  In particular,
$P_{\infty}(0)=0.1760837896\ldots$, while its two tails obey a
quadratic-exponential law below threshold and
$1-P_{\infty}(c)\sim2^{-c}$ above threshold.  The pretty-good measurement is
exactly optimal for every finite $n$ and $k$.  The proof diagonalizes the
stabilizer Gram operator through Clifford tensor-power duality and a defect
filtration, including the exceptional binary sectors.  The resulting
integer-scale law quantifies the $n+O(1)$ copy threshold.
\end{abstract}

\begin{IEEEkeywords}
Quantum states identification, sample complexity, pretty-good measurement, critical window.
\end{IEEEkeywords}

\section{Introduction}
\label{sec:introduction}
Stabilizer states are the discrete backbone of quantum error correction,
Clifford computation, and many certification protocols
\cite{AaronsonGottesman2004,Gross2006,GrossNezamiWalter2021}.  Their number on
$n$ qubits,
\begin{equation}
 N_n=2^n\prod_{j=1}^{n}(2^j+1),
 \label{eq:number}
\end{equation}
is already $2^{n^2/2+O(n)}$.  It is therefore natural to ask how many copies
of a uniformly random stabilizer state are needed to recover its \emph{full}
label: the projective state, equivalently its stabilizer group including the
Pauli signs.  We assume no classical side information.  Here we determine
the optimal success probability throughout the integer-scale window
$k=n+c$ for qubits, thereby quantifying the $n+O(1)$ copy threshold.

Hayashi \emph{et al.} established the first-order copy-rate threshold and
explicitly left the bounded-offset window $k=n+O(1)$
unresolved~\cite[Remark~10.6 and Sec.~13.3]{HayashiLu2026}.
At the complete-identification endpoint, their results give vanishing
success at every fixed copy rate below one and success tending to one
when $k-n\to+\infty$, but do not determine the success probability at
any fixed offset.  We resolve this open problem for the uniform ensemble
of all pure $n$-qubit stabilizer states.  The central step is an exact
determination of the complete finite spectrum of the stabilizer Gram
operator, including the eigenvalues and multiplicities of both ordinary
and exceptional sectors.  From this spectrum we derive the exact
finite-copy optimum and, with uniform control of the spectral sum, the
exact constant-offset limiting curve $P_\infty(c)$ for every fixed
integer $c$ in $k=n+c$.

Related learning results address the linear copy scale and
computational efficiency.  Montanaro's Bell-sampling algorithm identifies
pure stabilizer states from $O(n)$ copies with exponentially small failure,
using measurements on at most two copies at a time and $O(n^3)$ classical
processing time~\cite{Montanaro2017}.  More recently, Chang \emph{et al.}
proved finite-copy bounds for identification with worst-case failure
probability at most $\delta$~\cite{ChangCaroLaroccaWest2026}.  For
$0<\delta<1/8$, their minimum-copy quantity $L_\delta(n)$ satisfies
\begin{equation}
 n+\lceil\log_2(1/\delta)\rceil-3
 \leq L_\delta(n)
 \leq n+\lceil\log_2(1/\delta)\rceil+4,
 \label{eq:related-copy-bounds}
\end{equation}
and the upper bound admits a polynomial-time implementation.

For the uniform ensemble considered here, the covariant pretty-good measurement (PGM)
has the same conditional success for every
unknown state; consequently this value is also minimax optimal,
as shown in Section~\ref{sec:measurement}.  The distinction from the
finite-copy bounds above is the exact operational curve and its two tail
constants.  Efficient implementation of the optimal measurement remains
a separate question.

\section{Model and main results}
\label{sec:results}
Let $\Stab_n$ denote the set of projective pure $n$-qubit stabilizer states,
and write $\rho_x=|x\rangle\!\langle x|$ for $x\in\Stab_n$.  For $k$
identical copies define
\begin{equation}
 P^*_{n,k}:=\max_{\{M_x\}}
 \frac{1}{N_n}\sum_{x\in\Stab_n}
 \Tr(M_x\rho_x^{\otimes k}),
 \label{eq:operational}
\end{equation}
where the maximization is over arbitrary collective POVMs
$\{M_x\}_{x\in\Stab_n}$.  Our main theorem determines this probability
exactly for every finite number of qubits and copies by giving the complete
Gram spectrum.  Its fixed-offset consequence, Theorem~\ref{thm:curve},
then describes the critical window $k=n+c$.

Choose a unit vector representative for each projective state and let
$H_{n,k}$ be the $N_n\times N_n$ Gram matrix with entries
$(H_{n,k})_{xy}=\langle x|y\rangle^k$.  Changing these representatives
conjugates $H_{n,k}$ by a diagonal unitary and hence leaves its spectrum
unchanged.  Write $D:=2^n$ and define the moment normalization
\begin{equation}
 Z_{n,k}:=D\prod_{j=0}^{k-2}(D+2^j).
 \label{eq:main-Z}
\end{equation}
All empty products in this paper equal one; in particular, $Z_{n,1}=D$.
The remaining factor in the eigenvalues is $|O_k^{\rm st}(2)|$, the order
of the group of invertible binary linear maps preserving Hamming weight
modulo four and fixing the all-ones vector.  This order depends only on
$k$ and is given explicitly in Eq.~\eqref{eqS:stochastic-order}.
For every integer $r\geq0$, define
\begin{equation}
 F_r(x):=\frac{2^{r(r-1)}}{\prod_{j=1}^{r}(2^j-1)}
          \prod_{j=0}^{r-1}(x+2^{1-j}),
 \qquad F_0(x)=1.
 \label{eq:Fr}
\end{equation}

The spectral sectors have two branches, denoted by $\Ord$ and $\Exc$.
Their index $r$ is the dimension of an isotropic defect: an ordinary
defect excludes the all-ones vector, whereas an exceptional defect
contains it.  The precise defect construction is given in
Sections~\ref{sec:relations}--\ref{sec:filtration}.  Exceptional sectors
occur only when $4\mid k$.  We write $\lambda^b_{k,r}(n)$ and
$M^b_{k,r}(D)$ for the eigenvalue and its sector multiplicity in branch
$b\in\{\Ord,\Exc\}$; these scalar multiplicities are distinct from the
POVM operators $M_x$ in Eq.~\eqref{eq:operational}. Given the above notations, we state the exact optimal identification probability and give its proof sketch as follows.

\begin{theorem}[Exact finite-copy success probability and spectrum]
\label{thmS:finite-spectrum}
For every pair of integers $n,k\geq1$, the positive spectrum of $H_{n,k}$
consists of the following sectors:
\begin{enumerate}
\item The ordinary eigenvalues are
      \begin{equation}
       \lambda^{\Ord}_{k,r}(n)
       =\frac{N_n|O_k^{\rm st}(2)|}{Z_{n,k}}F_r(2^{n+3-k}),
       \label{eq:main-ordinary-eigenvalue}
      \end{equation}
      with multiplicities $M^{\Ord}_{k,r}(D)$ given by
      Eqs.~\eqref{eqS:M-odd}, \eqref{eqS:M-two}, and \eqref{eqS:M-zero}
      for $k$ odd, $k\equiv2\pmod4$, and $4\mid k$, respectively.
      The admissible values of $r$ are listed in Eq.~\eqref{eqS:ranges},
      with the endpoint conventions in Section~\ref{sec:multiplicity}.
\item When $4\mid k$, the exceptional eigenvalues and multiplicities are
      \begin{equation}
       \begin{aligned}
        \lambda^{\Exc}_{k,r}(n)&=\lambda^{\Ord}_{k-1,r-1}(n),\\
        M^{\Exc}_{k,r}(D)&=D^{-1}M^{\Ord}_{k-1,r-1}(D),
       \end{aligned}
       \label{eq:main-exceptional-spectrum}
      \end{equation}
      where $r-1$ ranges over the admissible ordinary defect indices at $k-1$,
      equivalently $r$ satisfies Eq.~\eqref{eqS:exceptional-ranges}.
\end{enumerate}
Sectors with zero multiplicity at $D=2^n$ are omitted.  The spectral
multiplicity of a repeated eigenvalue is the sum of its sector
multiplicities.  Let $a=(b,r)$ run over the
remaining sectors, with $\lambda_a(n,k):=\lambda^b_{k,r}(n)$ and
$M_a(n,k):=M^b_{k,r}(2^n)$.  All remaining Gram eigenvalues are zero,
with multiplicity $N_n-\sum_a M_a(n,k)$.
The pretty-good measurement is optimal, and
\begin{equation}
 P^*_{n,k}
 =\left[\frac1{N_n}\sum_aM_a(n,k)\sqrt{\lambda_a(n,k)}\right]^2,
 \label{eqS:finite-success}
\end{equation}
where $N_n$ is the ensemble size in Eq.~\eqref{eq:number}.
\end{theorem}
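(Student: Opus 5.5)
The plan has two layers. First, an operational reduction: optimality of the PGM and Eq.~\eqref{eqS:finite-success} follow from group covariance alone. Second, the explicit Gram spectrum, which comes from the Clifford commutant of $\rho^{\otimes k}$. Most of the work is in the second layer.

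\emph{Operational reduction.} Let $G$ be the Clifford group, which acts transitively on $\Stab_n$. The ensemble is $G$-covariant with uniform prior, so the optimal POVM may be taken covariant. In that case the success probability equals the fidelity of a single seed state. Define
\[
 \rho:=\frac1{N_n}\sum_x\rho_x^{\otimes k}.
\]
By Schur's lemma the PGM $M_x=\rho^{-1/2}\rho_x^{\otimes k}\rho^{-1/2}/N_n$ is then covariant and achieves $\frac1{N_n}\sum_x|\langle x^{\otimes k}|\rho^{-1/2}|x^{\otimes k}\rangle|^2/N_n$. Because the ensemble is geometrically uniform, Eldar--Forney / Ban et al.\ give $P_{\rm PGM}=\big(\tfrac1{N_n}\Tr\sqrt{H}\big)^2$. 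Here $\sqrt H$ is formed from the Gram matrix $H=H_{n,k}$, since $H$ and $N_n\rho$ share nonzero spectra. Optimality of the PGM follows by exhibiting a dual feasible operator $Y=\frac{1}{N_n}\big(\tfrac1{N_n}\Tr\sqrt H\big)\sqrt{N_n\rho}$ (suitably normalised) that satisfies $Y\geq \frac1{N_n}\rho_x^{\otimes k}$ for all $x$. The diagonal element $\langle x^{\otimes k}|\rho^{-1/2}|x^{\otimes k}\rangle$ is independent of $x$ by covariance, and this gives the Holevo--Yuen--Kennedy--Lax conditions. Equivalently, one can apply the standard fact that the PGM is optimal for geometrically uniform pure-state ensembles with irreducible-or-multiplicity-free stabiliser, checked via the Cauchy--Schwarz bound $\Tr M_x\rho_x^{\otimes k}\le\langle x|\,\cdot\,\rangle$. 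The resulting formula is $\big(N_n^{-1}\sum_a M_a\sqrt{\lambda_a}\big)^2$.

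\emph{Spectrum.} The nonzero spectrum of $H$ equals that of $N_n\rho$. By the Gross--Nezami--Walter tensor-power duality, the commutant of $\{C^{\otimes k}\}$ is spanned by operators $R(T)$, indexed by stochastic Lagrangian subspaces $T\subset\F_2^{2k}$. Moreover,
\[
 \rho=\frac1{Z_{n,k}}\sum_{T\in\Sigma_{k,k}} R(T),
\]
with the normalisation $Z_{n,k}$ of Eq.~\eqref{eq:main-Z}. I therefore diagonalise $\sum_T R(T)$. The algebra generated by the $R(T)$ acts through the semigroup of these $T$. Its invertible part is $O_k^{\rm st}(2)$, which explains the factor $|O_k^{\rm st}(2)|$. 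The non-invertible $T$ are classified by their defect subspace $N\subset\F_2^k$: an isotropic, weight-$0\bmod 4$ subspace, ordinary if $\one\notin N$ and exceptional otherwise.

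I filter $(\mathbb C^{D})^{\otimes k}$ by defect dimension $r$ and show that on the $r$-th graded piece $\sum_T R(T)$ acts as the scalar $|O_k^{\rm st}|F_r(2^{n+3-k})$. Here $F_r$ arises as a Gaussian-binomial count of subspaces $T$ compatible with a fixed $r$-dimensional defect, weighted by $D^{\dim}$ factors. The multiplicities $M_{k,r}$ are then computed as dimensions of graded pieces, obtained by Möbius inversion over the lattice of isotropic subspaces. This inversion yields Eqs.~\eqref{eqS:M-odd}--\eqref{eqS:M-zero}.

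\emph{Exceptional branch and main obstacle.} For the exceptional branch when $4\mid k$, quotienting by $\one$ reduces the problem to $k-1$ copies with defect index $r-1$. The diagonal Pauli symmetry accounts for the factor $D^{-1}$. I expect the main obstacle to be proving that the filtration is genuinely triangular, i.e.\ that $\sum_TR(T)$ acts as a scalar on each graded piece. This hinges on a composition rule for $R(T)R(T')$ in terms of defects, with binary subtleties at weights $2\bmod4$. I would first verify it on low $k$ (namely $k\le4$) by direct computation, and then induct on $r$ using the composition law. As a final check, the trace identity $\sum_aM_a\lambda_a=\Tr H=N_n$ should hold.
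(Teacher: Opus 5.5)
You have the operational layer right, and it matches the paper. Transitivity makes $\langle\phi_x|S^{-1/2}|\phi_x\rangle$ constant, and $\Gamma=(a/N)S^{1/2}$ satisfies the Yuen--Kennedy--Lax conditions by the rank-one criterion.

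The spectral layer, however, rests on a claim that is false as stated: $A_k(D)=\sum_TR_n(T)$ is \emph{not} a scalar on the graded pieces of a defect filtration of $(\mathbb C^D)^{\otimes k}$. A grade decomposes as $\mathcal K_{k,r}\simeq\bigoplus_\tau\operatorname{Ind}_{P_N}^{G}(\tau)\otimes\eta^{\Ord}_{k,r}(\tau)$. Because $A_k(D)=\Pi_kA_k(D)\Pi_k$, it acts by $\theta_{k,r}$ only on the $O_k^{\rm st}(2)$-fixed part $\eta^{\Ord}_{k,r}(\bm1)$, and by zero on every other $G$-type.

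Even on that fixed part, scalarity does not follow from a composition law plus induction on $r$. It needs the fact that the stabilizer-tensor span is a multiplicity-free sum of pairwise inequivalent irreducible Clifford modules, one per defect orbit (Montealegre-Mora--Gross), followed by Schur's lemma. Your plan never supplies this input.

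The same confusion propagates to the multiplicities and eigenvalues.
\begin{itemize}
\item The multiplicities are $\dim\mathcal K_{k,r}^G$, not $\dim\mathcal K_{k,r}$. The paper obtains them from Burnside counts $a_s(D)=\dim C_N^{G_N}$ (Fulman--Stanton enumerators) and adjacent differences $a_s-a_{s-2}$ justified by the orbit--tail recursion. M\"obius inversion over isotropic subspaces does not compute dimensions of spans of overlapping, non-orthogonal CSS codes.
\item Counting relations compatible with a defect gives the diagonal entry $\alpha_{k,r}(D)$ of a CSS vector, which is not an eigenvector. The eigenvalue is $\alpha_{k,r}$ divided by that vector's projection weight $\kappa_N\gamma_{k,r}(D)/|G|$ onto the fixed sector. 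Computing $\alpha_{k,r}$ itself requires the mixed-signature moment, because the quotient code carries $U^{\otimes a}\otimes\overline U^{\otimes b}$. At $k=5$, $r=1$, the overlap factor is $D+4$ while the eigenvalue factor is $D+8$.
\end{itemize}

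Second, all of this is available only in the stable range $n\geq k$. The theorem is asserted for every $n,k\geq1$, and $k>n$ is exactly the regime the $c>0$ window needs. There the $R_n(T)$ are linearly dependent, and some candidate sectors vanish while the first over-rank layers survive; your proposal does not address this.

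The paper closes the gap by polynomial continuation:
\begin{itemize}
\item $\Tr A_k(D)^p$ is a polynomial in $D$, by the semigroup law and the trace formula.
\item The candidate sums $\sum_aM_a(D)\theta_a(D)^p$ are also polynomials, and they agree with it at infinitely many stable values $D=2^n$.
\item Integrality (via Burnside orbit counts) and nonnegativity of each $M_a(2^n)$, together with positivity of the $\theta_a$, then identify the nonzero spectrum by moment matching.
\end{itemize}

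Finally, for $4\mid k$ the reduction is not a quotient by $\one$, since $q_k$ descends only to $\one^\perp/\langle\one\rangle$, which has dimension $k-2$. It is shortening, via an explicit unitary $\mathcal U_n:(\mathbb C^D)^{\otimes(k-1)}\to\mathbb C^D\otimes C_{\langle\one_k\rangle}$ and the identity $I_D\otimes\bigl(P_{\one}A_k(D)P_{\one}\bigr)|_{C_{\langle\one_k\rangle}}=(D+2^{k-2})\,\mathcal U_nA_{k-1}(D)\mathcal U_n^\dagger$. The factor comes from one closed lift of weight $D^{d+1}$ and $2^{k-2}$ transverse lifts of weight $D^d$ per fibre. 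The $D^{-1}$ in the multiplicity is the $\mathbb C^D$ tensor factor, not a Pauli symmetry. The identity $D+2^{k-2}=Z_{n,k}/Z_{n,k-1}$ is what makes the Gram eigenvalues coincide.
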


\begin{proof}[Proof sketch]
The Clifford group acts transitively on the equally likely pure states.
Section~\ref{sec:measurement} uses this symmetry to prove PGM optimality
and reduce $P^*_{n,k}$ to $(\Tr\sqrt{H_{n,k}}/N_n)^2$.
Thus it remains to determine the nonzero Gram spectrum.

Section~\ref{sec:relations} expresses the stabilizer moment as a sum of
binary relation operators divided by $Z_{n,k}$.
The nonzero eigenvalues of this relation sum become Gram eigenvalues
after multiplication by $N_n/Z_{n,k}$.
For $n\geq k$, Section~\ref{sec:filtration} decomposes the
stabilizer-tensor span into physical sectors indexed by defects.
Section~\ref{sec:ordinary} evaluates the relation sum on each ordinary
sector through weighted flag counts and a parabolic group count, giving
Eq.~\eqref{eq:main-ordinary-eigenvalue}.

When $4\mid k$, Section~\ref{sec:exceptional} shortens the all-ones code
to the ordinary problem at $k-1$.  The resulting operator identity gives
Eq.~\eqref{eq:main-exceptional-spectrum}.
Section~\ref{sec:multiplicity} then obtains the sector dimensions from
differences of invariant-code dimensions; shortening separates the
ordinary and exceptional contributions.

These steps give the complete spectrum in the stable range $n\geq k$.
Section~\ref{sec:finite} extends it to every physical value $D=2^n$
using polynomial identities for the power sums, together with positivity
of the surviving eigenvalues and nonnegative integer multiplicities.
Subtracting the resulting rank from $N_n$ gives the Gram nullity.
The trace formula now yields Eq.~\eqref{eqS:finite-success}; the complete
proof is assembled in Appendix~\ref{app:proof-finite-spectrum}.
\end{proof}

An important consequence of Theorem~\ref{thmS:finite-spectrum} is the
following exact curve in the critical window.  Fix $c\in\mathbb Z$ and
put $k=n+c$.  The theorem applies also when $c>0$, where $k>n$ lies outside
the stable decomposition range.  Section~\ref{sec:limit} derives the limits
of the ordinary sector weights $M^{\Ord}_{k,r}(2^n)/N_n$ and eigenvalues
$\lambda^{\Ord}_{k,r}(n)$.  To pass from these sector limits to the full
spectral sum, it proves a summable bound $C_c2^{-r^2}$, with $C_c<\infty$
depending only on $c$, for the ordinary amplitude summands
$(M^{\Ord}_{k,r}(2^n)/N_n)\sqrt{\lambda^{\Ord}_{k,r}(n)}$.
The bound is uniform in all sufficiently large $n$ and all admissible $r$,
including terminal sectors and every residue class of $k$.
The exceptional contribution to $\sqrt{P^*_{n,k}}$ is at most $2^{-n}$
by Eq.~\eqref{eqS:exceptional-bound}.  Dominated convergence therefore
gives Theorem~\ref{thm:curve}.  Section~\ref{sec:limit} states the limiting
estimates and their consequences, with detailed proofs in
Appendix~\ref{app:limit}.  Section~\ref{sec:tails} gives the two tail
laws, whose derivations are in Appendix~\ref{app:tails}.

\begin{theorem}[Fixed-offset curve]
\label{thm:curve}
For every fixed integer $c$,
\begin{equation}
 \lim_{n\to\infty}P^*_{n,n+c}=P_\infty(c),
 P_\infty(c)=\left[\sum_{r=0}^{\infty}
 w_r(c)\sqrt{\lambda_r(c)}\right]^2 .
 \label{eq:curve}
\end{equation}
The absolute constants $A,Q>0$ are the convergent products
\begin{equation}
 A:=\prod_{j\geq1}(1+2^{-j}),
 Q:=\prod_{j\geq1}(1-4^{-j}),
 \label{eq:AQ}
\end{equation}
independent of $n$, $k$, $c$, and $r$.
For each integer $r\geq0$, put $u_r:=2r-c$ and define
\begin{align}
  w_r(c)&:=\frac{2^{-(u_r^2+3u_r+4)/2}(1+2^{-u_r-2})}{AQ},\label{eq:weight}\\
 \lambda_0(c)&:=2^{(c-1)(c-2)/2+1}QR_c,\label{eq:lambda0}\\
 \lambda_r(c)&:=\lambda_0(c)F_r(2^{3-c}),\label{eq:lambdar}
\end{align}
where $F_r$ is given in Eq.~\eqref{eq:Fr}, and the positive factor $R_c$
depends only on $c$:
\begin{equation}
 R_c:=
 \begin{cases}
 \displaystyle\prod_{j=1}^{1-c}(1+2^{-j}),&c\leq1,\\[1ex]
 \displaystyle\left[\prod_{j=2-c}^{0}(1+2^{-j})\right]^{-1},&c\geq2.
 \end{cases}
 \label{eq:Rc}
\end{equation}
In particular, $R_1=1$ by the empty-product convention.

\end{theorem}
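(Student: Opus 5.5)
We prove Theorem~\ref{thm:curve} starting from the exact formula \eqref{eqS:finite-success} of Theorem~\ref{thmS:finite-spectrum} with $k=n+c$. We write
\begin{equation}
 \sqrt{P^*_{n,n+c}}=\sum_{r}\frac{M^{\Ord}_{k,r}(2^n)}{N_n}\sqrt{\lambda^{\Ord}_{k,r}(n)}
 +\sum_{r}\frac{M^{\Exc}_{k,r}(2^n)}{N_n}\sqrt{\lambda^{\Exc}_{k,r}(n)} .
\end{equation}
The argument has three parts: discard the exceptional sum, compute the termwise limit of the ordinary sum, and justify the interchange of limit and sum. Since $x\mapsto x^2$ is continuous, this yields \eqref{eq:curve}.

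\textbf{Exceptional part.} By \eqref{eq:main-exceptional-spectrum}, each exceptional summand equals $D^{-1}$ times an ordinary summand at $k-1$. The full ordinary amplitude sum at $k-1$ is at most $\sqrt{P^*_{n,k-1}}\le1$. Hence the exceptional contribution is at most $2^{-n}$ (this is the bound \eqref{eqS:exceptional-bound}), and it vanishes in the limit.

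\textbf{Termwise limits.} First consider the eigenvalue. In \eqref{eq:main-ordinary-eigenvalue} the argument $F_r(2^{n+3-k})=F_r(2^{3-c})$ is independent of $n$. It therefore suffices to show that
\begin{equation}
 N_n|O_k^{\rm st}(2)|/Z_{n,k}\to\lambda_0(c).
\end{equation}
To do this, insert \eqref{eq:number}, \eqref{eq:main-Z} and the explicit order \eqref{eqS:stochastic-order} of the stochastic orthogonal group. We expect that order to be of the form $2^{(\text{quadratic in }k)}\prod_j(1-4^{-j})$ up to a residue-dependent correction. Then pair the factors $(2^j+1)$ of $N_n$ against $(2^n+2^j)$ of $Z_{n,k}$, shifting indices by $n-k=-c$. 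The quadratic powers of $2$ should cancel to $2^{(c-1)(c-2)/2+1}$, and the leftover finite products of $(1+2^{-j})$ should assemble into $R_c$; the two cases $c\le1$ and $c\ge2$ correspond to whether the index shift leaves extra or missing factors. The factor $Q$ arises as the limit of the finite product $\prod(1-4^{-j})$.

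Next consider the weights. Rewrite the multiplicities \eqref{eqS:M-odd}, \eqref{eqS:M-two}, \eqref{eqS:M-zero} as Gaussian-binomial-type expressions in $D$. Dividing by $N_n$ and letting $n\to\infty$ with $k=n+c$, every ratio $\prod(2^n-2^j)/\prod(2^n+2^j)$ converges to an infinite product. The expected outcome is $w_r(c)$, with $A^{-1}$ coming from $\prod(1+2^{-j})$ in $N_n$. The residue class of $k$ modulo $4$ should affect only terms that vanish in the limit; this must be checked separately for each of the three formulas.

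\textbf{Domination (main obstacle).} We need a bound
\begin{equation}
 \frac{M^{\Ord}_{k,r}(2^n)}{N_n}\sqrt{\lambda^{\Ord}_{k,r}(n)}\le C_c\,2^{-r^2}
\end{equation}
that holds uniformly in large $n$ and in all admissible $r$, including the terminal sectors at the ends of \eqref{eqS:ranges}. For the eigenvalue side, $F_r(x)\le 2^{r(r-1)}\prod_{j<r}(x+2)/\prod(2^j-1)$ gives $\sqrt{F_r}\lesssim 2^{r^2/2+O(r)}$. For the weight side, we would bound the multiplicity by its leading monomial, $M/N_n\lesssim 2^{-u_r^2/2-O(u_r)}=2^{-2r^2+O(r)}$, with every finite product bounded crudely by $A$ or by $Q^{-1}$. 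The product is then at most $2^{-3r^2/2+O(r)}\le C_c2^{-r^2}$. The delicate point is the terminal sectors, where the multiplicity formulas carry endpoint conventions and $r$ is comparable to $n$. There the finite products are truncated and must not be replaced by their limits; we would bound them directly using $\prod_{j\le m}(1+2^{-j})\le A$. With this bound in hand, dominated convergence for series gives the ordinary sum's limit $\sum_r w_r(c)\sqrt{\lambda_r(c)}$, and squaring completes the proof.
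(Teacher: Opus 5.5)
Your proposal is correct and follows the paper's proof essentially step for step. The exceptional amplitude equals $D^{-1}\sqrt{P^*_{n,k-1}}\le 2^{-n}$ by shortening. The termwise limits come from the exact cancellations $N_n=2^{(n^2+3n)/2}A_n$, $Z_{n,k}=2^{nk}\prod_{h=2-c}^{n}(1+2^{-h})$ and $|O_k^{\rm st}(2)|=2^{(k^2-3k+4)/2}C_k$ with $C_k\to Q$, together with the $n$-independence of $F_r(2^{3-c})$. The interchange of limit and sum uses the Gaussian majorant $C_c2^{-r^2}$ of Lemma~\ref{lemS:limit-bounds}; the paper controls the finitely many sectors with $2r<c-2$ by termwise convergence and bounds the terminal sectors $s=1,2$ directly, as you indicate.

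Incidentally, your estimate on $F_r$ yields the sharper $\sqrt{F_r}\lesssim 2^{r^2/4+O(r)}$, though the $2^{r^2/2+O(r)}$ you state already suffices against the $2^{-2r^2+O(r)}$ decay of the weights.
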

Here $r$ labels the ordinary spectral sectors: $w_r(c)$ is the limit of
$M^{\Ord}_{k,r}(2^n)/N_n$, and $\lambda_r(c)$ is the limit of
$\lambda^{\Ord}_{k,r}(n)$ at fixed $c,r$.  These weights satisfy
$\sum_rw_r(c)\leq1$; the
remaining spectral mass is at zero.  The normalization proved in
Appendix~\ref{app:proof-curve} concerns the first moment and retains the Gram
matrix's unit mean eigenvalue:
\begin{equation}
 \sum_{r\geq0}w_r(c)\lambda_r(c)=1.
 \label{eq:normalization}
\end{equation} The series is positive and super-exponentially convergent in $r$.
For example, $P_\infty(0)\approx0.1761$ and
$P_\infty(2)\approx0.7026$: two additional copies yield an order-one
change in the limiting success probability.
Figure~\ref{fig:num-sampling} in Section~\ref{sec:numerics} compares
simulated success frequencies with the exact finite-copy probabilities
and the limiting curve, while Fig.~\ref{fig:num-log} resolves the
deterministic finite-size corrections on a logarithmic scale.
At $c=0$, Table~\ref{tab:num-center} shows that the deviation
$|P^*_{n,n}-P_\infty(0)|$ decreases from $7.25\times10^{-2}$ at $n=4$
to $3.64\times10^{-6}$ at $n=32$ and $7.01\times10^{-40}$ at $n=256$.
Moreover, as $m\to\infty$,
\begin{equation}
 P_\infty(-m)\sim \frac{2^{-(m^2+3m+4)/2}}{AQ},\,
 1-P_\infty(m)\sim2^{-m}.
\end{equation}
These tails show that the crossover is neither Gaussian nor logistic: the subcritical
side has a quadratic-exponential tail, whereas every additional copy above
threshold asymptotically halves the error.

\section{Optimal measurement and Gram reduction}
\label{sec:measurement}
We begin the proof of the main results by reducing the optimization over
measurements to a spectral calculation.  The transitive Clifford symmetry
of the uniform pure-state ensemble makes the PGM optimal and expresses
its success probability through the square root of the Gram matrix.
The Gram and frame operators have the same nonzero spectrum, so the
remaining calculation can be carried out using the stabilizer moment
operator introduced below.

\subsection{Gram and frame operators}
Set $D=2^n$ and use $N_n$ from Eq.~\eqref{eq:number}.
The projective
Clifford group acts transitively on $\Stab_n$.  Given $k\geq1$, define the
synthesis map
\begin{equation}
 \Phi_{n,k}:\mathbb C^{N_n}\longrightarrow
 (\mathbb C^D)^{\otimes k},
 \qquad
 \Phi_{n,k}e_x=|x\rangle^{\otimes k}.
 \label{eqS:analysis}
\end{equation}
Its Gram and frame operators are
\begin{equation}
 H_{n,k}:=\Phi_{n,k}^{\dagger}\Phi_{n,k},
 \qquad
 S_{n,k}:=\Phi_{n,k}\Phi_{n,k}^{\dagger}
          =N_n\rho_{n,k},
 \label{eqS:gram-frame}
\end{equation}
where
\begin{equation}
 \rho_{n,k}:=\frac1{N_n}\sum_{x\in\Stab_n}\rho_x^{\otimes k}.
 \label{eqS:moment}
\end{equation}
The polar decomposition of $\Phi_{n,k}$ shows that $H_{n,k}$ and
$S_{n,k}$ have the same nonzero eigenvalues, including multiplicities.

\subsection{Exact PGM optimality}
We record a general statement, including the case in which the ensemble is
linearly dependent.

\begin{lemma}[Transitive pure orbits]
\label{lemS:PGM}
Let $\{|\phi_x\rangle:x\in X\}$ be equiprobable pure states, $|X|=N$, and
suppose a unitary group acts transitively on their rays.  Then the
pretty-good measurement is minimum-error optimal and
\begin{equation}
 P^*=P_{\rm PGM}
 =\left(\frac{\Tr\sqrt H}{N}\right)^2,
 \qquad H_{xy}=\langle\phi_x|\phi_y\rangle.
 \label{eqS:PGMformula}
\end{equation}
\end{lemma}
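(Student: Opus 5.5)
We prove Lemma~\ref{lemS:PGM} in three steps: compute the PGM success probability, build a dual feasible operator from the same square root, and show the two values coincide using the symmetry. Let $\Phi:\mathbb C^N\to\mathcal H$ be the synthesis map $e_x\mapsto|\phi_x\rangle$, and set $S=\Phi\Phi^\dagger=\sum_x|\phi_x\rangle\langle\phi_x|$ and $H=\Phi^\dagger\Phi$. All operators are restricted to $\ran S$, and inverses are pseudo-inverses there.

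\textbf{Step 1 (PGM value).} The PGM elements are $M_x=S^{-1/2}|\phi_x\rangle\langle\phi_x|S^{-1/2}$, completed by the projector onto $(\ran S)^\perp$, which may be added to any outcome without affecting success. The polar decomposition $\Phi=S^{1/2}V=V H^{1/2}$ gives $\Phi^\dagger S^{-1/2}\Phi=H^{1/2}$. Hence
\begin{equation}
 \langle\phi_x|S^{-1/2}|\phi_x\rangle=(\sqrt H)_{xx},
\end{equation}
and therefore
\begin{equation}
 P_{\rm PGM}=\frac1N\sum_x\bigl|(\sqrt H)_{xx}\bigr|^2 .
\end{equation}

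\textbf{Step 2 (symmetry makes the diagonal constant).} Let $U_g$ be the transitive action, with $U_g|\phi_x\rangle=\omega_g(x)|\phi_{gx}\rangle$ for phases $\omega_g(x)$. Then $H$ is conjugated by a monomial unitary (a permutation times a diagonal phase matrix) $W_g$, so $\sqrt H$ is conjugated by the same $W_g$. Since phases cancel on the diagonal, $(\sqrt H)_{xx}=(\sqrt H)_{gx,gx}$. Transitivity then makes all diagonal entries equal to $\Tr\sqrt H/N$, which gives
\begin{equation}
 P_{\rm PGM}=\left(\frac{\Tr\sqrt H}{N}\right)^2 .
\end{equation}
If the map $x\mapsto$ ray is not injective (repeated rays), the argument is unchanged, since the diagonal is indexed by $x$ and $x\mapsto gx$ still acts transitively on labels. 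I would make sure the action is taken on labels.

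\textbf{Step 3 (optimality via a dual certificate).} The minimum-error problem has Yuen--Kennedy--Lax dual: minimize $\Tr Y$ subject to $Y\ge \frac1N|\phi_x\rangle\langle\phi_x|$ for all $x$. Weak duality gives $P^*\le\Tr Y$ for every feasible $Y$. Take
\begin{equation}
 Y=\frac{\Tr\sqrt H}{N^2}\,S^{1/2}
\end{equation}
(on $\ran S$). Then $\Tr Y=(\Tr\sqrt H)^2/N^2$, because $\Tr S^{1/2}=\Tr H^{1/2}$. For feasibility we need
\begin{equation}
 S^{1/2}\ge \frac{N}{\Tr\sqrt H}\,|\phi_x\rangle\langle\phi_x| .
\end{equation}
This holds if and only if $\langle\phi_x|S^{-1/2}|\phi_x\rangle\le \Tr\sqrt H/N$, using that for $A\ge0$ and $v\in\ran A$ one has $A\ge t\,vv^\dagger$ iff $t\,v^\dagger A^{-1}v\le1$. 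By Step 2 this holds with equality. Hence $P^*\le P_{\rm PGM}$, and the two values are equal.

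The main obstacle is the linearly dependent case: one must check that the pseudo-inverse identities and the Schur-complement feasibility criterion remain valid on $\ran S$. This holds because every $|\phi_x\rangle$ lies in $\ran S$.
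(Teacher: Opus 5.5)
Your proposal is correct and is essentially the paper's proof. Your certificate $Y=\frac{\Tr\sqrt H}{N^2}S^{1/2}$ is exactly the paper's dual operator $\Gamma=\frac{a}{N}S^{1/2}$ with $a=\Tr\sqrt H/N$, feasibility uses the same rank-one criterion, and the value comes from the same polar-decomposition identity $\langle\phi_x|S^{-1/2}|\phi_y\rangle=(\sqrt H)_{xy}$; the differences are cosmetic (you get a constant diagonal from monomial conjugation of $H$ rather than from covariance of $S^{-1/2}$, and you use weak duality instead of citing the Yuen--Kennedy--Lax conditions).

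One caution on your aside about repeated rays. A transitive action on rays does not by itself induce one on labels when rays repeat. With unequal repetitions (e.g.\ $\phi_1=\phi_2=|0\rangle$, $\phi_3=|1\rangle$, with the rays swapped by a bit flip) the diagonal of $\sqrt H$ is not constant and the formula fails. The hypothesis must therefore be read as an action permuting the labelled states, as the paper implicitly does.
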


\noindent\emph{Proof:} See Appendix~\ref{app:proof-pgm}.\par

The PGM also attains the optimal worst-case conditional success
probability for this transitive pure-state ensemble.  Thus the
uniform-prior optimum is also the unrestricted minimax value;
the argument is given in Appendix~\ref{app:minimax}.

\section{Binary stochastic relations}
\label{sec:relations}
The reduction in Section~\ref{sec:measurement} leaves the Gram eigenvalues
and their multiplicities to be determined.  We now express the stabilizer
moment as a sum of operators indexed by binary stochastic relations.
Their quadratic and defect data identify the ordinary and exceptional
branches, while the associated group orders provide the counting factors
used later.  This prepares the physical sector decomposition in
Section~\ref{sec:filtration}.

\subsection{Quadratic relations and the moment operator}
We follow
Ref.~\cite{GrossNezamiWalter2021}.  Let $E_k=\F_2^k$, let
$b(x,y)=x\cdot y\pmod2$, and retain the integer-valued quadratic refinement
\begin{equation}
 q_k(x):=\sum_{i=1}^{k}x_i\pmod4,
 \qquad x_i\in\{0,1\}.
 \label{eqS:quadratic}
\end{equation}
Its polar form is $b$, in the convention
$q_k(x+y)=q_k(x)+q_k(y)+2b(x,y)\pmod4$.  The all-ones vector $\one_k$
is the \emph{Wu vector}: it is characterized by
$b(\one_k,x)=b(x,x)$ for every $x$.  Below, isotropic means that the full
mod-four refinement vanishes.  The adjective \emph{stochastic} refers to
preservation of this all-ones vector.
On $E_k\oplus E_k$ use
$q_{k,k}(x,y)=q_k(x)-q_k(y)\pmod4$.  A \emph{stochastic Lagrangian
relation} is a $k$-dimensional subspace
$T\leq E_k\oplus E_k$ on which $q_{k,k}$ vanishes and which contains
$(\one,\one)$.  Denote the finite set of these relations by
$\Sigma_{k,k}(2)$.

Regroup $(\mathbb C^D)^{\otimes k}$ into $n$ copies of
$(\mathbb C^2)^{\otimes k}$.  For $T\in\Sigma_{k,k}(2)$ set
\begin{equation}
 r(T):=\sum_{(x,y)\in T}|x\rangle\!\langle y|,
 \qquad R_n(T):=r(T)^{\otimes n}.
 \label{eqS:Rrelation}
\end{equation}
The exact stabilizer moment is
\begin{equation}
 \rho_{n,k}=\frac1{Z_{n,k}}
 \sum_{T\in\Sigma_{k,k}(2)}R_n(T),
 \qquad
 Z_{n,k}=D\prod_{j=0}^{k-2}(D+2^j).
 \label{eqS:moment-GNW}
\end{equation}
This is Theorem 5.3 of Ref.~\cite{GrossNezamiWalter2021}; the formula remains
valid outside the stable commutant range.

For a relation $T$ define its left and right kernels
\begin{equation}
 N_L(T):=\{x:(x,0)\in T\},\qquad
 N_R(T):=\{y:(0,y)\in T\}.
 \label{eqS:defects}
\end{equation}
They are totally $q_k$-isotropic, have the same dimension, and $T$ induces
an isometry
$N_R(T)^\perp/N_R(T)\to N_L(T)^\perp/N_L(T)$.  Composition of linear
relations gives the semigroup law
\begin{equation}
 R_n(T_1)R_n(T_2)
 =D^{\dim[N_R(T_1)\cap N_L(T_2)]}R_n(T_1\circ T_2).
 \label{eqS:semigroup}
\end{equation}
The exponent counts closed binary summation loops.  Also
\begin{equation}
 \Tr R_n(T)=D^{\dim(T\cap\Delta)},
 \qquad \Delta:=\{(x,x):x\in E_k\}.
 \label{eqS:relation-trace}
\end{equation}

The defect-zero relations are graphs of the stochastic orthogonal group
\begin{equation}
 O_k^{\rm st}(2):=
 \{O\in\mathrm{GL}(E_k):q_k(Ox)=q_k(x),\ O\one=\one\}.
 \label{eqS:stochastic-group}
\end{equation}
For an isotropic defect $N$, write
$Q_N=N^\perp/N$ and $G_N=\St(Q_N)$ for the group preserving the induced
quadratic data and stochastic vector.  Ordinary defects have $\one\notin N$.
When $4\mid k$, $q_k(\one)=0$ and defects containing $\one$ form an
additional exceptional family.

\subsection{Group orders and quadratic types}
For later use, define
\begin{align}
 |O^\delta(2m,2)|
 &=2^{m(m-1)+1}(2^m-\delta)
   \prod_{j=1}^{m-1}(2^{2j}-1),
 \label{eqS:orthogonal-order}\\
 |\Sp(2m,2)|
 &=2^{m^2}\prod_{j=1}^{m}(2^{2j}-1).
 \label{eqS:symplectic-order}
\end{align}
The orthogonal-order formula is for $m\geq1$; at the zero-dimensional plus
endpoint we use $|O^+(0,2)|=1$, while the zero-dimensional minus tower is
absent.
The order of Eq.~\eqref{eqS:stochastic-group} is
\begin{equation}
 |O_k^{\rm st}(2)|=
 \begin{cases}
 |O^\delta(k-1,2)|,&k\text{ odd},\\
 2|\Sp(k-2,2)|,&k\equiv2\pmod4,\\
 2^{k-2}|O^\delta(k-2,2)|,&4\mid k,
 \end{cases}
 \label{eqS:stochastic-order}
\end{equation}
where
\begin{equation}
 \delta=(-1)^{(k^2-1)/8}\quad(k\text{ odd}),
 \qquad
 \delta=(-1)^{k/4}\quad(4\mid k).
 \label{eqS:type}
\end{equation}
Here $O^\delta(2m,2)$ preserves a nondegenerate binary quadratic form:
its maximal isotropic dimension is $m$ for $\delta=+$ and $m-1$ for
$\delta=-$.  The group $\Sp(2m,2)$ preserves a nondegenerate alternating
bilinear form.  A Witt basis separates isotropic pairs from the residual
quadratic space.  Separating the Wu vector in such a basis gives the
stochastic group orders above; they agree
with the stochastic quadratic-space classification in
Ref.~\cite{MontealegreGross2025}.

We use the binary Gaussian binomial coefficient
\begin{equation}
 \gbinom{a}{b}{2}:=
 \prod_{j=0}^{b-1}\frac{2^{a-j}-1}{2^{b-j}-1},
 \qquad 0\leq b\leq a,
 \label{eq:gaussian-binomial}
\end{equation}
which counts the $b$-dimensional subspaces of $\F_2^a$.
We set $\gbinom{a}{0}{2}=1$ and $\gbinom{a}{b}{2}=0$ when
$b<0$ or $b>a$, for nonnegative integer $a$.

\section{Physical code sectors}
\label{sec:filtration}
The relation formula specifies the moment operator; the next task is to
decompose the physical stabilizer-tensor span into sectors on which it
acts by a scalar.  In the stable range $n\geq k$, the CSS code filtration
identifies the ordinary and exceptional sectors and proves that they
exhaust this span.  This reduces the spectral calculation to the scalars
and sector dimensions evaluated in the following three sections.
Section~\ref{sec:finite} will extend the resulting spectral formulas
to every physical dimension.

\subsection{CSS codes and the defect filtration}
Let
\begin{equation}
 A_k(D):=\sum_{T\in\Sigma_{k,k}(2)}R_n(T).
 \label{eqS:A}
\end{equation}
Throughout, $T\mapsto R_n(T)$ acts on the physical space
$(\mathbb C^D)^{\otimes k}$; no faithful formal relation representation is
assumed.  Equations~\eqref{eqS:gram-frame}, \eqref{eqS:moment},
\eqref{eqS:moment-GNW}, and~\eqref{eqS:A} give the normalization ledger
\begin{equation*}
 \begin{aligned}
 H_{n,k}&=\Phi_{n,k}^{\dagger}\Phi_{n,k},&
 S_{n,k}&=\Phi_{n,k}\Phi_{n,k}^{\dagger},\\
 S_{n,k}&=N_n\rho_{n,k}
          =\frac{N_n}{Z_{n,k}}A_k(D),&
 H_{n,k}&\sim_{\rm nz}S_{n,k},
 \end{aligned}
\end{equation*}
where $\sim_{\rm nz}$ means equality of nonzero spectra, including
multiplicities.  Accordingly, an eigenvalue $\theta$ of $A_k(D)$ corresponds
to the Gram eigenvalue $\lambda=(N_n/Z_{n,k})\theta$.
For a totally $q_k$-isotropic $r$-space $N$ and a linear map
$f:\F_2^n\to Q_N$, choose a linear lift
$F:\F_2^n\to N^\perp$ and define the normalized CSS coset vector
\begin{equation}
 |N,[f]\rangle:=
 D^{-r/2}\sum_{Z\in\operatorname{Hom}(\F_2^n,N)}|F+Z\rangle .
 \label{eqS:CSS-coset}
\end{equation}
Here a binary $k\times n$ matrix labels a computational-basis state; each
column records the $k$ copy coordinates at one physical qubit.  The sum is
independent of the lift.  These uniform coset vectors form an orthonormal
basis of the Calderbank--Shor--Steane (CSS) stabilizer code $C_N$; denote its
projector by $P_N^{\rm CSS}$.  The
Clifford code filtration of Ref.~\cite{MontealegreGross2025} is indexed by
these subspaces.  Let
$\cG_{k,r}$ (respectively $\cG^0_{k,r}$) be the set of $r$-dimensional
defects not containing (respectively containing) $\one_k$, and let $C_N$ be
the code in Eq.~\eqref{eqS:CSS-coset}.  Put
\begin{equation}
 \begin{aligned}
 \mathcal F'_r&:=\operatorname{span}\{C_N:N\in\cG_{k,r}\}\\
 &=\operatorname{span}\{C_N:\dim N\geq r,\ \one\notin N\}.
 \end{aligned}
 \label{eqS:raw-filtration}
\end{equation}
The equality follows from $C_{N'}\subset C_N$ for $N\subset N'$ and the
ordinary-defect orbit and extension result, Lemma III.4 of
Ref.~\cite{MontealegreGross2025}.  If $4\nmid k$, set
$\mathcal F_r^{\Ord}=\mathcal F'_r$.  If $4\mid k$, let
$P_{\one}:=P_{\langle\one_k\rangle}^{\rm CSS}$ and set
\begin{equation}
 \mathcal F_r^{\Ord}:=(I-P_{\one})\mathcal F'_r,
 \qquad
 \mathcal F_r^{\Exc}:=
 \operatorname{span}\{C_N:N\in\cG^0_{k,r}\}.
 \label{eqS:branch-filtrations}
\end{equation}
For each ordinary $N$, the code projectors satisfy
$P_{\one}P_N^{\rm CSS}=P_{N+\langle\one_k\rangle}^{\rm CSS}$.
Every exceptional $(r+1)$-defect arises in this way, so
$P_{\one}\mathcal F'_r=\mathcal F_{r+1}^{\Exc}$.  Thus these are nested
filtrations.  The index shift is geometric: adjoining the quotient Wu line
to an ordinary $r$-defect $N$ gives the exceptional defect
$N+\langle\one_k\rangle$ of dimension $r+1$, while quotienting an
exceptional defect by its all-ones line lowers its dimension by one.  Thus
shortening will send ordinary grade $r-1$ at $k-1$ to exceptional grade $r$
at $k$.  To separate successive defect dimensions, we take the part of
each filtration space orthogonal to the next, deeper space.  These
subspaces are the associated grades:
\begin{equation}
 \mathcal K_{k,r}:=
 \mathcal F_r^{\Ord}\cap(\mathcal F_{r+1}^{\Ord})^\perp,
 \qquad
 \mathcal L_{k,r}:=
 \mathcal F_r^{\Exc}\cap(\mathcal F_{r+1}^{\Exc})^\perp.
 \label{eqS:associated-grade}
\end{equation}
For $N\in\cG_{k,r}$ define the local grade
$\mathcal K_N:=C_N\cap\mathcal K_{k,r}$, and define
$\mathcal L_N:=C_N\cap\mathcal L_{k,r}$ on the exceptional branch.  In the
stable range $n\geq k$ the code-filtration theorem gives the direct sums
\begin{equation}
 \mathcal K_{k,r}=\bigoplus_{N\in\cG_{k,r}}\mathcal K_N,
 \qquad
 \mathcal L_{k,r}=\bigoplus_{N\in\cG^0_{k,r}}\mathcal L_N.
 \label{eqS:local-grades}
\end{equation}
In this section's sector decomposition, a \emph{physical Clifford module}
means an actual Clifford-invariant subspace of
$\operatorname{span}\{|S\rangle^{\otimes k}:S\in\Stab_n\}$ inside the
physical tensor-power Hilbert space, not an abstract multiplicity space in a
formal relation representation.  Intermediate symbols $\eta_{Q,L}(\tau)$
denote represented local Clifford submodules of a code $C_N$; the proposition
below identifies precisely which induced fixed copies are physical in the
stated global sense and proves that they exhaust the stabilizer-tensor span.

\subsection{From filtration grades to the physical spectrum}
The invariant part of a code generally contains contributions from several
defect grades, so its dimension is not yet a sector multiplicity.  The next recursion
removes contributions from deeper defects by comparing two quotient
codes.  Its index $L$ labels a defect inside $Q_N$; the original defect
in $E_k$ remains $N$.

\begin{lemma}[Invariant orbit-tail recursion]
\label{lemS:orbit-tail}
Assume $n\geq k$, fix an ordinary $N\in\cG_{k,r}$, and put
$Q=Q_N$, $s=\dim Q=k-2r$, and
$a_s(D)=\dim C_N^{G_N}$.  Let $\eta_{Q,L}(\bm1)$ denote the represented local
Clifford module paired with the trivial stochastic type at a defect orbit
$[L]$ of $Q$.  When an ordinary isotropic line $\ell<Q$ exists, define
$a_{s-2}(D)$ as the invariant-code dimension for
$Q'=\ell^\perp/\ell$; this stays in the same quadratic tower by
Proposition II.13 of Ref.~\cite{MontealegreGross2025}.  Beyond a terminal
Witt index set $a_{s-2}(D)=0$.  Explicitly, the formal terminal values
are $a_{-1}=0$ for the odd plus tower, $a_1=0$ for the odd minus tower,
$a_0=0$ for $k\equiv2\pmod4$, and $a_0=0$ or $a_2=0$ for the
$4\mid k$ plus or minus tower, respectively.  These are
zero-by-nonexistence conventions used to close the recursion; they are not
dimensions assigned to existing invariant codes.  Then
\begin{equation}
 a_s(D)=
 \dim\eta_{Q,0}(\bm1)
 +\bm1_{\,4\mid k}\dim\eta_{Q,\langle[\one_k]_N\rangle}(\bm1)
 +a_{s-2}(D).
 \label{eqS:orbit-tail-recursion}
\end{equation}
The \emph{heads} are the modules indexed by the zero defect and, only
when $4\mid k$, the quotient all-ones line.  All remaining quotient-defect
modules form the \emph{tail}, of dimension $a_{s-2}$.  Thus the difference
$a_s-a_{s-2}$ retains exactly these heads.  Here $\bm1_{\,4\mid k}$ is the
corresponding zero--one indicator.
\end{lemma}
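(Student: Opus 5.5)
We prove the recursion by decomposing the $G_N$-invariant part of the code $C_N$ along the quotient-defect filtration of $Q=Q_N$. Then we identify everything below the two heads with the invariant code of the shortened quotient $Q'=\ell^\perp/\ell$. First we identify $C_N$, via the coset basis of Eq.~\eqref{eqS:CSS-coset}, with functions on $\operatorname{Hom}(\F_2^n,Q)$. Under this identification the Clifford action on $C_N$ and the action of $G_N=\St(Q)$ on $Q$ commute. Consequently $C_N^{G_N}$ is again a Clifford module. In the stable range $n\geq k\geq s$, the code-filtration theorem of Ref.~\cite{MontealegreGross2025}, applied to $Q$ in place of $E_k$, decomposes $C_N$ as
\[
C_N=\bigoplus_{[L]}\bigoplus_{\tau}\eta_{Q,L}(\tau)\otimes V_{[L],\tau}.
\]
Here $[L]$ runs over $G_N$-orbits of isotropic defects $L<Q$, $\tau$ runs over irreducible representations of the stabilizer of $L$ (the stochastic types), and $V_{[L],\tau}$ are the multiplicity spaces carrying the $G_N$ action, induced from the stabilizer of $L$. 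Taking $G_N$-invariants and applying Frobenius reciprocity leaves exactly one copy of $\eta_{Q,L}(\bm1)$ for each orbit $[L]$. Hence
\[
a_s(D)=\sum_{[L]}\dim\eta_{Q,L}(\bm1).
\]

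Second, we classify the orbits. By Witt's extension theorem for the stochastic quadratic space, an isotropic $L<Q$ is determined up to $G_N$ by two data: $\dim L$, and whether $L$ contains the image $[\one_k]_N$ of the Wu vector. The latter is possible only when $q$ vanishes on it, that is, when $4\mid k$. The zero defect and, when $4\mid k$, the line $\langle[\one_k]_N\rangle$ are the heads. Every remaining orbit contains an ordinary isotropic line $\ell$. By the extension lemma (Lemma III.4 of Ref.~\cite{MontealegreGross2025}), $\ell$ can be chosen in a fixed orbit. The map $L\mapsto L/\ell$ then sets up a bijection between the tail orbits of $Q$ and all orbits of defects in $Q'=\ell^\perp/\ell$, including the zero defect and the Wu line of $Q'$. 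By Proposition II.13 of Ref.~\cite{MontealegreGross2025}, $Q'$ lies in the same quadratic tower, so the head/tail pattern is preserved under this bijection.

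Third, and this is the main obstacle, we must show that $\dim\eta_{Q,L}(\bm1)=\dim\eta_{Q',L/\ell}(\bm1)$ for each tail orbit. In other words, the trivial-type local module attached to $L$ depends only on the quotient geometry $L^\perp/L$, which is the same whether it is computed in $Q$ or in $Q'$. My first attempt would use the nesting $C_{N+\ell}\subset C_N$, where $N+\ell$ is the preimage in $E_k$, which is again ordinary. The module $\eta_{Q,L}$ is realized inside $C_{N+\ell}$, and $C_{N+\ell}$ is the code of $Q'$. So it suffices to check two compatibilities with the projection $N^\perp/N\to(N+\ell)^\perp/(N+\ell)$: the decomposition of $C_{N+\ell}$ must match the restriction of the decomposition of $C_N$, and the stabilizers of $L$ in $G_N$ and of $L/\ell$ in $G_{N+\ell}$ must have the same image on $L^\perp/L$. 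Their trivial-type parts then coincide. Granting this, summing over tail orbits gives $\sum_{[L'] \text{ in } Q'}\dim\eta_{Q',L'}(\bm1)=a_{s-2}(D)$ by the first step applied to $Q'$, which yields Eq.~\eqref{eqS:orbit-tail-recursion}.

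Finally, at a terminal Witt index no ordinary isotropic line exists. The tail is then empty, consistent with the convention $a_{s-2}=0$. We check the listed endpoints case by case: the odd towers, $k\equiv2\pmod4$, and the $4\mid k$ plus and minus towers. In each case the only surviving orbits are the heads.
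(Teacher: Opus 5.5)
Your first two steps match the paper's proof. These are the decomposition $C_N^{G_N}\simeq\bigoplus_{[L]}\eta_{Q,L}(\bm1)$, obtained from the code-filtration theorem applied inside $Q$ together with Frobenius reciprocity, and the orbit bijection, obtained by fixing one ordinary isotropic line $\ell$ and using Lemma III.4 of Ref.~\cite{MontealegreGross2025}. The gap is your third step. You explicitly grant it, and it is the step that carries the substance of the lemma.

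Your compatibility (a) is not a routine check. Put $N_\ell=\pi_N^{-1}(\ell)$ and $N_L=\pi_N^{-1}(L)$ for a tail representative $L\supseteq\ell$, so that $C_{N_L}\subset C_{N_\ell}\subset C_N$. The two pictures define the grade at $L$ differently:
\begin{itemize}
\item Computed in $C_N$ with respect to $Q$, it is the part of $C_{N_L}$ orthogonal to the codes of all defects of $Q$ of dimension larger than $\dim L$.
\item Computed in $C_{N_\ell}$ with respect to $Q'$, it is the part of $C_{N_L}$ orthogonal only to those deeper codes whose defects contain $\ell$.
\end{itemize}
The first span is a priori larger: it also contains codes of defects not containing $\ell$, and these do not lie in $C_{N_\ell}$. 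So the definitions give only an inclusion of the $Q$-grade into the $Q'$-grade. When $4\mid k$, the ordinary filtrations are moreover cut down by all-ones projections that differ between the two pictures. Equality of the two trivial-type modules, and hence $\dim\eta_{Q,L}(\bm1)=\dim\eta_{Q',L/\ell}(\bm1)$, does not follow from anything you have stated.

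What is missing is an intrinsic description of a local grade that makes no reference to the ambient filtration. The paper supplies it through Lemma V.10 of Ref.~\cite{MontealegreGross2025}. That lemma says the local grade at a defect is the maximal-Clifford-rank subspace $C_{N_L}\cap\Delta^{(\dim(N_L^\perp/N_L))}$ of the single physical code $C_{N_L}$. Here rank means the rank of the weight form $f^*b$ seen by the diagonal Clifford subgroup. One then checks two things:
\begin{itemize}
\item The direct embedding of $C_{N_L}$ through $N$ and the iterated embedding through $N_\ell$ give the same physical subspace; both are uniform sums over $\operatorname{Hom}(\F_2^n,N_L)$ in the coset basis~\eqref{eqS:CSS-coset}.
\item $N_L^\perp/N_L\simeq L^\perp/L\simeq(L/\ell)^\perp/(L/\ell)$.
\end{itemize}
Hence both descriptions single out the same subspace with the same residual stochastic group. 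The trivial residual isotype (Theorem V.1, placed by Theorem V.14) then gives $\eta_{Q,L}(\bm1)\simeq\eta_{Q',L/\ell}(\bm1)$. Your compatibility (b) reduces to surjectivity of the parabolic onto that residual group in both pictures, which is Lemma III.11.

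With this inserted, your summation over tail orbits goes through. The terminal case needs no case-by-case check. Every ordinary defect of positive dimension, and every exceptional defect of dimension at least two, contains an ordinary line. So the absence of such a line already empties the tail.
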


\noindent\emph{Proof:} See Appendix~\ref{app:proof-orbit-tail}.\par

\begin{proposition}[Filtration, physical sectors, and exhaustion]
\label{propS:filtration-bridge}
Let $G=O_k^{\rm st}(2)$ and assume $n\geq k$.  For an ordinary
$N\in\cG_{k,r}$, let $P_N=\operatorname{Stab}_G(N)$ and inflate
$\tau\in\operatorname{Irr}G_N$ along the quotient map $P_N\to G_N$.
Here $\operatorname{Irr}$ denotes irreducible representations,
$\operatorname{Ind}$ below denotes induction from $P_N$ to $G$, and a
superscript $G$ denotes the subspace fixed by every element of $G$.
Then there are pairwise inequivalent irreducible Clifford modules
$\eta^{\Ord}_{k,r}(\tau)$ such that
\begin{equation}
 \mathcal K_{k,r}\simeq
 \bigoplus_{\tau\in\operatorname{Irr}G_N}
 \operatorname{Ind}_{P_N}^{G}(\tau)\otimes
 \eta^{\Ord}_{k,r}(\tau).
 \label{eqS:induced-grade}
\end{equation}
The same statement holds for $\mathcal L_{k,r}$ with an exceptional
representative and modules $\eta^{\Exc}_{k,r}(\tau)$.  Consequently
\begin{equation}
 \mathcal K_{k,r}^{G}\simeq\eta^{\Ord}_{k,r}(\bm1),\qquad
 \mathcal L_{k,r}^{G}\simeq\eta^{\Exc}_{k,r}(\bm1).
 \label{eqS:fixed-grade}
\end{equation}
These are single physical Clifford modules, not multiplicity spaces in an
abstract relation representation.

Put $s=k-2r$ and
$a_s(D):=\dim C_N^{G_N}$, using the terminal-tower convention of
Lemma~\ref{lemS:orbit-tail}.  If $4\nmid k$, then
\begin{equation}
 \dim\mathcal K_{k,r}^{G}=a_s(D)-a_{s-2}(D).
 \label{eqS:adjacent-difference}
\end{equation}
If $4\mid k$, the adjacent drop splits without overlap as
\begin{equation}
 a_s(D)-a_{s-2}(D)
 =\dim\mathcal K_{k,r}^{G}
  +\dim\mathcal L_{k,r+1}^{G}.
 \label{eqS:ordinary-exceptional-split}
\end{equation}
Finally, the stabilizer-tensor span is the orthogonal, multiplicity-free sum
of all the modules in Eq.~\eqref{eqS:fixed-grade}.  On every other
$G$-type the relation sum $A_k(D)$ is zero.
\end{proposition}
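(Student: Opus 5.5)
I will organize the argument around the Clifford tensor-power duality of Ref.~\cite{MontealegreGross2025}. The plan is to pass from each grade to an induced representation of $G=O_k^{\rm st}(2)$ tensored with Clifford modules, and then take $G$-invariants.

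\emph{Induced structure of a grade.} Fix an ordinary $N\in\cG_{k,r}$. By Eq.~\eqref{eqS:local-grades}, $\mathcal K_{k,r}=\bigoplus_{N'\in\cG_{k,r}}\mathcal K_{N'}$. By the ordinary-defect orbit result (Lemma III.4 of Ref.~\cite{MontealegreGross2025}), $G$ acts transitively on $\cG_{k,r}$ and permutes the summands; $G$ acts on $(\mathbb C^D)^{\otimes k}$ through the relation operators $R_n(\mathrm{graph}\,O)$, which commute with the diagonal Clifford action. Hence $\mathcal K_{k,r}\simeq\operatorname{Ind}_{P_N}^G\mathcal K_N$ as a $G\times\mathrm{Cliff}$ module. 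On $\mathcal K_N\subset C_N$ the unipotent radical of $P_N$ (maps acting trivially on $N$ and on $Q_N$) acts trivially. I will check this on the coset vectors \eqref{eqS:CSS-coset}: such maps send $F+Z$ to $F+Z'$ with $Z'-Z\in\operatorname{Hom}(\F_2^n,N)$, and this is absorbed by the uniform sum. So $P_N$ acts through $G_N$. Decomposing $\mathcal K_N=\bigoplus_\tau\tau\otimes\eta^{\Ord}_{k,r}(\tau)$ under $G_N\times\mathrm{Cliff}$ then gives Eq.~\eqref{eqS:induced-grade}. Irreducibility and pairwise inequivalence of the $\eta(\tau)$ come from the duality on $C_N\cong(\mathbb C^{D})^{\otimes s}$-type quotient codes: $G_N$ and the Clifford group generate mutual commutants on the top grade, after quotienting deeper defects. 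This is the step I expect to be hardest. The plan is to invoke the code-filtration theorem for the commutant of the Clifford action on $C_N$ in the stable range $n\geq k$, and to verify that passing to the orthogonal complement of $\mathcal F_{r+1}$ removes exactly the relation operators with larger defect. The exceptional branch is treated identically, with an exceptional representative.

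\emph{Fixed parts and dimensions.} Frobenius reciprocity gives $(\operatorname{Ind}_{P_N}^G\tau)^G\simeq\tau^{P_N}$, which is one-dimensional iff $\tau=\bm1$ and zero otherwise. This yields Eq.~\eqref{eqS:fixed-grade}. For the dimensions, I use $C_N^{G_N}=\bigoplus_{[L]}\eta_{Q,L}(\bm1)$ over quotient defect orbits and apply Lemma~\ref{lemS:orbit-tail}. The head $\eta_{Q,0}(\bm1)$ is identified with $\eta^{\Ord}_{k,r}(\bm1)=\mathcal K^G_{k,r}$. When $4\mid k$, the second head, at $L=\langle[\one]_N\rangle$, corresponds to the defect $N+\langle\one\rangle\in\cG^0_{k,r+1}$, which by $P_{\one}P_N^{\rm CSS}=P^{\rm CSS}_{N+\langle\one\rangle}$ is $\mathcal L^G_{k,r+1}$. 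Subtracting the tail $a_{s-2}$ gives Eqs.~\eqref{eqS:adjacent-difference} and~\eqref{eqS:ordinary-exceptional-split}. The two heads are orthogonal because they lie in the distinct grades $(I-P_{\one})\mathcal F'_r$ and $P_{\one}\mathcal F'_r$.

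\emph{Exhaustion and vanishing.} The vectors $|S\rangle^{\otimes k}$ are fixed by the diagonal $G$ action, since $R_n(\mathrm{graph}\,O)$ fixes stabilizer tensor powers; this is the defining invariance in Ref.~\cite{GrossNezamiWalter2021}. Hence the span lies in $\bigl((\mathbb C^D)^{\otimes k}\bigr)^G$. Conversely, $A_k(D)$ is a positive multiple of $\rho_{n,k}$, whose range is exactly this span. Each relation operator factors through codes $C_{N_L}$, so $A_k(D)$ is supported on $\bigoplus_r(\mathcal K_{k,r}\oplus\mathcal L_{k,r})$. The filtrations exhaust the code space $\mathcal F'_0$ (here $N=0$ gives the whole space). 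Every $G$-type other than the trivial one is annihilated, because $A_k(D)$ absorbs left multiplication by group relations and so is invariant: $R_n(O)A_k=A_k$. Therefore $\operatorname{ran}A_k\subseteq\bigoplus_r(\mathcal K^G_{k,r}\oplus\mathcal L^G_{k,r})$.

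Equality holds because each $\eta(\bm1)$ is an irreducible Clifford module. It is then either in the range or orthogonal to it. The second option is excluded since the $G$-invariant projection of a stabilizer tensor power in the CSS state $|N,[f]\rangle^{\otimes}$ family has nonzero component in the top grade of $C_N$. Multiplicity-freeness is inherited from the pairwise inequivalence established in the first step.
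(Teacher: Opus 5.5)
Your first two steps follow the paper's route. The induced decomposition and the within-grade inequivalence come from the code-filtration theorem of Ref.~\cite{MontealegreGross2025}; the paper cites Theorems V.14 and V.6 there. The fixed parts then come from Frobenius reciprocity, and the adjacent differences from the orbit-tail recursion with the two heads identified. Your head identification also implicitly needs Lemma V.10 there, which identifies each local grade with the maximal-rank part of $C_N$; the paper invokes it explicitly.

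There is one small slip. The kernel of $P_N\to G_N$ is not the unipotent radical: it also contains the block $A\in\mathrm{GL}(N)$ (compare the factor $|\mathrm{GL}(r,2)|$ in Lemma~\ref{lemS:parabolic}). So ``the unipotent radical acts trivially'' does not yet give ``$P_N$ acts through $G_N$.'' Your coset computation repairs this, because it uses only $gN=N$ and $\bar g=\id_{Q_N}$. Then $gF-F\in\operatorname{Hom}(\F_2^n,N)$ and $g\operatorname{Hom}(\F_2^n,N)=\operatorname{Hom}(\F_2^n,N)$.

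The genuine gap is in the exhaustion step. The dichotomy ``an irreducible $\eta(\bm1)$ is either in $\ran A_k(D)$ or orthogonal to it'' fails in general. Irreducibility only gives $V\subseteq\ran A_k(D)$ or $V\cap\ran A_k(D)=0$, and a Clifford-invariant subspace can sit diagonally across two isomorphic summands. The dichotomy does hold once $W=\bigoplus_r(\mathcal K^G_{k,r}\oplus\mathcal L^G_{k,r})$ is known to be multiplicity-free, but your first step does not supply that. It separates the modules $\eta_{k,r}(\tau)$ for different $\tau$ at one fixed grade. It says nothing about $\eta^{\Ord}_{k,r}(\bm1)$ versus $\eta^{\Ord}_{k,r'}(\bm1)$, or versus the exceptional modules. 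So the multiplicity-freeness asserted in the proposition is exactly what remains unproved, and your exhaustion argument depends on it. The non-orthogonality claim (``nonzero component in the top grade'') is also only asserted. The paper obtains both from Lemma V.18 of Ref.~\cite{MontealegreGross2025}.

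If you want a self-contained replacement, both points are short.
\begin{enumerate}
\item \emph{Nonvanishing and non-orthogonality.} Take a maximal-rank CSS vector $\psi$ in a local grade. Its amplitudes are nonnegative and every $R_n(T)$ is a $0/1$ matrix. Hence $\langle\psi|\Pi_k|\psi\rangle\geq|G|^{-1}$ and $\langle\psi|A_k(D)|\psi\rangle\geq\langle\psi|R_n(\Delta)|\psi\rangle=1$, where $\Pi_k$ is the $G$-averaging projector. So $\Pi_k\psi$ is a nonzero vector of that fixed grade lying outside $\ker A_k(D)=(\ran A_k(D))^\perp$.
\item \emph{Multiplicity-freeness.} In the stable range the $R_n(T)$ are linearly independent and span the Clifford commutant. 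Hence $\dim\operatorname{End}_{\rm Cliff}(W)=|G\backslash\Sigma_{k,k}(2)/G|$. Transitivity on defects of each type and surjectivity of $P_N\to G_N$ leave one double coset per defect orbit. By the first point, this equals the number of nonzero irreducible summands of $W$. Then $\sum_\pi m_\pi^2=\sum_\pi m_\pi$ forces every multiplicity $m_\pi\leq1$.
\end{enumerate}
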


\noindent\emph{Proof:} See Appendix~\ref{app:proof-filtration-bridge}.\par

For $n\geq k$, this proposition completes the reduction from code spaces
to scalar spectral sectors.  The fixed parts are fixed by the stochastic
group $G$, while their irreducibility refers to the Clifford action.
The relation sum $A_k(D)$ commutes with that action and vanishes outside
the stabilizer-tensor span.  Since these fixed parts exhaust this span as
pairwise inequivalent irreducible Clifford modules, Schur's lemma makes
$A_k(D)$ scalar on each of them.  We now compute those scalars.

\section{Ordinary eigenvalues}
\label{sec:ordinary}
With the physical sectors established in Section~\ref{sec:filtration},
we compute the scalar on each ordinary sector in the stable range
$n\geq k$.  Mixed-signature moments and physical CSS compression,
together with affine overlaps and weighted flag counts, give the master
scalar identity~\eqref{eqS:spherical-ratio}, derived in
Appendix~\ref{app:proof-flags}.  Orbit--stabilizer counting
and the parabolic kernel then reduce its factors to the product in
Eq.~\eqref{eqS:theta}.  Rescaling by $N_n/Z_{n,k}$ gives the ordinary
Gram eigenvalues in Eq.~\eqref{eqS:lambda-finite}; the exceptional branch
is treated next.

\subsection{Mixed-signature moments}
Under the CSS code isometry, the Clifford action restricted to $C_N$
becomes a tensor representation associated with $Q_N$.  This
representation can include complex-conjugate factors, so the ordinary
$s$-copy moment does not cover every quotient code.  The following
mixed-signature identity supplies the moment needed for this compression;
the two signs distinguish ordinary and complex-conjugate factors.

\begin{lemma}[Mixed-signature stabilizer moment]
\label{lemS:mixed-moment}
Let $V=A\oplus B\simeq\F_2^t$ carry the signed Hamming refinement
\begin{equation}
 q_{a,b}(x_A,x_B)=q_a(x_A)-q_b(x_B)\pmod4,
 \qquad a+b=t,
 \end{equation}
and let $w=\one_t$ be its Wu vector.  Define
\begin{equation}
 \begin{aligned}
 \Sigma(q_{a,b},q_{a,b}):=\{T\leq V\oplus V:\;&\dim T=t,\\
 (q_{a,b}\oplus-q_{a,b})|_T=0,\;&(w,w)\in T\},
 \end{aligned}
\end{equation}
and, for $T$ in this set,
\begin{equation}
 r_{q_{a,b}}(T):=\sum_{(x,y)\in T}|x\rangle\!\langle y|,
 \qquad R_{n,q_{a,b}}(T):=r_{q_{a,b}}(T)^{\otimes n}.
\end{equation}
In the mixed tensor representation
$\Delta_{a,b}(U)=U^{\otimes a}\otimes\overline U^{\otimes b}$, put
$|S;q_{a,b}\rangle:=|S\rangle^{\otimes a}\otimes
|\overline S\rangle^{\otimes b}
=\Delta_{a,b}(U_S)|0^n\rangle^{\otimes t}$ whenever
$U_S|0^n\rangle$ represents the projective stabilizer state $S$.  Then the
rank-one projector is independent of $U_S$ and, for every $n,t\geq1$,
\begin{equation}
 \begin{aligned}
 &\frac1{N_n}\sum_{S\in\Stab_n}|S;q_{a,b}\rangle\!\langle S;q_{a,b}|\\
 &\qquad=\frac1{Z_{n,t}}
   \sum_{T\in\Sigma(q_{a,b},q_{a,b})}R_{n,q_{a,b}}(T),\\
 &Z_{n,t}=D\prod_{j=0}^{t-2}(D+2^j).
 \end{aligned}
 \label{eqS:mixed-moment-lemma}
\end{equation}
Moreover,
\begin{equation}
 \begin{gathered}
 \bigl|\{T:\dim(T\cap\Delta_V)=t-\ell\}\bigr|
 =2^{\binom\ell2}\gbinom{t-1}{\ell}{2},\\
 0\leq\ell\leq t-1.
 \end{gathered}
 \label{eqS:mixed-diagonal-count}
\end{equation}
The same statements hold after transporting the quadratic space by an
isometry.
\end{lemma}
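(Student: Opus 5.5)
Our plan is to transport the mixed-signature moment to the ordinary $t$-copy moment of Eq.~\eqref{eqS:moment-GNW} through a partial transpose. Complex conjugation of a stabilizer state is again a stabilizer state; in the computational basis, $|\overline S\rangle$ has the conjugated amplitudes. The map $S\mapsto\overline S$ is a bijection of $\Stab_n$, but we will not use it directly. Instead we write
\[
|\overline S\rangle\!\langle\overline S|=(|S\rangle\!\langle S|)^{\mathsf T}.
\]
Hence the left side of Eq.~\eqref{eqS:mixed-moment-lemma} is the partial transpose, on the last $b$ tensor factors, of the ordinary moment $\rho_{n,t}$. First we check that the rank-one projector does not depend on $U_S$: changing $U_S$ by a stabilizer of $|0^n\rangle$ multiplies $|S\rangle$ by a phase $e^{i\varphi}$, which contributes $e^{i(a-b)\varphi}$ to the vector and cancels in the projector.

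Next we apply the partial transpose term by term to Eq.~\eqref{eqS:moment-GNW}. We regroup into $n$ copies of $(\mathbb C^2)^{\otimes t}$ and split $x=(x_A,x_B)$, $y=(y_A,y_B)$. Since $(|x_B\rangle\!\langle y_B|)^{\mathsf T}=|y_B\rangle\!\langle x_B|$, the transposed $r(T)$ equals $r(T')$, where $T'=\{((x_A,y_B),(y_A,x_B)):(x,y)\in T\}$ swaps the $B$-coordinates between the two sides. The transposition factorizes over the $n$ physical qubits, so $R_n(T)$ becomes $R_{n,q_{a,b}}(T')$.

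We then verify that $T\mapsto T'$ is a bijection from $\Sigma_{t,t}(2)$ onto $\Sigma(q_{a,b},q_{a,b})$. It is linear and involutive and preserves dimension $t$. The form $q_t(x)-q_t(y)=q_a(x_A)+q_b(x_B)-q_a(y_A)-q_b(y_B)$ is unchanged by the swap, and this expression is exactly $q_{a,b}(x_A,y_B)-q_{a,b}(y_A,x_B)$. The vector $(\one,\one)$ is fixed by the swap. Because the normalization $Z_{n,t}$ is untouched, Eq.~\eqref{eqS:mixed-moment-lemma} follows. The Wu vector of $q_{a,b}$ is still $\one_t$, since $-q_b$ has polar form $-2b\equiv 2b\pmod 4$.

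For the count in Eq.~\eqref{eqS:mixed-diagonal-count}, we need to understand the diagonal intersection $\dim(T'\cap\Delta_V)$. This is the main obstacle, because the swap does not preserve $\Delta$. We will handle it with isometry transport rather than the swap. The signed space $(V,q_{a,b})$ is stochastically isometric to $(E_t,q_t)$ via the map fixing $A$ and replacing $x_B$ by an isometry of $(\F_2^b,-q_b)\to(\F_2^b,q_b)$ that fixes $\one_b$. Such an isometry exists because $-q_b\equiv q_b+2(\cdot)$ adjusts by a linear form; concretely, $-q_1(x)=3x\equiv x+2x$. We will prove the existence by the Montealegre–Gross classification cited at Eq.~\eqref{eqS:stochastic-order}, matching the Arf/type invariants, or else by an explicit construction.

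An isometry $\phi$ sends $\Sigma(q,q)$ to $\Sigma(q',q')$ via $T\mapsto(\phi\times\phi)T$ and preserves $\Delta$ and $R_n$ up to permutation conjugation. This settles the final sentence of the statement and reduces the count to the standard case. There, $\dim(T\cap\Delta)$ is $\dim\Fix$ for the defect-zero graphs of $O_t^{\rm st}$ and is generalized for degenerate $T$. We then obtain the count from the trace identity in Eq.~\eqref{eqS:relation-trace}. Taking traces of Eq.~\eqref{eqS:moment-GNW} gives
\[
\sum_T D^{\dim(T\cap\Delta)}=Z_{n,t}\quad\text{for all }D=2^n .
\]
Expanding
\[
Z_{n,t}=D\prod_{j=0}^{t-2}(D+2^j)=\sum_\ell 2^{\binom\ell2}\gbinom{t-1}{\ell}{2}D^{t-\ell}
\]
by the $q$-binomial theorem, and noting that both sides are polynomials agreeing at infinitely many $D$, we can compare coefficients, which yields the count.
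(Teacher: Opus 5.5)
The first half of your argument matches the paper's proof: the partial transpose on the $B$ slots, the coordinate swap $T\mapsto T'$, and the check that it bijects $\Sigma_{t,t}(2)$ onto $\Sigma(q_{a,b},q_{a,b})$ and sends $R_n(T)$ to $R_{n,q_{a,b}}(T')$. There is a genuine gap in the diagonal count \eqref{eqS:mixed-diagonal-count}.

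\textbf{The obstacle is not real.} On the diagonal $x=y$ one has $x_B=y_B$. So the swap $((x_A,x_B),(y_A,y_B))\mapsto((x_A,y_B),(y_A,x_B))$ fixes $\Delta_V$ pointwise, and $\Delta_V=\Delta$ as subsets of $\F_2^t\oplus\F_2^t$. Since the swap is an involution, $T'\cap\Delta_V=T\cap\Delta$, and the ordinary count transfers verbatim. This is how the paper finishes: it transports the relative-diagonal count of Gross--Nezami--Walter.

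\textbf{The detour you substitute does not work.} No isometry $(V,q_{a,b})\to(E_t,q_t)$ exists unless $4\mid b$, with or without the requirement of fixing $\one$. The Gauss sum $\sum_x i^{q(x)}$ is an isometry invariant. It equals $(1+i)^a(1-i)^b$ for $q_{a,b}$ but $(1+i)^t$ for $q_t$, and the ratio is $(-i)^b$. Concretely, for $a=b=1$ the values of $q_{1,1}$ on $\F_2^2$ are $0,1,3,0$, while those of $q_2$ are $0,1,1,2$. Your identity $-q_b(x)=q_b(x)+2(\one_b\cdot x)$ only gives $-q_b(x)=q_b(x+\one_b)-q_b(\one_b)$, which is an affine translation, not a linear isometry. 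Matching invariants through the Montealegre--Gross classification would also fail, because the type of a signed Hamming form depends on $a-b\bmod 8$. That non-isometry is exactly why the paper needs a separate mixed-signature moment for the quotient codes.

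\textbf{How to repair it.} Your trace argument can be kept and run directly on the mixed identity you have already proved. The left side of \eqref{eqS:mixed-moment-lemma} has trace $1$. By the same computation as \eqref{eqS:relation-trace}, $\Tr R_{n,q_{a,b}}(T)=D^{\dim(T\cap\Delta_V)}$. Hence $\sum_{T}D^{\dim(T\cap\Delta_V)}=Z_{n,t}$ for every $n\geq1$. Comparing coefficients with the $q$-binomial expansion of $Z_{n,t}$ then gives \eqref{eqS:mixed-diagonal-count} with no isometry needed.

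Run this way, your route derives the count from the all-$n$ moment identity. The paper does the reverse: it cites the count and uses the trace only as an independent normalization check.
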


\noindent\emph{Proof:} See Appendix~\ref{app:proof-mixed-moment}.\par

\subsection{Affine CSS overlaps and weighted flags}
CSS codes associated with different defects can overlap.  To extract a
sector eigenvalue from a compressed matrix element, we need the weight
of a CSS vector's projection onto the stochastic fixed subspace.  The
following overlap calculation supplies that weight by group averaging.

\begin{lemma}[Affine CSS overlap and weighted defect flags]
\label{lemS:weighted-flags}
Fix an ordinary isotropic $r$-space $N$, set $G=O_k^{\rm st}(2)$, choose a
surjection $F:\F_2^n\to N^\perp$, and take $O\in G$.  Put
$M=ON$, $K=N\cap M$, $\ell=r-\dim K$, and let
$\kappa_N=|\ker(\operatorname{Stab}_G(N)\to G_N)|$.
Consider the CSS vectors supported on
$F+\operatorname{Hom}(\F_2^n,N)$ and
$OF+\operatorname{Hom}(\F_2^n,M)$, respectively, each with amplitude
$D^{-r/2}$ on its support.  The second vector is $R_n(O)$ applied to the
first.  Their supports meet if and only if
\begin{equation}
 (O-I)N^\perp\subseteq N+M,
 \label{eqS:weighted-overlap-condition}
\end{equation}
and the inner product of the two normalized vectors is then $D^{-\ell}$.

Condition~\eqref{eqS:weighted-overlap-condition} holds precisely when the
pairing $N/K\times M/K\to\F_2$ is nondegenerate and the map induced by $O$
on quotient spaces is the canonical stochastic isometry
\begin{equation}
 J_{M,N}:Q_N\longrightarrow Q_M,
 \qquad [x]_N\longmapsto[x+n_x]_M,
 \label{eqS:weighted-canonical-map}
\end{equation}
where $n_x\in N$ is the unique class modulo $K$ for which
$x+n_x\in M^\perp$.  For every such $M$---that is, every ordinary
$r$-defect with $N\cap M=K$ and nondegenerate pairing
$N/K\times M/K$---exactly $\kappa_N$ elements $O$ induce this map.
Finally, for a fixed $K<N$ of codimension $\ell$,
the number of such $M$ is
\begin{equation}
 2^{\ell(s-1)+\binom\ell2},
 \qquad s=k-2r.
 \label{eqS:weighted-flag-count}
\end{equation}
\end{lemma}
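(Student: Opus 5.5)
The proof has four parts: the support condition, the inner product, the identification of the inducing maps, and the count of defects $M$.

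\emph{Step 1: the support condition.} I would work column by column, since both supports are products over the $n$ physical qubits. A column of the first vector ranges over $F_j+N$, and a column of the second over $OF_j+M$. Because $F$ is surjective onto $N^\perp$, the columns $F_j$ span $N^\perp$. The supports therefore meet exactly when, for every $x\in N^\perp$ (extending by linearity),
\[
Ox-x\in N+M .
\]
This is condition~\eqref{eqS:weighted-overlap-condition}.

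\emph{Step 2: the inner product.} When the supports meet, their intersection is a coset of $\operatorname{Hom}(\F_2^n,N\cap M)=\operatorname{Hom}(\F_2^n,K)$. That coset has $2^{n\dim K}$ elements. Each vector has amplitude $D^{-r/2}$, so the inner product is
\[
D^{-r}\,D^{\dim K}=D^{-\ell}.
\]

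\emph{Step 3: identifying the induced map.}
\begin{itemize}
\item Apply $b(\cdot,m)$ for $m\in M$ to the condition $Ox-x\in N+M$. Since $O$ is an isometry and $Ox\in M^\perp$, this forces $x\cdot m\in N\cdot m$ for all $x\in N^\perp$.
\item Suppose $m\in M\cap N^{\perp\perp}$ is orthogonal to all of $N$. Then $m$ would be orthogonal to all of $N^\perp$ as well, so $m\in N$, i.e.\ $m\in K$. This gives nondegeneracy of $N/K\times M/K$.
\item With nondegeneracy in hand, each $x\in N^\perp$ has a unique $n_x\in N/K$ with $x+n_x\in M^\perp$.
\item The condition reads $Ox\equiv x+n_x \pmod M$, which says precisely that the induced map $Q_N\to Q_M$ equals $J_{M,N}$.
\item I would check that $J_{M,N}$ is well defined, isometric and stochastic. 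The check uses $q(n_x)=0$ together with isotropy of $N$ and $M$. For stochasticity, $\one\in N^\perp$, and one verifies that $n_{\one}$ can be absorbed.
\end{itemize}

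\emph{Step 4: the count $\kappa_N$.} Fix $M$. By Witt's extension theorem (the orbit and extension result, Lemma III.4 of Ref.~\cite{MontealegreGross2025}), some $O_0\in G$ carries $N$ to $M$ and induces $J_{M,N}$. The set of $O$ with $ON=M$ and induced map $J_{M,N}$ is then the coset $O_0\ker(P_N\to G_N)$, which has exactly $\kappa_N$ elements.

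\emph{Step 5: the flag count.} This is where I expect the main obstacle. Fix $K<N$ of codimension $\ell$ and pass to $K^\perp/K$, which has dimension $s+2\ell$. There $\bar N=N/K$ is an isotropic $\ell$-space, and I need to count isotropic $\ell$-spaces $\bar M$ that are
\begin{itemize}
\item paired nondegenerately with $\bar N$, hence transverse to $\bar N^\perp$, and
\item ordinary, i.e.\ not containing the image of $\one$ modulo $K$.
\end{itemize}
Complements of this type form a torsor under maps $\bar N^\perp/\bar N$-valued shears. I would count by choosing a fixed complement $\bar M_0$ in a Witt decomposition $\bar N\oplus\bar M_0\oplus Q$. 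Every other $\bar M$ is then the graph of a pair $(\alpha,\beta)$ with $\alpha:\bar M_0\to Q$ arbitrary and $\beta:\bar M_0\to\bar N$ determined up to the isotropy constraint. This gives $2^{\ell s}$ choices for $\alpha$ and $2^{\binom{\ell}{2}}$ for the alternating part of $\beta$, with the diagonal part fixed by the mod-4 quadratic condition.

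The remaining step, and the hardest, is to enforce stochasticity and ordinarity. The Wu-vector constraint should cut the count by a factor $2^{\ell}$, giving the stated $2^{\ell(s-1)+\binom\ell2}$. I would first try to show this by observing that $b(\bar M,\one)$ must match $q$ on $\bar M$, so that one linear condition per basis vector of $\bar M_0$ is imposed on $\alpha$, via the Wu property $b(\one,x)=b(x,x)$.
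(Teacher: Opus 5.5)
Your proposal is correct and follows the paper's proof essentially step for step. It uses the same columnwise support criterion and $\operatorname{Hom}(\F_2^n,K)$-coset overlap, the same kernel argument for nondegeneracy (reading $M\cap N^\perp$ for your $M\cap N^{\perp\perp}$, you show the right kernel $(M\cap N^\perp)/K$ vanishes, where the paper shows $N/K\to(M/K)^*$ is onto, which is the same fact for a square pairing), the same coset-of-kernel lift count, and the same normal form $m_i=v_i+x_i+\sum_j c_{ji}u_j$ with $x_i\perp\bar{\one}$.

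Two points to tighten:
\begin{enumerate}
\item In Step~5, $2^{\ell s+\binom\ell2}$ is not a valid count that the Wu condition then cuts down. The Wu condition is exactly the solvability condition of the diagonal equation $2c_{ii}\equiv-q(x_i)\pmod 4$. Ordinarity also needs no enforcing, because $\bar M\cap\bar N^\perp=0$ while $0\neq\bar{\one}\in\bar N^\perp$.
\item In Step~4, transitivity on defects only gives some $O_1$ with $O_1N=M$. As the paper does, you must compose with a lift through the surjection $\operatorname{Stab}_G(N)\to G_N$ to get an element inducing exactly $J_{M,N}$.
\end{enumerate}
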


\noindent\emph{Proof:} See Appendix~\ref{app:proof-weighted-flags}.\par

\subsection{The scalar from physical compression}
For the CSS vector specified above, compression through the mixed moment
gives a diagonal matrix element of $A_k(D)$.  The overlap average gives
its projection weight onto the stochastic fixed part of the ordinary
grade.  On that fixed part $A_k(D)$ is scalar, and it vanishes on the
other stochastic types.  Dividing the matrix element by the projection
weight therefore gives the scalar below, as derived in
Appendix~\ref{app:proof-flags}.

\begin{lemma}[Spherical CSS normalization]
\label{lemS:flags}
Fix an ordinary totally isotropic $r$-space $N$, put $s=k-2r$, and work
first in the stable range $n\geq k$.  On the
$O_k^{\rm st}(2)$-fixed sector of the ordinary defect-$r$ grade, the relation
sum has eigenvalue
\begin{equation}
 |\cG_{k,r}|\,|G_N|\,H_{k,r}(D),
 \qquad
 H_{k,r}(D)=\prod_{\ell=k-r-1}^{k-2}(D+2^\ell).
 \label{eqS:H}
\end{equation}
\end{lemma}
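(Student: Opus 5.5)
The plan follows the recipe announced before the lemma: compute a diagonal matrix element of $A_k(D)$ on a CSS vector, compute that vector's weight on the $G$-fixed part of the ordinary grade, and take the ratio. Fix an ordinary $N\in\cG_{k,r}$, a surjective lift $F$, and the normalized CSS vector $|v\rangle:=|N,[f]\rangle$ of Eq.~\eqref{eqS:CSS-coset}.

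\emph{Projection weight.} By Proposition~\ref{propS:filtration-bridge}, $A_k(D)$ acts on the ordinary grade only through the summand $\mathcal K^G_{k,r}$, where it is a scalar $\theta$. So I would decompose $|v\rangle$ along the $G$-isotypic pieces of the induced decomposition~\eqref{eqS:induced-grade}, together with its components in deeper grades. The $G$-fixed projection is the group average $\Pi_G=|G|^{-1}\sum_{O\in G}R_n(O)$. Hence
\begin{equation*}
\langle v|\Pi_G|v\rangle=|G|^{-1}\sum_{O}\langle v|R_n(O)|v\rangle .
\end{equation*}
Lemma~\ref{lemS:weighted-flags} evaluates each overlap. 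It equals $D^{-\ell}$ when $M=ON$ satisfies the nondegenerate-pairing condition with $\dim(N\cap M)=r-\ell$, it equals $\kappa_N$ times the canonical-map count for such $O$, and it vanishes otherwise. Summing over the flags $K<N$ of codimension $\ell$ with the count~\eqref{eqS:weighted-flag-count} gives
\begin{equation*}
\frac{\kappa_N}{|G|}\sum_{\ell=0}^{r}\gbinom{r}{\ell}{2}2^{\ell(s-1)+\binom\ell2}D^{-\ell}.
\end{equation*}
By the $q$-binomial theorem this equals $\kappa_N|G|^{-1}\prod_{j=0}^{r-1}(1+2^{s-1+j}/D)$. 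Orbit--stabilizer gives $|G|=|\cG_{k,r}|\,\kappa_N|G_N|$, so the weight is
\begin{equation*}
\frac{1}{|\cG_{k,r}|\,|G_N|}\prod_{j}(1+2^{s-1+j}/D).
\end{equation*}
One point needs checking here: deeper-grade components of $|v\rangle$ must not contribute to this weight. This should follow because $\Pi_G$ restricted to the deeper grades is treated separately. I would instead subtract them through the orbit-tail recursion of Lemma~\ref{lemS:orbit-tail}, or, more simply, choose $|v\rangle$ orthogonal to deeper codes by projecting with $(I-P_{\mathcal F_{r+1}})$.

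\emph{Diagonal matrix element.} Next I would compute $\langle v|A_k(D)|v\rangle$. The CSS isometry $\operatorname{Hom}(\F_2^n,Q_N)\to C_N$ compresses $A_k(D)$, restricted to $C_N$, into the mixed-signature moment on $Q_N$ of dimension $s$ (the signature records the conjugated factors). I would invoke Lemma~\ref{lemS:mixed-moment}, which writes this compression as $Z_{n,s}^{-1}$ times a relation sum over $Q_N$, rescaled by $Z_{n,k}/Z_{n,s}$ and a loop factor $D^{r}$ coming from~\eqref{eqS:semigroup}. Then $\langle v|\cdot|v\rangle$ picks up the diagonal-intersection statistics~\eqref{eqS:mixed-diagonal-count}. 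The resulting sum $\sum_\ell 2^{\binom\ell2}\gbinom{s-1}{\ell}{2}D^{-\ell}$ again collapses by the $q$-binomial theorem, to $\prod_{j=0}^{s-2}(1+2^j/D)$, which is $D^{-(s-1)}Z_{n,s}/D$.

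\emph{Ratio and main obstacle.} Combining the two steps,
\begin{equation*}
\theta=\frac{\langle v|A_k(D)|v\rangle}{\langle v|\Pi_G|v\rangle}=|\cG_{k,r}|\,|G_N|\cdot(\text{ratio of }D\text{-polynomials}).
\end{equation*}
It remains to check that the ratio telescopes to $\prod_{\ell=k-r-1}^{k-2}(D+2^\ell)$, using $Z_{n,k}/Z_{n,s}=\prod_{j=s-1}^{k-2}(D+2^j)$ and cancelling the factors $\prod_{j}(D+2^{s-1+j})$ with $j<r$; the surviving range is $\ell=s+r-1=k-r-1$ up to $k-2$. The main obstacle is the compression step. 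One must verify that the $k$-copy relation sum, restricted to $C_N$, equals exactly the mixed moment on $Q_N$ with the correct power of $D$. Relations whose kernels meet $N$ differently have to be shown either to vanish on $C_N$ or to be absorbed into the loop factor. I would attack this by writing each $T$ relative to the flag $N\subset N^\perp$ and applying~\eqref{eqS:semigroup} to $P_N^{\rm CSS}R_n(T)P_N^{\rm CSS}$, with $P_N^{\rm CSS}=D^{-r}R_n(T_N)$ for the relation $T_N$ of the code.
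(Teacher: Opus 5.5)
Your projection weight matches the paper's proof: the same surjective-lift CSS vector, the same use of Lemma~\ref{lemS:weighted-flags}, and the same value $\langle v|\Pi_G|v\rangle=\kappa_N\gamma_{k,r}(D)/|G|$ with $\gamma_{k,r}(D)=D^{-r}\prod_{j=0}^{r-1}(D+2^{s-1+j})$. The genuine error is in the diagonal matrix element. Under the code isometry, $|v\rangle$ becomes the single basis label $|f\rangle$ with $f:\F_2^n\to Q_N$ \emph{surjective}. The entry $\langle f|R_{n,q_N}(T)|f\rangle$ is $1$ if $T$ contains every pair $(f(e_i),f(e_i))$ and $0$ otherwise. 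These pairs span $\Delta_{Q_N}$, which is already $s$-dimensional, so only $T=\Delta_{Q_N}$ contributes and $\langle f|\widetilde A_N(D)|f\rangle=1$. The count \eqref{eqS:mixed-diagonal-count} computes the trace $\Tr\widetilde A_N(D)=Z_{n,s}$ instead. Your $\prod_{j=0}^{s-2}(1+2^j/D)=Z_{n,s}/D^s$ is therefore the \emph{average} diagonal entry over all $D^s$ labels, not the entry at a surjective one. The prefactor is also $D^{-r}$, not a loop factor $D^{r}$. The correct value is $\langle v|A_k(D)|v\rangle=D^{-r}Z_{n,k}/Z_{n,s}$, and this $D^{-r}$ is exactly what cancels the $D^{-r}$ in $\gamma_{k,r}$. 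If you carry your own values into the ratio, an uncancelled factor $D^{2r}\prod_{j=0}^{s-2}(1+2^j/D)$ survives. At $k=5$, $r=1$ you would get $(D+1)(D+2)(D+8)$ instead of $D+8$; compare Appendix~\ref{app:normalization-example}. Your final telescoping check silently assumes the numerator is $D^{-r}Z_{n,k}/Z_{n,s}$, which your previous step did not produce.

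The obstacle you name, tracking how each relation's kernels meet $N$ inside $P_N^{\rm CSS}R_n(T)P_N^{\rm CSS}$, is avoidable. Compress the stabilizer projectors inside $\rho_{n,k}$ instead of the relations. Up to phase, $\iota_NP_N^{\rm CSS}|S\rangle^{\otimes k}=D^{-r/2}|S;q_N\rangle$. For $S=|0^n\rangle$ the zero string meets only the zero coset, with amplitude $D^{-r/2}$, and Clifford covariance (Lemma III.8 of \cite{MontealegreGross2025}) handles the general case. Averaging over $S$ and applying Lemma~\ref{lemS:mixed-moment} to $(Q_N,q_N)$ then gives $\iota_NP_N^{\rm CSS}A_k(D)P_N^{\rm CSS}\iota_N^{-1}=D^{-r}(Z_{n,k}/Z_{n,s})\widetilde A_N(D)$ directly. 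You also should not project $|v\rangle$ by $I-P_{\mathcal F_{r+1}}$: that would change the vector and invalidate the overlap sum you just computed, unless the projection is a no-op. It is one. Since $f$ is onto, the weight form $f^*b_Q$ has full rank $s$, so $|v\rangle$ lies in the maximal-rank part of $C_N$. Lemma V.10 of \cite{MontealegreGross2025} identifies that part with the local grade $\mathcal K_N$.
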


\noindent\emph{Proof:} See Appendix~\ref{app:proof-flags}.\par

The scalar includes both the CSS compression and the overlap
normalization.  Appendix~\ref{app:normalization-example} illustrates
their separate contributions at $k=5,r=1$.

\subsection{The parabolic kernel and ordinary eigenvalues}
\begin{lemma}[Ordinary parabolic kernel]
\label{lemS:parabolic}
Let $\cG_{k,r}$ be an ordinary $O_k^{\rm st}(2)$-orbit of isotropic
$r$-defects and let $N\in\cG_{k,r}$.  If
$P_r=\operatorname{Stab}_{O_k^{\rm st}(2)}(N)$, the natural map
$P_r\to G_N$ is onto and
\begin{equation}
 |\ker(P_r\to G_N)|
 =2^{r(k-r-2)}\prod_{j=1}^{r}(2^j-1).
 \label{eqS:kernel}
\end{equation}
Equivalently,
\begin{equation}
 |\cG_{k,r}|\,|G_N|
 =|O_k^{\rm st}(2)|
 \frac{2^{r(r+2-k)}}{\prod_{j=1}^{r}(2^j-1)}.
 \label{eqS:orbit-product}
\end{equation}
\end{lemma}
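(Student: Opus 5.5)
We work in a basis adapted to $N$ and describe $P_r$ and its kernel in block form. The two claims of Lemma~\ref{lemS:parabolic} are equivalent by orbit--stabilizer: once surjectivity holds, $|\cG_{k,r}|=|O_k^{\rm st}(2)|/|P_r|$ and $|P_r|=|G_N|\cdot|\ker(P_r\to G_N)|$. Substituting Eq.~\eqref{eqS:kernel} then gives Eq.~\eqref{eqS:orbit-product}. So the task reduces to two things: surjectivity of $P_r\to G_N$, and the count of its kernel.

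\textbf{Step 1: adapted basis.} Since $N$ is totally $q_k$-isotropic and $\one\notin N$, the restriction $b|_N$ vanishes. Moreover $b(\one,x)=b(x,x)=0$ for $x\in N$, so $\one\in N^\perp$. We choose a complement $N'$ dual to $N$ under $b$, with $N'$ totally isotropic and $N'\perp\one$. This is possible because the Wu vector lies in $N^\perp$, and we can adjust $N'$ by a Witt-basis argument as in Section~\ref{sec:relations}. We then set $W=(N\oplus N')^\perp$, so that $E_k=N\oplus W\oplus N'$ with $\one\in W$ and $W\simeq Q_N$ as stochastic quadratic spaces.

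\textbf{Step 2: surjectivity.} Any $g\in G_N=\St(W)$ extends to $\id_N\oplus g\oplus\id_{N'}$. This map preserves $q_k$ because the decomposition is orthogonal and $q_k$ is additive on orthogonal summands up to the $2b$ cross terms. It also fixes $\one\in W$ and stabilizes $N$. Hence the map $P_r\to G_N$ is onto.

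\textbf{Step 3: the kernel.} Let $O\in P_r$ act trivially on $N^\perp/N$. We parametrize such $O$ by three pieces of data. First, $A=O|_N\in\mathrm{GL}(N)$. Second, a map $B:W\to N$ with $Ow=w+Bw$. Third, the image of $N'$: the action on $N'$ is forced, up to a further map $C:N'\to N$, by the requirements that $b$ be preserved and that $O$ respect the duality of $N'$ with $N$. These pieces are constrained as follows.
\begin{itemize}
\item Preservation of $q_k$ on $W$ requires $q_k(w+Bw)=q_k(w)$. Since $q_k$ vanishes on $N$, this becomes $2b(w,Bw)=0$, which is automatic because $Bw\in N\subseteq W^\perp$. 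Thus $B$ is free except that $B\one=0$, forced by $O\one=\one$. This contributes $2^{r(s-1)}$ choices, where $s=k-2r$.
\item $A$ ranges over $\mathrm{GL}_r(2)$, contributing $2^{\binom r2}\prod_j(2^j-1)$ choices.
\item The $N'$-component, with its image in $W$ determined by $B$, leaves a map $C:N'\to N$ subject to the condition that $q_k$ vanishes on the new $N'$. Over $\F_2$ with a mod-4 refinement, this condition makes $C$ "alternating-with-prescribed-diagonal", an affine space of dimension $\binom r2$.
\end{itemize}
The total is $2^{r(s-1)+r(r-1)}\prod_{j=1}^r(2^j-1)=2^{r(k-r-2)}\prod_{j=1}^r(2^j-1)$, matching Eq.~\eqref{eqS:kernel}.

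\textbf{Main obstacle.} The delicate point is the $C$-count. The mod-four refinement must be handled carefully: the diagonal conditions $q_k(n'+Cn'+\dots)=0$ are conditions modulo $4$, not modulo $2$. We have to check that they cut out exactly $r$ of the $r^2$ coordinates of $C$ (after accounting for the symmetric part), leaving $2^{\binom r2}$ solutions and never none. We also need the constraint $O\one=\one$ to interact only with $B$; this holds because $\one\in W$ and $b(\one,N')=0$.

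If this bookkeeping becomes messy, the fallback is to count $|P_r|$ indirectly. We would compute $|\cG_{k,r}|$ by counting isotropic $r$-frames avoiding $\one$, take the group orders from Eq.~\eqref{eqS:stochastic-order}, and check the identity case by case in the residue of $k$ modulo $4$.
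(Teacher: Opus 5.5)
Your proposal is correct and follows the paper's proof essentially step for step: the same stochastic Witt decomposition $E_k=N\oplus W\oplus N'$ with $\one\in W$, the lift $\id\oplus h\oplus\id$ for surjectivity, and the same parametrization of the kernel by $A\in\mathrm{GL}(r,2)$, the $W\to N$ block, and the strict upper triangle of $C$. The paper uses the equivalent datum $Y$, the $W$-component of $ON'$, with $B=AY^{\mathsf T}J_W$, so your condition $B\one=0$ is its $Y^{\mathsf T}J_Ww_0=0$. The obstacle you flag closes exactly as in the paper: $b$-preservation together with $B\one=0$ forces the $W$-component $y_i$ of $Oz_i$, for each basis vector $z_i$ of $N'$, to be orthogonal to $\one$, so $q_W(y_i)$ is even and $b(y_i,y_i)=0$ (the polar equations for $C+C^{\mathsf T}$ are consistent on the diagonal), whence $q_k(Oz_i)=0$ becomes the always-solvable equation $C_{ii}=q_W(y_i)/2$ in $\F_2$ --- that is, the stochastic condition is precisely what keeps the $C$-fibre nonempty.
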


\noindent\emph{Proof:} See Appendix~\ref{app:proof-parabolic}.\par

We use the polynomial $F_r$ defined in Eq.~\eqref{eq:Fr} to collect the
product factors in the ordinary eigenvalues.
\begin{proposition}[Ordinary eigenvalues]
\label{propS:eigenvalues}
Assume $n\geq k$, let $r$ be an admissible ordinary defect, and use
$F_r$ from Eq.~\eqref{eq:Fr}, with $F_0=1$.
The nonzero eigenvalue of $A_k(D)$ on the
$O_k^{\rm st}(2)$-fixed sector of its ordinary defect-$r$ grade is
\begin{equation}
 \theta_{k,r}(D)=|O_k^{\rm st}(2)|F_r(2^{n+3-k}).
 \label{eqS:theta}
\end{equation}
The corresponding Gram eigenvalue, i.e., the eigenvalue of $H_{n,k}$, is
\begin{equation}
 \lambda^{\Ord}_{k,r}(n)
 =\frac{N_n|O_k^{\rm st}(2)|}{Z_{n,k}}
 F_r(2^{n+3-k}).
 \label{eqS:lambda-finite}
\end{equation}
\end{proposition}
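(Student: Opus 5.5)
The plan is to combine the two counting results already in hand. Lemma~\ref{lemS:flags} gives the eigenvalue of the relation sum $A_k(D)$ on the $O_k^{\rm st}(2)$-fixed ordinary defect-$r$ sector as the product
\[
|\cG_{k,r}|\,|G_N|\,H_{k,r}(D),
\]
valid in the stable range $n\geq k$. Lemma~\ref{lemS:parabolic} evaluates the orbit--stabilizer factor $|\cG_{k,r}|\,|G_N|$ through Eq.~\eqref{eqS:orbit-product}. So $\theta_{k,r}(D)$ reduces to an explicit product. It remains to rewrite that product in the normalized form $|O_k^{\rm st}(2)|F_r(2^{n+3-k})$, and then to transport it to the Gram side.

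Concretely, I would first substitute Eq.~\eqref{eqS:orbit-product} into Eq.~\eqref{eqS:H}, which gives
\[
\theta_{k,r}(D)=|O_k^{\rm st}(2)|\,
\frac{2^{r(r+2-k)}}{\prod_{j=1}^{r}(2^j-1)}
\prod_{\ell=k-r-1}^{k-2}(D+2^\ell).
\]
Next I reindex the product by $\ell=k-2-j$ with $j=0,\dots,r-1$. With $D=2^n$, each factor becomes
\[
D+2^{k-2-j}=2^{k-3}\bigl(2^{n+3-k}+2^{1-j}\bigr).
\]
Hence the product equals $2^{r(k-3)}\prod_{j=0}^{r-1}(x+2^{1-j})$ with $x=2^{n+3-k}$.

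Then I collect the powers of two:
\[
r(r+2-k)+r(k-3)=r(r-1).
\]
This is exactly the prefactor $2^{r(r-1)}/\prod_{j=1}^r(2^j-1)$ in Eq.~\eqref{eq:Fr}, which yields Eq.~\eqref{eqS:theta}. The case $r=0$ is consistent: the product $H_{k,0}$ is empty, $F_0=1$, and Eq.~\eqref{eqS:orbit-product} reduces to $|\cG_{k,0}||G_N|=|O_k^{\rm st}(2)|$.

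For the Gram eigenvalue I invoke the normalization ledger of Section~\ref{sec:filtration}. There $S_{n,k}=(N_n/Z_{n,k})A_k(D)$, and $H_{n,k}$ and $S_{n,k}$ share their nonzero spectrum with multiplicities, by the polar decomposition of $\Phi_{n,k}$. By Proposition~\ref{propS:filtration-bridge}, the fixed sector on which $\theta_{k,r}$ acts is a genuine nonzero physical Clifford module inside the stabilizer-tensor span. It therefore contributes a nonzero eigenvalue of $S_{n,k}$, which is $(N_n/Z_{n,k})\theta_{k,r}(D)$. This gives Eq.~\eqref{eqS:lambda-finite}.

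The only delicate point is bookkeeping rather than a conceptual obstacle. I would check the endpoints of the index range in $H_{k,r}$, namely $\ell$ from $k-r-1$ to $k-2$, which gives exactly $r$ factors. I would also check that the sign conventions in the exponent of Eq.~\eqref{eqS:orbit-product} combine as computed above. The substantive content, namely scalarity of $A_k(D)$ on the sector and the value of the scalar, is entirely carried by Lemmas~\ref{lemS:flags} and~\ref{lemS:parabolic} together with Proposition~\ref{propS:filtration-bridge}.
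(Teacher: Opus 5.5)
Your proposal is correct and is essentially the paper's proof: substitute Eq.~\eqref{eqS:orbit-product} into the scalar of Lemma~\ref{lemS:flags}, and rewrite each factor as $D+2^{k-2-j}=2^{k-3}(2^{n+3-k}+2^{1-j})$. The powers of two then combine to $2^{r(r-1)}$ (the paper's Eq.~\eqref{eqS:product-conversion}), and multiplying by $N_n/Z_{n,k}$ through the frame--Gram ledger gives Eq.~\eqref{eqS:lambda-finite}.
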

The displayed formulas will also define candidate spectral data outside
this range; Theorem~\ref{thmS:finite-spectrum} establishes their complete
physical spectrum interpretation for every $n,k\geq1$.

\noindent\emph{Proof:} See Appendix~\ref{app:proof-eigenvalues}.\par

\section{Exceptional all-ones sectors}
\label{sec:exceptional}
The ordinary eigenvalues leave one additional branch when $4\mid k$:
the sectors whose defects contain the all-ones vector.  We determine
this branch by shortening to the ordinary problem at copy number $k-1$.
Each exceptional sector at $k$ inherits the Gram eigenvalue of its
ordinary counterpart at $k-1$ and has $1/D$ of its dimension.
The resulting operator identity therefore
supplies both the exceptional eigenvalues and the relation needed to
separate the two branches in Section~\ref{sec:multiplicity}.

Suppose $4\mid k$ and $N$ is an isotropic defect containing $\one_k$.
Shorten its last coordinate:
\begin{equation}
 N^-:=\{x\in\F_2^{k-1}:(x,0)\in N\}.
 \label{eqS:shorten}
\end{equation}
Then $\dim N^-=\dim N-1$, and $N^-$ is an ordinary defect for the odd
tensor power $k-1$.

\begin{lemma}[Operator shortening on the all-ones code]
\label{lemS:shortening}
Put $u=\one_k=(w,1)$ with $w=\one_{k-1}$, and write $P_u$ for the
$n$-column CSS projector of $\langle u\rangle$.  There is a unitary
\begin{equation}
 \mathcal U_n:(\mathbb C^D)^{\otimes(k-1)}
 \longrightarrow\mathbb C^D\otimes C_{\langle u\rangle}
 \label{eqS:shortening-unitary}
\end{equation}
such that the following identity holds on the entire all-ones code:
\begin{equation}
 I_D\otimes\bigl(P_uA_k(D)P_u\bigr)|_{C_{\langle u\rangle}}
 =(D+2^{k-2})\,\mathcal U_nA_{k-1}(D)\mathcal U_n^\dagger.
 \label{eqS:shortening-intertwiner}
\end{equation}
It maps the ordinary filtration at $k-1$ to the exceptional filtration at
$k$, with the defect index increased by one.  Consequently, on every
exceptional grade,
\begin{equation}
 \begin{aligned}
 M^{\Exc}_{k,r}(D)&=D^{-1}M^{\Ord}_{k-1,r-1}(D),\\
 \lambda^{\Exc}_{k,r}(n)&=\lambda^{\Ord}_{k-1,r-1}(n).
 \end{aligned}
 \label{eqS:shortening-spectrum}
\end{equation}
\end{lemma}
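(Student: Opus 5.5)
The plan is to build $\mathcal U_n$ explicitly from the shortening isomorphism of quadratic spaces. Next, compute the compression $P_uA_k(D)P_u$ relation by relation with the semigroup law~\eqref{eqS:semigroup}. Finally, read off the spectrum from the intertwining identity and the factorization $Z_{n,k}=(D+2^{k-2})Z_{n,k-1}$.

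\emph{Step 1: the unitary.} Since $k-1$ is odd, $u^\perp$ (the even-weight vectors of $E_k$) is identified with $E_{k-1}$ by $x\mapsto(x,|x|\bmod 2)$. Under this identification $u=(w,1)$ corresponds to $w=\one_{k-1}$. Hence $u^\perp/\langle u\rangle\cong E_{k-1}/\langle w\rangle$, and $q_k$ restricted to $u^\perp$ transports to the induced form on $E_{k-1}/\langle w\rangle$. A computational basis vector $|X\rangle$ of $(\mathbb C^D)^{\otimes(k-1)}$, with $X\in\operatorname{Hom}(\F_2^n,E_{k-1})$, splits as a class $[X]$ modulo $\operatorname{Hom}(\F_2^n,\langle w\rangle)\cong\F_2^n$ together with a coordinate in that fibre. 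Fourier transforming along each fibre with characters $\chi_z$, $z\in\F_2^n$, I define $\mathcal U_n$ to send the $z$-th Fourier mode over $[X]$ to $|z\rangle\otimes|\langle u\rangle,[X]\rangle$, using the coset vectors~\eqref{eqS:CSS-coset}. The trivial character recovers exactly the uniform coset sums. Dimensions match ($D\cdot D^{k-2}=D^{k-1}$), and orthonormality of the coset basis makes $\mathcal U_n$ unitary. The choice of lifts is harmless because the coset vectors are lift independent.

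\emph{Step 2: the compression identity.} Write $P_u$ itself as a normalized relation operator, namely $D^{-1}R_n(T_u)$, where $T_u$ is the defect relation with kernels $\langle u\rangle$. Then~\eqref{eqS:semigroup} gives
\begin{equation}
P_uR_n(T)P_u=D^{-2+\epsilon_L(T)+\epsilon_R(T)}R_n(T_u\circ T\circ T_u),
\end{equation}
where $\epsilon_L(T)=\dim[\langle u\rangle\cap N_L(T)]$ and $\epsilon_R(T)=\dim[N_R(T)\cap\langle u\rangle]$. Since $(\one,\one)\in T$, every relation maps $u^\perp$-classes compatibly. The composite $T_u\circ T\circ T_u$ is therefore determined by the induced stochastic relation $\bar T$ on $u^\perp/\langle u\rangle\cong E_{k-1}/\langle w\rangle$. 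Conjugated by $\mathcal U_n$, this composite acts as $I_D\otimes R_n(\bar T)$ on the trivial-character part, where $\bar T$ is now regarded as an element of $\Sigma_{k-1,k-1}(2)$. I will then partition $\Sigma_{k,k}(2)$ into fibres over $\Sigma_{k-1,k-1}(2)$ and show that each fibre contributes the same total weight $D+2^{k-2}$.
\begin{itemize}
\item One relation in each fibre has $u$ in both kernels. The loop it closes in~\eqref{eqS:semigroup} produces the weight $D$.
\item The remaining $2^{k-2}$ lifts pair $u$ nontrivially with a complementary direction and contribute weight $1$ each.
\end{itemize}
This fibre count is the step I expect to be the main obstacle. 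I would attack it with Witt-basis coordinates that separate the pair $(u,u')$ from $u^\perp/\langle u\rangle$, counting the lifts as in the diagonal count~\eqref{eqS:mixed-diagonal-count}. As a check, the trace of both sides of~\eqref{eqS:shortening-intertwiner} via~\eqref{eqS:relation-trace} must agree, which pins down the constant.

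\emph{Step 3: filtration and spectrum.} For an exceptional defect $N\ni u$, one has $C_N\subset C_{\langle u\rangle}$. Under the identification of Step 1, $N/\langle u\rangle$ corresponds to the ordinary defect $N^-$ of~\eqref{eqS:shorten}, with $\dim N^-=\dim N-1$. Thus $\mathcal U_n$ maps each code $C_{N^-}$ to $\mathbb C^D\otimes C_N$. Consequently the ordinary filtration at $k-1$, grade $r-1$, goes to $\mathbb C^D\otimes\mathcal F^{\Exc}_r$, and by orthogonality the associated grades correspond as well.

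The exceptional grades are $A_k(D)$-invariant subspaces of $C_{\langle u\rangle}$, by Proposition~\ref{propS:filtration-bridge}, so on them $A_k(D)$ coincides with its compression $P_uA_k(D)P_u$. An eigenvalue $\theta$ of $A_{k-1}(D)$ on a sector of dimension $M^{\Ord}_{k-1,r-1}(D)$ therefore becomes the eigenvalue $(D+2^{k-2})\theta$ of $A_k(D)$ on a sector of dimension $D^{-1}M^{\Ord}_{k-1,r-1}(D)$; the factor $D^{-1}$ comes from the $I_D$ tensor factor. Rescaling by $N_n/Z_{n,k}$ and using $Z_{n,k}=(D+2^{k-2})Z_{n,k-1}$ from~\eqref{eq:main-Z} gives $\lambda^{\Exc}_{k,r}(n)=(N_n/Z_{n,k-1})\theta=\lambda^{\Ord}_{k-1,r-1}(n)$, which is~\eqref{eqS:shortening-spectrum}.
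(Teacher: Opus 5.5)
Your outline matches the paper's: the parity/$\iota$ identification behind $\mathcal U_n$, compression of each relation by $P_u$, a partition of $\Sigma_{k,k}(2)$ into shortening fibres with one closed lift and $2^{k-2}$ transverse lifts, and the spectral read-off via $Z_{n,k}=(D+2^{k-2})Z_{n,k-1}$. Your Step 3 is fine. The gap is in Step 2, which is exactly where $D+2^{k-2}$ comes from.

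First, your displayed compression formula is wrong. In the second application of \eqref{eqS:semigroup} the loop count involves $N_R(T_u\circ T)$, not $N_R(T)$. Because $(\one,\one)\in T$, one has $N_R(T_u\circ T)=N_R(T)+\langle u\rangle\ni u$. Hence the correct identity is $P_uR_n(T)P_u=D^{-1+\epsilon(T)}R_n(T_u\circ T\circ T_u)$, where $\epsilon(T)=\epsilon_L(T)=\epsilon_R(T)$; these agree since $(u,0)\in T\iff(0,u)\in T$.

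Your exponent $-2+\epsilon_L+\epsilon_R$ gives each transverse lift an extra factor $D^{-1}$. Already in the fibre over the identity relation it yields $DP_u+2^{k-2}D^{-1}P_u$ instead of $(D+2^{k-2})P_u$. The weights $D$ and $1$ in your bullets (relative to $R_n(T^-)$) are correct, but they do not follow from your formula. Also, the closed lift's $D$ is not a loop factor: compression leaves that relation unchanged. Its extra $D$ comes from its defect having dimension $d+1$ rather than $d$.

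Second, and more seriously, you leave open the count of exactly $2^{k-2}$ transverse lifts in every fibre, independent of the defect dimension $d$. A trace check, or the global count $|\Sigma_{k,k}(2)|=(1+2^{k-2})|\Sigma_{k-1,k-1}(2)|$, only controls the average over fibres and says nothing about surjectivity. The paper proves the count as follows. A transverse lift is a pair of hyperplanes $K_L<N_L$, $K_R<N_R$ missing $u$ ($2^d$ choices each), together with a stochastic isometry $Q_{K_R}\to Q_{K_L}$ that fixes $[u]$ and induces $\overline J:Q_{N_R}\to Q_{N_L}$. These isometries form a torsor of size $2^{k-2d-2}$ under the unipotent maps $g_a$, $a\in Q_N$, giving $2^d\cdot2^d\cdot2^{k-2d-2}=2^{k-2}$.

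Two smaller repairs are also needed. Your Fourier modes over a fibre $[X]$ depend on a basepoint in that fibre. Unless the basepoints are chosen linearly, for instance with zero column parities $X^{\mathsf T}w=0$, the operator $\mathcal U_nA_{k-1}(D)\mathcal U_n^\dagger$ is not of the form $I_D\otimes(\cdot)$, and the identity fails. With the parity choice, your map is the paper's $\mathcal U_n$ followed by the $n$-qubit Hadamard transform on the first factor, which is harmless.

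Also, $q_{k-1}$ does not descend to $E_{k-1}/\langle w\rangle$, because $w$ is not isotropic ($q_{k-1}(w)=3$ and $b(w,w)=1$). To regard the induced relation as an element of $\Sigma_{k-1,k-1}(2)$, use the orthogonal splitting $E_{k-1}=\langle w\rangle\perp w^\perp$. You also need the fact that every $T^-\in\Sigma_{k-1,k-1}(2)$ splits as $\Delta_{\langle w\rangle}\perp T'$. This gives both the bijection between closed lifts and $\Sigma_{k-1,k-1}(2)$ and the identity action on the parity factor.
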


\noindent\emph{Proof:} See Appendix~\ref{app:proof-shortening}.\par

The factor $D+2^{k-2}$ in the relation-sum identity equals
$Z_{n,k}/Z_{n,k-1}$, so it cancels when the two sides are converted to
frame operators.  The tensor factor $I_D$ repeats each exceptional-sector
eigenvalue $D$ times, which accounts for the division by $D$ in its
multiplicity.  The quadratic-refinement and endpoint checks accompanying
this construction are collected in Appendix~\ref{app:quadratic-checks}.

\section{Eigenvalue multiplicities}
\label{sec:multiplicity}
Sections~\ref{sec:ordinary} and~\ref{sec:exceptional} determine the
eigenvalues and relate the exceptional branch to the shortened ordinary
problem.  To evaluate the success probability, we still need the
dimensions of the physical sectors.  Burnside averaging gives the
invariant-code dimensions, and adjacent differences isolate the grades;
when $4\mid k$, shortening separates their ordinary and exceptional
contributions.  The resulting multiplicity formulas and endpoint
conventions complete the stable-range spectral data needed for the
continuation in Section~\ref{sec:finite}.

\subsection{Fixed-space enumerators and invariant-code dimensions}
For a finite classical group $G$ acting on its natural binary module, define
the fixed-space enumerator
\begin{equation}
 B_G(z):=\frac1{|G|}\sum_{g\in G}z^{\dim\Fix(g)}.
 \label{eqS:fixed-enumerator}
\end{equation}
We abbreviate the enumerators of $O^\delta(2m,2)$ and $\Sp(2m,2)$ by
$B_m^{O,\delta}$ and $B_m^{\Sp}$, respectively.  For $m,\ell\geq1$,
exact fixed-space distributions in Ref.~\cite{FulmanStanton2016},
followed by the finite $q$-binomial identity, give
\begin{align}
 &B_m^{O,\delta}(z)-B_{m-1}^{O,\delta}(z)\notag\\
 &\quad=\frac{\displaystyle
 \prod_{j=0}^{m-2}(z^2-4^j)
 (z-\delta2^{m-1})(z+2^{2m-1})}
 {|O^\delta(2m,2)|},
 \label{eqS:Oincrement}\\
 &B_\ell^{\Sp}(z)-B_{\ell-1}^{\Sp}(z)\notag\\
 &\quad=\frac{\displaystyle
 (z-1)\prod_{j=1}^{\ell-1}(z^2-4^j)(z+2^{2\ell})}
 {|\Sp(2\ell,2)|}.
 \label{eqS:Spincrement}
\end{align}
Here empty products equal one, $B_0^{O,+}=B_0^{\Sp}=1$, and
$B_0^{O,-}=0$ is the formal initial value for the minus tower.  At the
$m=1$ endpoints we also set
$B_{-1}^{\Sp}=B_{-1}^{O,+}=0$.  For
completeness, Eqs.~\eqref{eqS:Oincrement} and \eqref{eqS:Spincrement} can be
checked directly by inserting Theorems 2.3 and 2.5 of
Ref.~\cite{FulmanStanton2016} and applying the finite $q$-binomial identity
in their Eq.~(3).  These are exact polynomial identities in $z$, not
limiting fixed-vector formulas.

Burnside averaging converts these fixed-space enumerators into the
invariant-code dimensions stated next.  The underlying permutation
representation and the count are given in Appendix~\ref{app:proof-invariant-code}.

\begin{lemma}[Invariant-code dimensions]
\label{lemS:invariant-code}
Put $s=\dim Q_N=k-2r$.  The invariant dimension $a_s(D)$ introduced in
Proposition~\ref{propS:filtration-bridge} is as follows.
For odd $k$,
\begin{equation}
 a_{2m+1}(D)=D B_m^{O,\delta}(D).
 \label{eqS:a-odd}
\end{equation}
For $k\equiv2\pmod4$,
\begin{equation}
 a_{2m}(D)=\frac{D(D+1)}2B_{m-1}^{\Sp}(D).
 \label{eqS:a-two}
\end{equation}
For $4\mid k$,
\begin{equation}
 \begin{aligned}
 a_{2m}(D)&=B_{m-1}^{O,\delta}(D)\\
 &\quad+(D^2-1)B_{m-1}^{O,\delta}(D/2).
 \end{aligned}
 \label{eqS:a-zero}
\end{equation}
\end{lemma}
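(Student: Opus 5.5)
The approach is to realize $C_N^{G_N}$ as the fixed space of a permutation representation and apply Burnside's lemma. Using the orthonormal CSS basis~\eqref{eqS:CSS-coset}, $C_N$ has basis $\{|N,[f]\rangle: f\in\operatorname{Hom}(\F_2^n,Q_N)\}$. The group $G_N=\St(Q_N)$, lifted through $P_N\to G_N$ and acting by $R_n(O)$, permutes these vectors via $f\mapsto \bar O\circ f$. No phases arise, because $R_n(O)$ is a permutation of computational basis states and the coset sums are lift-independent. The invariant dimension is then the number of orbits:
\[
 a_s(D)=\frac1{|G_N|}\sum_{g\in G_N}\bigl|\Fix(g)\text{ on }\operatorname{Hom}(\F_2^n,Q)\bigr|
 =\frac1{|G_N|}\sum_{g}D^{\dim\Fix_Q(g)}=B_{G_N}(D),
\]
since $\operatorname{Hom}(\F_2^n,Q)\cong Q^{\oplus n}$ and fixed vectors in each column are counted independently. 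The whole task therefore reduces to identifying $B_{\St(Q)}$ with the classical enumerators, case by case, through the structure of the stochastic quadratic space $Q$ of dimension $s$ with Wu vector $w=[\one]$.

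The three cases are as follows.

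\emph{Odd $k$.} Here $s=2m+1$ is odd, so $q(w)$ is odd and $w$ is anisotropic. Then $Q=\langle w\rangle\oplus w^\perp$, and $w^\perp$ is a nondegenerate $2m$-dimensional quadratic space of type $\delta$. This matches the order $|O^\delta(k-1,2)|$ in~\eqref{eqS:stochastic-order}. Thus $\St(Q)\cong O^\delta(2m,2)$. Every $g$ fixes $w$, so $\dim\Fix_Q(g)=1+\dim\Fix_{w^\perp}(g)$, which gives $D\,B_m^{O,\delta}(D)$.

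\emph{$k\equiv2\pmod 4$.} Here $q(w)=2$. The bilinear form on $w^\perp/\langle w\rangle$ is symplectic of dimension $2m-2$, and $\St(Q)$ is an extension of $\Sp(2m-2,2)$ by $\mathbb Z/2$, matching the factor $2|\Sp|$. I would choose $v$ with $b(v,w)=1$ and decompose $Q=\langle w,v\rangle\oplus W$ with $W$ symplectic. For an element $(g,\epsilon)$, the fixed space is $\Fix_W(g)$ plus a contribution from the plane $\langle w,v\rangle$. Averaging over $\epsilon$ gives $\tfrac12(D^2+D)$ when the correction term is uniform. I expect this to yield $\tfrac{D(D+1)}2B^{\Sp}_{m-1}(D)$.

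\emph{$4\mid k$.} Here $w$ is isotropic, and $\St(Q)$ is the parabolic subgroup fixing $w$. This is $2^{s-2}\rtimes O^\delta(s-2,2)$ acting on $w^\perp/w$, matching the order $2^{k-2}|O^\delta|$ for $r=0$. For a pair $(t,g)$ with unipotent part $t\in\operatorname{Hom}$ and Levi part $g$, the fixed space on $Q$ has dimension $\dim\Fix_{W}(g)+2$ if $t$ lies in the appropriate coboundary image, and a smaller value otherwise. Summing over $t$ for fixed $g$ should produce a term $D^{\dim\Fix(g)}$ from the $t$ that are fully compatible, together with a term of the form $(D^2-1)(D/2)^{\dim\Fix(g)}$. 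The latter arises because $|\{t\}|/|\text{coboundaries}|$ gives weights $2^{-\dim\Fix}$. I expect the resulting formula to be $B^{O,\delta}_{m-1}(D)+(D^2-1)B^{O,\delta}_{m-1}(D/2)$.

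The main obstacle is this last parabolic computation, where fixed vectors must be counted in the three-step flag $\langle w\rangle\subset w^\perp\subset Q$. The difficulty is determining exactly which $t$ enlarge the fixed space, so that the factor $D/2$ appears. I would handle it by writing $g$ in block-triangular form, computing $\dim\ker(g-I)$ via rank, and summing over $t$ with a linear-algebra count of the form $|\{t: t(\Fix)\subseteq\operatorname{im}(g-I)\}|$. Finally, I would check the small cases $m=1$ against the terminal conventions of Lemma~\ref{lemS:orbit-tail}.
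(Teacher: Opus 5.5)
Your route is the paper's. You realize $C_N$ as the permutation module $\mathbb C[\operatorname{Hom}(\F_2^n,Q_N)]$, deriving it directly from the CSS coset basis where the paper quotes the code model of Ref.~\cite{MontealegreGross2025}. Burnside then gives $a_s(D)=B_{\St(Q)}(D)$, and you split by Wu-vector type. Your odd case is the paper's argument and is complete.

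\textbf{The gap.} In both even cases you stop where the lemma's content lies. Put $f=\dim\Fix_W(h)$. The formulas follow only from exact fixed-space dichotomies:
\begin{itemize}
\item for $k\equiv2\pmod4$, each $h\in\Sp(2m-2,2)$ has two stochastic lifts, with $\dim\Fix_Q=f+2$ and $f+1$;
\item for $4\mid k$, the lift with translation part $t$ has $\dim\Fix_Q=f+2$ if $t\perp\Fix(h)$, which is a fraction $2^{-f}$ of all $t$, and exactly $f$ otherwise.
\end{itemize}
Your phrase ``a smaller value otherwise'' leaves the decisive number open. A drop only to $f+1$ would give $2^{-f}D^{f+2}+(1-2^{-f})D^{f+1}$, not $D^f+(D^2-1)(D/2)^f$. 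Also, the $D^f$ term is not the contribution of the compatible $t$: those contribute $D^{f+2}$, and $D^f$ appears only after regrouping $2^{-f}D^{f+2}+(1-2^{-f})D^f$.

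For $k\equiv2\pmod4$, the picture of $(g,\epsilon)$ acting blockwise on $\langle w,v\rangle\oplus W$ is not right. A block-diagonal extension of $h$ preserves $q$ only if $h$ preserves $Q_W:=q|_W/2$, so most of $\Sp(W)$ is reached only through a twist.

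\textbf{How to close it.} Take $b(w,v)=1$ and $W=\langle w,v\rangle^\perp$. A stochastic $g$ inducing $h$ on $w^\perp/\langle w\rangle\cong W$ has the form $g(x)=hx+b(a,x)w$ for $x\in W$ and $g(v)=v+dw+ha$.
\begin{itemize}
\item For $k\equiv2\pmod4$, $a=a_h$ with $b(a_h,\cdot)=Q_W+Q_W\circ h$, and $d\in\F_2$ is free. The condition $q(gv)=q(v)$ reduces to $Q_W(a_h)=0$, which holds by Arf invariance.
\item For $4\mid k$, $h\in O^\delta(W,Q_W)$, $a=t\in W$ is free, and $d=Q_W(t)$.
\end{itemize}
A vector $\epsilon_1w+\epsilon_2v+x$ is fixed iff $(h+1)x=\epsilon_2ha$ and $\epsilon_2d=b(a,x)$. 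The $\epsilon_2=0$ solutions form $\langle w\rangle\oplus(\Fix(h)\cap a^\perp)$. An $\epsilon_2=1$ solution requires $ha\in\operatorname{im}(h+1)=\Fix(h)^\perp$, i.e.\ $a\perp\Fix(h)$. In that case the first equation has solution set $x_0+\Fix(h)$, on which $b(a,\cdot)$ is constant.
\begin{itemize}
\item For $a_h$ the orthogonality is automatic, since $b(a_h,y)=Q_W(y)+Q_W(hy)=0$ on $\Fix(h)$. So one value of $d$ gives dimension $f+2$ and the other gives $f+1$.
\item For $t\perp\Fix(h)$, the equation $b(t,x_0)=Q_W(t)$ holds automatically, because $h(x_0+t)=x_0$ and $h$ preserves $Q_W$. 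This gives dimension $f+2$.
\item For $t\not\perp\Fix(h)$, the $\epsilon_2=1$ solutions and one dimension of $\Fix(h)$ are both lost, giving exactly $f$.
\end{itemize}
Averaging over the lifts then yields both stated formulas.
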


\noindent\emph{Proof:} See Appendix~\ref{app:proof-invariant-code}.\par

The adjacent differences in Proposition~\ref{propS:filtration-bridge}
convert invariant-code dimensions into physical multiplicities,
with shortening separating the two branches when $4\mid k$.

\subsection{Explicit multiplicities and admissible ranges}
\begin{proposition}[Ordinary multiplicities]
\label{propS:multiplicity}
Assume $n\geq k$ and let $s=k-2r$.  The multiplicity $M^{\Ord}_{k,r}(D)$ of the eigenvalue
\eqref{eqS:lambda-finite} is the following explicit integer.

If $k$ is odd, write $s=2m+1$ and use the sign in
Eq.~\eqref{eqS:type}.  Then
\begin{equation}
 M^{\Ord}_{k,r}(D)=
 \frac{D(D-\delta2^{m-1})(D+2^{2m-1})
 \prod_{j=0}^{m-2}(D^2-4^j)}{|O^\delta(2m,2)|}.
 \label{eqS:M-odd}
\end{equation}
For $m=0$ in the plus tower, Eq.~\eqref{eqS:M-odd} is interpreted as
$M=D$.

If $k\equiv2\pmod4$, write $s=2m$.  Then
\begin{equation}
 M^{\Ord}_{k,r}(D)=
 \frac{D(D+2^{2m-2})
 \prod_{j=0}^{m-2}(D^2-4^j)}
 {2|\Sp(2m-2,2)|}.
 \label{eqS:M-two}
\end{equation}

If $4\mid k$, write $s=2m$ and use the sign in
Eq.~\eqref{eqS:type}.  The ordinary multiplicity is
\begin{equation}
 M^{\Ord}_{k,r}(D)=
 \frac{(D-\delta2^{m-1})(D+2^{2m-2})
 \prod_{j=0}^{m-2}(D^2-4^j)}
 {2^{2m-2}|O^\delta(2m-2,2)|}.
 \label{eqS:M-zero}
\end{equation}
\end{proposition}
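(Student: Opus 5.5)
The multiplicities follow from Proposition~\ref{propS:filtration-bridge}, which identifies $M^{\Ord}_{k,r}(D)$ with $\dim\mathcal K_{k,r}^G$, together with Lemma~\ref{lemS:invariant-code} and the increment identities \eqref{eqS:Oincrement}--\eqref{eqS:Spincrement}. By Eq.~\eqref{eqS:fixed-grade} each fixed grade is a single irreducible Clifford module on which $A_k(D)$ is scalar, so the multiplicity of the eigenvalue \eqref{eqS:lambda-finite} in that sector is exactly $\dim\mathcal K_{k,r}^G$. For $4\nmid k$, Eq.~\eqref{eqS:adjacent-difference} gives $M^{\Ord}_{k,r}=a_s(D)-a_{s-2}(D)$. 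For $4\mid k$, Eq.~\eqref{eqS:ordinary-exceptional-split} gives $M^{\Ord}_{k,r}=a_s-a_{s-2}-\dim\mathcal L^G_{k,r+1}$. I then split into the three residue classes.

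\emph{Odd $k$.} Here $s=2m+1$ and, by Eq.~\eqref{eqS:a-odd}, $a_s-a_{s-2}=D\,[B_m^{O,\delta}(D)-B_{m-1}^{O,\delta}(D)]$. Inserting Eq.~\eqref{eqS:Oincrement} with $z=D$ gives Eq.~\eqref{eqS:M-odd} directly. The terminal conventions of Lemma~\ref{lemS:orbit-tail} ($a_{-1}=0$ for the plus tower, $a_1=0$ for the minus tower) match $B_{-1}^{O,+}=0$ and $B_0^{O,-}=0$. In particular, for $m=0$ in the plus tower, $a_1=D B_0^{O,+}=D$, which explains the interpretation $M=D$.

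\emph{$k\equiv2\pmod4$.} By Eq.~\eqref{eqS:a-two}, $a_{2m}-a_{2m-2}=\tfrac{D(D+1)}2[B^{\Sp}_{m-1}-B^{\Sp}_{m-2}](D)$. Substituting Eq.~\eqref{eqS:Spincrement} with $\ell=m-1$ produces the factor $(D-1)(D+1)\prod_{j=1}^{m-2}(D^2-4^j)=\prod_{j=0}^{m-2}(D^2-4^j)$ together with $(D+2^{2m-2})$ over $2|\Sp(2m-2,2)|$, which is Eq.~\eqref{eqS:M-two}. The endpoint $a_0=0$ with $B^{\Sp}_{-1}=0$ covers the last sector.

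\emph{$4\mid k$ (main obstacle).} This case needs the exceptional dimension $\dim\mathcal L^G_{k,r+1}$. By Lemma~\ref{lemS:shortening} it equals $D^{-1}M^{\Ord}_{k-1,r}(D)$, which is known from the odd case at $k-1$ with quotient dimension $s-1=2(m-1)+1$. I would first check that the Witt type of the odd tower at $k-1$ is the same sign $\delta=(-1)^{k/4}$; this amounts to $((k-1)^2-1)/8\equiv k/4\pmod2$ for $4\mid k$, which is a parity check on $k(k-2)/8$. The exceptional contribution is then $B_{m-1}^{O,\delta}(D)-B_{m-2}^{O,\delta}(D)$ evaluated by Eq.~\eqref{eqS:Oincrement}. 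The adjacent difference from Eq.~\eqref{eqS:a-zero} is
\[
[B_{m-1}-B_{m-2}](D)+(D^2-1)[B_{m-1}-B_{m-2}](D/2).
\]
Its first term cancels exactly against the exceptional piece, leaving $(D^2-1)$ times the $O^\delta(2m-2,2)$ increment at $z=D/2$. The factors evaluate as $\prod_{j=0}^{m-3}(D^2/4-4^j)=4^{-(m-2)}\prod_{j=1}^{m-2}(D^2-4^j)$, $(D/2-\delta2^{m-2})=\tfrac12(D-\delta2^{m-1})$, and $(D/2+2^{2m-3})=\tfrac12(D+2^{2m-2})$. Combining these with $D^2-1$ gives the numerator of Eq.~\eqref{eqS:M-zero}, and the powers of two collect into $2^{2m-2}$. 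The delicate points are the low-$m$ endpoints. For $m=1$ the increment uses $B_{-1}^{O,+}=0$ or $B_0^{O,-}=0$, and I would verify the terminal values $a_0=0$ and $a_2=0$ of Lemma~\ref{lemS:orbit-tail} separately against these conventions. Finally, integrality is automatic because each $M^{\Ord}_{k,r}(D)$ is a dimension.
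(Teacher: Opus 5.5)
Your proposal is correct and follows the paper's proof essentially step for step. It uses the same ingredients: the Burnside invariant-code dimensions, the adjacent differences of Proposition~\ref{propS:filtration-bridge}, and the increments \eqref{eqS:Oincrement}--\eqref{eqS:Spincrement}. For $4\mid k$ it likewise removes the exceptional head $D^{-1}M^{\Ord}_{k-1,r}=B^{O,\delta}_{m-1}(D)-B^{O,\delta}_{m-2}(D)$ via shortening and the odd case at $k-1$, with the same $\delta$ parity check, leaving $(D^2-1)$ times the increment at $z=D/2$.

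One small bookkeeping correction concerns the $4\mid k$ endpoints. There the only $m=1$ sector is plus type, and \eqref{eqS:Oincrement} does not cover it. You must check directly, as the paper does, that $a_2=D^2$ splits as $1+(D^2-1)$. The convention $B_0^{O,-}=0$ (equivalently $a_2=0$) enters only at the bottom of the minus tower, $m=2$, not at $m=1$.
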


\noindent\emph{Proof:} See Appendix~\ref{app:proof-multiplicity}.\par

The admissible ordinary ranges are useful when evaluating the formulas:
\begin{equation}
 \begin{array}{c|c}
 \text{case}&\text{admissible }r\\ \hline
 k\text{ odd},\ \delta=+&0\leq r\leq(k-1)/2\\
 k\text{ odd},\ \delta=-&0\leq r\leq(k-3)/2\\
 k\equiv2\pmod4&0\leq r\leq k/2-1\\
 4\mid k,\ \delta=+&0\leq r\leq k/2-1\\
 4\mid k,\ \delta=-&0\leq r\leq k/2-2.
 \end{array}
 \label{eqS:ranges}
\end{equation}
At a physical value $D=2^n$, candidates with sufficiently large residual
quotient dimension $s=k-2r$ vanish: eventually the product contains the
factor $D^2-4^n=0$, but the first candidates outside the stable range can
survive.  Nonnegative integrality of every
displayed multiplicity at every physical specialization is proved
independently in Lemma~\ref{lemS:polynomial-continuation} below.
For reference, the exceptional ranges obtained by shortening are
\begin{equation}
 \begin{array}{c|c}
 k\equiv0\pmod8&1\leq r\leq k/2\\
 k\equiv4\pmod8&1\leq r\leq k/2-1.
 \end{array}
 \label{eqS:exceptional-ranges}
\end{equation}

\section{Complete spectrum at every physical dimension}
\label{sec:finite}
Sections~\ref{sec:filtration}--\ref{sec:multiplicity} determine the
spectrum using the stable code decomposition for $n\geq k$.  The
positive-offset window has $k>n$, so its analysis requires extending
these formulas beyond that range.  We combine polynomial identities
for the power sums with positivity of the surviving eigenvalues and
nonnegative integer multiplicities to establish the spectrum at every
physical $n,k\geq1$.  Counting the zero eigenspaces separately completes
Theorem~\ref{thmS:finite-spectrum} and supplies the finite probabilities
needed for the fixed-offset limit.

\subsection{Continuation beyond the stable range}
For fixed $k$, every positive-integer power sum of the actual relation
operator is a polynomial in $D$.  The power sums of the candidate
eigenvalues, weighted by their multiplicities, are also polynomials.
They agree at infinitely many stable values $D=2^n$, and hence agree
identically.  At each physical specialization, the proof independently
establishes nonnegative integer multiplicities and positive surviving
eigenvalues; equality of all positive moments then identifies the
nonzero spectrum.  This conclusion concerns the spectral multiset,
while the physical-module decomposition remains restricted to $n\geq k$.

\begin{lemma}[Polynomial continuation of a positive finite spectrum]
\label{lemS:polynomial-continuation}
Fix $k$ and first declare the complete finite candidate set
\begin{equation*}
 \begin{aligned}
 \mathfrak A_k^{\Ord}
  &:=\{(\Ord,r):0\leq r\leq\lfloor k/2\rfloor\},\\
 \mathfrak A_k^{\Exc}
  &:=\begin{cases}
      \{(\Exc,r):1\leq r\leq k/2\},&4\mid k,\\
      \varnothing,&4\nmid k,
     \end{cases}\\
 \mathfrak A_k&:=\mathfrak A_k^{\Ord}\cup\mathfrak A_k^{\Exc}.
 \end{aligned}
\end{equation*}
For an ordinary label in Eq.~\eqref{eqS:ranges}, define $M_a(D)$ by
Eqs.~\eqref{eqS:M-odd}--\eqref{eqS:M-zero}\noeqref{eqS:M-two} and $\theta_a(D)$ by
Eq.~\eqref{eqS:theta}.  For an exceptional label in
Eq.~\eqref{eqS:exceptional-ranges}, use
Eq.~\eqref{eqS:shortening-spectrum}; at the level of $A_k(D)$ this means
$\theta^{\Exc}_{k,r}=(D+2^{k-2})\theta^{\Ord}_{k-1,r-1}$.  Every formally
absent endpoint label in this displayed candidate set is assigned
$M_a(D)=0$ and the same formal $\theta_a(D)$ formula.  Labels whose
multiplicity vanishes only after a physical specialization remain in
$\mathfrak A_k$ until that specialization.  If these data give the positive
spectrum for every stable specialization $D=2^n$, $n\geq k$, then they give
the positive spectrum at every physical specialization $D=2^n$ after
zero-multiplicity labels are discarded.  At such a specialization,
\begin{align}
 \operatorname{mult}_0 A_k(D)&=D^k-\sum_{a\in\mathfrak A_k}M_a(D),
 \label{eqS:A-zero-multiplicity}\\
 \operatorname{mult}_0 H_{n,k}&=N_n-\sum_{a\in\mathfrak A_k}M_a(D).
 \label{eqS:H-zero-multiplicity}
\end{align}
\end{lemma}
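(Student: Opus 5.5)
The plan is to compare power sums. Fix $k$ and work with $A_k(D)=\sum_{T\in\Sigma_{k,k}(2)}R_n(T)$ on $(\mathbb C^D)^{\otimes k}$, $D=2^n$.

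\textbf{Step 1: power sums of $A_k(D)$ are polynomials in $D$.} For every integer $p\geq1$, expand $A_k(D)^p$ with the semigroup law \eqref{eqS:semigroup}. This gives
\begin{equation*}
\Tr A_k(D)^p=\sum_{T_1,\dots,T_p}D^{e(T_1,\dots,T_p)}\,D^{\dim(T_1\circ\cdots\circ T_p\cap\Delta)} .
\end{equation*}
Here $e$ is the accumulated loop exponent, and the last factor uses \eqref{eqS:relation-trace}. The index set $\Sigma_{k,k}(2)$ is finite and independent of $n$, and the exponents are integers independent of $n$. Hence $\Tr A_k(D)^p=\pi_p(D)$ for a fixed polynomial $\pi_p\in\mathbb Z[D]$, valid at every $n\geq1$ with no stable-range assumption.

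\textbf{Step 2: candidate power sums are polynomials too.} Put
\begin{equation*}
\sigma_p(D):=\sum_{a\in\mathfrak A_k}M_a(D)\,\theta_a(D)^p .
\end{equation*}
By \eqref{eqS:theta}, $\theta_a$ is a polynomial in $D$, since $F_r(2^{n+3-k})=F_r(8D/2^k)$. The multiplicities from Eqs.~\eqref{eqS:M-odd}--\eqref{eqS:M-zero} are polynomials with rational coefficients. The exceptional data are polynomials as well: $\theta^{\Exc}=(D+2^{k-2})\theta^{\Ord}_{k-1,r-1}$ and $M^{\Exc}=D^{-1}M^{\Ord}_{k-1,r-1}$. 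For the latter I must check that $D\mid M^{\Ord}_{k-1,r-1}$; this holds because $k-1$ is odd and \eqref{eqS:M-odd} carries an explicit factor $D$.

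By hypothesis, $\pi_p(2^n)=\sigma_p(2^n)$ for all $n\geq k$, which is infinitely many points. So $\pi_p\equiv\sigma_p$ as polynomials, and the identity holds at every physical $D=2^n$ and every $p\geq1$.

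\textbf{Step 3: recover the spectrum at each physical specialization.} Fix $n$. I will show three things:
\begin{itemize}
\item every $M_a(2^n)$ is a nonnegative integer;
\item every $\theta_a(2^n)$ with $M_a(2^n)>0$ is positive;
\item distinct labels carrying the same value are merged, as the statement's convention requires.
\end{itemize}
Positivity of $\theta$ follows from \eqref{eqS:theta}, because $F_r(x)>0$ for $x>0$: every factor $x+2^{1-j}$ is positive.

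For the multiplicities, I would factor each product and show it vanishes or is positive at $D=2^n$. A term $D^2-4^j$ with $j\geq n$ appears exactly when the product index reaches $j=n$, and the factor $D^2-4^n=0$ then kills the whole product. Every smaller term is positive. Integrality should follow by recognizing the stable-range expressions as differences of Burnside counts \eqref{eqS:a-odd}--\eqref{eqS:a-zero}. Those counts are fixed-point enumerator values, which are integers whenever $z$ is a power of two; I need this at all $n$, not only stable ones. I expect this step to be the main obstacle.

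With these in hand, both $A_k(D)$ and the candidate data are finite nonnegative-weight measures on $(0,\infty)$ whose $p$-th moments agree for all $p\geq1$. Equal moments identify the measures: a Vandermonde argument on the finitely many distinct positive values, or equivalently equality of the rational functions $\sum_a M_a/(1-\theta_a t)$. So the positive spectrum and its multiplicities coincide. The zero multiplicity \eqref{eqS:A-zero-multiplicity} is then $D^k$ minus the rank.

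\textbf{Step 4: transfer to the Gram matrix.} By the normalization ledger, $H_{n,k}$ shares its nonzero spectrum with $(N_n/Z_{n,k})A_k(D)$, and $H_{n,k}$ is $N_n\times N_n$. This gives \eqref{eqS:H-zero-multiplicity}.
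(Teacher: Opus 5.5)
Your proposal is correct and follows the paper's proof essentially step for step: polynomial trace moments from the semigroup and trace formulas; polynomial candidate moments, with the exceptional $D^{-1}$ cancelled by the explicit $D$ in the odd-$(k-1)$ multiplicity; agreement at infinitely many stable $D$; Burnside integrality; factorwise nonnegativity; positive atoms; and moment identification. The step you flag as the main obstacle is immediate, as in the paper, because $B_G(2^n)$ is the number of $G$-orbits on $\operatorname{Hom}(\F_2^n,V)$ for every $n$ with no stable-range input (for $4\mid k$ use $M^{\Ord}_{k,r}=(D^2-1)[B^{O,\delta}_{m-1}(D/2)-B^{O,\delta}_{m-2}(D/2)]$ with $D/2=2^{n-1}$), and in your nonnegativity check you should also note that the plus-tower factor $D-2^{m-1}$ vanishes at $m=n+1$ and is negative only for $m\geq n+2$, where the factor $D^2-4^n$ already kills the product.
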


\noindent\emph{Proof:} See Appendix~\ref{app:proof-continuation}.\par

\subsection{Finite-copy optimum and spectral consequences}

\noindent\emph{Proof of Theorem~\ref{thmS:finite-spectrum}:} See Appendix~\ref{app:proof-finite-spectrum}.\par

As consequences of the theorem, the zeroth positive mass is at most the
appropriate ambient dimension, by
Eqs.~\eqref{eqS:A-zero-multiplicity}--\eqref{eqS:H-zero-multiplicity}, and
the first Gram moment is exactly
\begin{equation}
 \sum_aM_a(n,k)\lambda_a(n,k)=N_n.
 \label{eqS:first-moment}
\end{equation}
Equation~\eqref{eqS:first-moment} is also obtained immediately from
$\Tr H_{n,k}=N_n$.

Applied to $|x\rangle^{\otimes k}$, Lemma~\ref{lemS:PGM} gives
\begin{equation}
 P^*_{n,k}=\left(\frac1{N_n}\Tr\sqrt{H_{n,k}}\right)^2.
 \label{eqS:P-spectrum}
\end{equation}

When $4\mid k$, Theorem~\ref{thmS:finite-spectrum} says that the complete
positive spectrum at the odd copy number $k-1$ is ordinary.
Lemma~\ref{lemS:shortening} and Eq.~\eqref{eqS:P-spectrum} therefore give
the exact amplitude identity
\begin{equation}
 \sum_r\frac{M^{\Exc}_{k,r}}{N_n}
 \sqrt{\lambda^{\Exc}_{k,r}}
 =D^{-1}\sqrt{P^*_{n,k-1}}\leq D^{-1}=2^{-n}.
 \label{eqS:exceptional-bound}
\end{equation}

\section{Fixed-offset limit}
\label{sec:limit}
Theorem~\ref{thmS:finite-spectrum}, now proved for every $n,k\geq1$,
supplies the finite spectrum needed for the critical window.  Fix
$c\in\mathbb Z$ and set $k=n+c$.  As $n$ grows, the number of sectors
also grows, so fixed-sector limits alone do not determine the full
spectral sum.  We derive these limits together with a summable bound
that is uniform for all sufficiently large $n$, including all residue
classes and the moving terminal layer.  Combined with the exceptional-sector bound from
Section~\ref{sec:finite}, these estimates justify passing the limit
through the sum and prove Theorem~\ref{thm:curve}.
We use $A,Q,R_c,w_r(c),\lambda_r(c)$ from
Eqs.~\eqref{eq:AQ}, \eqref{eq:weight}, \eqref{eq:lambda0},
\eqref{eq:lambdar}, and~\eqref{eq:Rc}.

The finite inputs are the ordinary multiplicities $M^{\Ord}_{k,r}(D)$
and Gram eigenvalues $\lambda^{\Ord}_{k,r}(n)$ from
Sections~\ref{sec:multiplicity} and~\ref{sec:ordinary}, with $D=2^n$.
Write $s=k-2r$ for the residual dimension, so $s=2m+1$ for odd $k$
and $s=2m$ for even $k$.  Section~\ref{sec:relations} defines the
stochastic orthogonal group $O_k^{\rm st}(2)$, its type sign $\delta$
in Eq.~\eqref{eqS:type}, and the normalization
$Z_{n,k}=D\prod_{j=0}^{k-2}(D+2^j)$ in
Eq.~\eqref{eqS:moment-GNW}.  The admissible sectors and terminal
conventions are in Section~\ref{sec:multiplicity}.  We also use the
polynomial shorthand $F_r$ in Eq.~\eqref{eq:Fr};
Section~\ref{sec:finite} supplies the separate exceptional-amplitude
bound~\eqref{eqS:exceptional-bound}.

\subsection{Fixed-sector limits and uniform tails}
\begin{lemma}[Termwise limits and uniform tails]
\label{lemS:limit-bounds}
For every fixed $c,r\in\mathbb Z$ with $r\geq0$, along every residue class
of $k=n+c$,
\begin{equation}
 \frac{M^{\Ord}_{k,r}(2^n)}{N_n}\longrightarrow w_r(c),
 \qquad
 \lambda^{\Ord}_{k,r}(n)\longrightarrow\lambda_r(c).
 \label{eqS:termwise-limit}
\end{equation}
Moreover, there is a finite constant $C_c$ such that, uniformly in all
sufficiently large $n$ and all admissible $r\geq0$,
\begin{align}
 \frac{M^{\Ord}_{k,r}}{N_n}\sqrt{\lambda^{\Ord}_{k,r}}
 &\leq C_c2^{-r^2},
 \label{eqS:amplitude-majorant}\\
 \frac{M^{\Ord}_{k,r}}{N_n}\lambda^{\Ord}_{k,r}
 &\leq C_c2^{-r^2}.
 \label{eqS:moment-majorant}
\end{align}
\end{lemma}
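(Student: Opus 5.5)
The plan is to work directly with the closed forms in Proposition~\ref{propS:multiplicity} and Eq.~\eqref{eqS:lambda-finite}, substitute $D=2^n$, $k=n+c$, $s=k-2r$, and write every factor as a power of two times a product of the form $\prod(1\pm 2^{-j})$. Once this is done, the termwise limits and the uniform bounds come from the same algebra.

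\textbf{Eigenvalues.} Start from
\begin{equation*}
\lambda^{\Ord}_{k,r}(n)=\frac{N_n|O_k^{\rm st}(2)|}{Z_{n,k}}F_r(2^{3-c}).
\end{equation*}
The factor $F_r(2^{3-c})$ depends only on $c$ and $r$, so only the prefactor needs to be handled. Take $N_n$ from Eq.~\eqref{eq:number}, $|O_k^{\rm st}(2)|$ from Eqs.~\eqref{eqS:orthogonal-order}--\eqref{eqS:stochastic-order}, and
\begin{equation*}
Z_{n,k}=D^{k}\prod_{j=0}^{k-2}(1+2^{j-n}).
\end{equation*}
Pull out the leading powers of two and check that the exponents cancel, leaving $2^{(c-1)(c-2)/2+1}$. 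This check has to be done separately in each residue class, but the resulting exponent should be the same in all of them.

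The remaining infinite products are handled as follows.
\begin{itemize}
\item $\prod_{j\le n}(1+2^{-j})$ from $N_n$ tends to $A$.
\item $\prod(1-4^{-j})$ from the orthogonal or symplectic order tends to $Q$.
\item The factor $\prod_{j=0}^{k-2}(1+2^{j-n})$, after reindexing $j-n=-i$, gives $\prod_{i=2-c}^{n}(1+2^{-i})$. This cancels against $A$ except for the finitely many terms that produce $R_c$ in Eq.~\eqref{eq:Rc}.
\item The factor $(2^m-\delta)$ and the parity factors $2$ or $2^{k-2}$ contribute only $1+O(2^{-m})$ corrections, and these vanish in the limit.
\end{itemize}
This gives $\lambda^{\Ord}_{k,r}(n)\to\lambda_r(c)$. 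It also gives a uniform bound
\begin{equation*}
\lambda^{\Ord}_{k,r}(n)\le C_c'\,F_r(2^{3-c}),
\end{equation*}
since all the product factors are bounded above and below by absolute constants.

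\textbf{Multiplicities.} Divide Eq.~\eqref{eqS:M-odd}, \eqref{eqS:M-two}, or \eqref{eqS:M-zero} by $N_n$. In each case the leading power of $D$ in the numerator is $D^{2m}$, up to a fixed shift depending on the parity class. The denominator $|O^\delta(2m,2)|$ or its analogue has size about $2^{2m^2}$.

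Write $2m\approx s=n+c-2r$. The exponent $n(2m)-2m^2-n^2/2$, together with the $N_n$ exponent, then collapses to $-(u_r^2+3u_r+4)/2$ with $u_r=2r-c$. This is the only place where the quadratic decay in $r$ enters.

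The subleading factors $(D\mp\delta2^{m-1})$ and $(D+2^{2m-1})$ behave as follows.
\begin{itemize}
\item Relative to $D$, the first is $1+O(2^{m-1-n})$, and this tends to $1$ because $m\approx(n-u_r)/2$.
\item The second gives $1+2^{2m-1-n}=1+2^{-u_r-2}$ up to parity bookkeeping. This is the factor $(1+2^{-u_r-2})$ in Eq.~\eqref{eq:weight}.
\item The products $\prod(1-4^{j-n})$ are bounded between $Q$ and $1$ and converge to $1$ pointwise in $r$.
\end{itemize}
Hence $M^{\Ord}_{k,r}/N_n\to w_r(c)$.

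\textbf{Uniform tails.} The same factorization gives, for all admissible $r$ and all large $n$,
\begin{equation*}
\frac{M^{\Ord}_{k,r}}{N_n}\le C\,2^{-(u_r^2+3u_r)/2}.
\end{equation*}
For terminal sectors, where $m$ is small, the convention $M=D$ or the vanishing factor $D^2-4^j$ at the endpoint is compared directly; only finitely many $m$-values near $0$ need individual checks.

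For the eigenvalue side,
\begin{equation*}
F_r(2^{3-c})\le C_c\,2^{r(r-1)}\prod_{j<r}(2^{3-c}+2^{1-j})\le C_c''\,2^{r^2+O_c(r)},
\end{equation*}
where the lower bound $\prod_j(2^j-1)\ge Q\,2^{r(r+1)/2}$ is used in the denominator.

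\textbf{Main obstacle.} The hardest step is combining these two bounds. The multiplicity factor decays like $2^{-2r^2+O_c(r)}$, because $u_r^2/2=2r^2-2rc+c^2/2$. The eigenvalue factor $\lambda^{\Ord}_{k,r}$ grows like
\begin{equation*}
2^{r(r-1)-r(r+1)/2+O(r)}=2^{r^2/2+O(r)}.
\end{equation*}
So the product with $\lambda$ behaves as $2^{-3r^2/2+O_c(r)}$, and the product with $\sqrt\lambda$ as $2^{-7r^2/4+O_c(r)}$. Both are at most $C_c2^{-r^2}$ after enlarging $C_c$ to absorb the linear terms.

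I expect the real difficulty to be the exact exponent accounting across the five parity and type cases of Eq.~\eqref{eqS:ranges}, together with the terminal layer where $m$ is $0$ or $1$. I would handle this first by proving one generic inequality valid for $m\ge2$, with explicit constants, and then verifying the endpoint cases by direct substitution.
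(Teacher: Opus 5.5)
Your proposal is correct and follows essentially the paper's own route. You extract powers of two exactly in each residue class, so that $\lambda^{\Ord}_{k,0}(n)\to\lambda_0(c)$ and $\lambda^{\Ord}_{k,r}/\lambda^{\Ord}_{k,0}=F_r(2^{3-c})$ holds exactly. You then combine the Gaussian bound $M^{\Ord}_{k,r}/N_n\lesssim 2^{-(u_r^2+3u_r+4)/2}$, with the $s=1,2$ endpoints checked directly, and the growth bound $F_r(2^{3-c})\lesssim 2^{r^2/2+O_c(r)}$ to get the $2^{-7r^2/4}$ and $2^{-3r^2/2}$ decay.

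Two slips need fixing. First, $\prod_{j=1}^{r}(2^j-1)\geq Q\,2^{r(r+1)/2}$ is false already at $r=1$, since $2Q\approx1.38>1$. The correct constant is $\prod_{j\geq1}(1-2^{-j})=Q/A$, as the paper uses. Second, your intermediate display $F_r\leq C2^{r^2+O_c(r)}$ is too weak for the moment bound: it only gives $2^{-r^2+O_c(r)}$, which is not dominated by $C_c2^{-r^2}$. The sharper $2^{r^2/2+O_c(r)}$ from your final paragraph is the bound that must be used.
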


\noindent\emph{Proof:} See Appendix~\ref{app:proof-limit-bounds}.\par

The amplitude bound controls the ordinary contribution to
$\sqrt{P^*_{n,k}}$; the moment bound controls the ordinary contribution
to the first spectral moment.  They permit
separate applications of dominated convergence to these two sums.

\subsection{The limiting curve and its normalization}
The exceptional first moment is $D^{-1}$ when $4\mid k$ and zero
otherwise, so the ordinary first-moment sum tends to one.  The moment
bound passes this normalization to the limit, giving
$\sum_rw_r(c)\lambda_r(c)=1$ as in Eq.~\eqref{eq:normalization}.
Together with the amplitude limit and the vanishing exceptional
amplitude, this proves the limiting curve and its normalization.

\noindent\emph{Proof of Theorem~\ref{thm:curve}:} See Appendix~\ref{app:proof-curve}.\par

The sum of the weights themselves is the total nonzero limiting
spectral mass and satisfies $\sum_r w_r(c)\leq1$; the remaining mass lies
at the zero eigenvalue.  At $c=0$, the positive spectral mass is
$0.1954251444668857\ldots$.  Appendix~\ref{app:spectral-mass}
gives the convergence and mass calculation.

\section{Tail asymptotics}
\label{sec:tails}
Section~\ref{sec:limit} determines $P_\infty(c)$ by taking $n\to\infty$
at each fixed integer $c$.  We now study the tails of this limiting
curve as $c\to-\infty$ and $c\to+\infty$.  The goal is to obtain the
leading decay laws and their exact constants, quantifying the
suppression of success below threshold and of failure above it.
The left tail is governed by the $r=0$ spectral amplitude.  On the right,
the exact first-moment identity removes the linear contribution at
leading order in the expansion of the square root, and the second-order
term determines the leading error.  The proof controls the remaining
spectral sectors before applying this expansion.

\begin{proposition}[Two tails]
\label{propS:tails}
As $m\to\infty$ through the integers,
\begin{align}
 P_\infty(-m)
 &=\frac{2^{-(m^2+3m+4)/2}}{AQ}
   [1+o(1)],
 \label{eqS:left-asymptotic}\\
 1-P_\infty(m)
 &=2^{-m}[1+o(1)].
 \label{eqS:right-asymptotic}
\end{align}
\end{proposition}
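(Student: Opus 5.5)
For both tails I will work only with the explicit series $\sqrt{P_\infty(c)}=\sum_{r\ge0}w_r(c)\sqrt{\lambda_r(c)}$ from Theorem~\ref{thm:curve} and the normalization~\eqref{eq:normalization}.

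\emph{Left tail.} Put $c=-m$, so $u_r=2r+m$.
\begin{itemize}
\item From \eqref{eq:weight}, $w_0(-m)=2^{-(m^2+3m+4)/2}(1+2^{-m-2})/(AQ)$.
\item Each further step in $r$ multiplies $w_r$ by roughly $2^{-(4r+2m+5)}$, i.e.\ by at least $2^{-2m}$ relative to the previous term.
\item For the eigenvalues, $F_r(2^{3+m})\le C\,2^{r(r-1)}2^{r(m+3)}$, so $\sqrt{\lambda_r/\lambda_0}\le C'2^{r(r+m+2)/2}$.
\end{itemize}
Comparing these, the ratio of the $r$-th to the $0$-th term is at most $2^{-r(3m/2+O(r))}\cdot$const. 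Hence the $r\ge1$ terms contribute $o(1)$ relative to the $r=0$ term, and
\[
P_\infty(-m)=w_0(-m)^2\lambda_0(-m)\,[1+o(1)].
\]
It then remains to evaluate $\lambda_0(-m)=2^{(m+1)(m+2)/2+1}Q R_{-m}$ with $R_{-m}=\prod_{j=1}^{m+1}(1+2^{-j})\to A$. This gives
\[
w_0^2\lambda_0\sim\frac{2^{-(m^2+3m+4)}\,2^{(m^2+3m+2)/2+1}\,QA}{A^2Q^2}=\frac{2^{-(m^2+3m+4)/2}}{AQ},
\]
which is \eqref{eqS:left-asymptotic}. The only real work here is checking that the exponent bookkeeping for $r\ge1$ is uniformly summable.

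\emph{Right tail.} Put $c=m$. The masses sit near $u_r=2r-m\approx0$, i.e.\ at $r\approx m/2$.

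Step 1: locate the spectral mass.
\begin{itemize}
\item $R_m=\prod_{j=0}^{m-2}(1+2^{j})^{-1}\approx 2^{-(m-1)(m-2)/2}\,/\,\prod_{i\ge1}(1+2^{-i})$, so $\lambda_0(m)\approx 2QA^{-1}\cdot\text{const}$.
\item For $x=2^{3-m}$ small, $F_r(x)\approx \frac{2^{r(r-1)}}{\prod_{j\le r}(2^j-1)}\prod_{j<r}(x+2^{1-j})$.
\item $\lambda_r$ is near $1$ exactly where the weights concentrate.
\end{itemize}

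Step 2: expand the amplitude. Let $\mu$ denote the probability-like measure $w_r$ (with the missing mass $1-W$ at zero) and write $x_r=\lambda_r$, so that $\sum_r w_r x_r=1$. Then
\[
\sum_r w_r\sqrt{x_r}=1-\tfrac12\Big[(1-W)+\sum_r w_r(\sqrt{x_r}-1)^2\Big],
\]
because $\sqrt x=1-\tfrac12(1-x)-\tfrac12(1-\sqrt x)^2$ and the linear term cancels against the normalization. Squaring,
\[
1-P_\infty(m)=(1-W)+\sum_r w_r(\sqrt{x_r}-1)^2+O(\text{square}).
\]
The plan is therefore to show:
\begin{itemize}
\item $1-W(m)\sim\alpha 2^{-m}$;
\item $\sum_r w_r(\sqrt{\lambda_r}-1)^2\sim\beta 2^{-m}$;
\item $\alpha+\beta=1$.
\end{itemize}

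Step 3: compute the constants. I would do this via an exact $q$-series evaluation of $\sum_r w_r(c)$ and $\sum_r w_r\lambda_r^{2}$, or $\lambda_r$-moments, expanded in $2^{-m}$.

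\emph{Main obstacle.} I expect the main difficulty to be this last identification of the constants. It requires writing $\sum_r w_r(m)$ and the second moment as theta/Rogers-type sums in $q=1/2$ and extracting the $2^{-m}$ coefficient. I would first try to recognize $\sum_r w_r\lambda_r^j$ as the limit of the finite moments $\Tr H^{j}/N_n$, using Theorem~\ref{thmS:finite-spectrum}, and then evaluate those moments as frame potentials of stabilizer states. For example, $\Tr H^2/N_n=\frac{1}{N_n}\sum_{x,y}|\langle x|y\rangle|^{2k}$, and since stabilizer overlaps are $2^{-d}$ with known counts, this gives an explicit convergent series whose $2^{-m}$ asymptotics is direct. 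The total nonzero mass $W$ is harder; here the exact first-order term from Theorem~\ref{thm:curve}'s explicit weights (dominant $u\in\{-1,0,1\}$ terms) should suffice.
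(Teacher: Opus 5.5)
\textbf{Left tail.} This is the paper's argument: isolate the $r=0$ amplitude, bound $a_r/a_0$ by $2^{-r(3m/2+O(r))}$, and use $R_{-m}\to A$. It is fine.

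\textbf{What works on the right.} Your exact identity $\sum_r w_r\sqrt{\lambda_r}=1-\tfrac12\bigl[(1-W)+\sum_r w_r(\sqrt{\lambda_r}-1)^2\bigr]$ is correct. It is a clean substitute for the paper's Taylor expansion. Your frame-potential idea for the second moment also works. One has $\Tr H_{n,k}^2/N_n=\sum_{d}\gbinom{n}{d}{2}2^{d(d+3)/2-dk}$. At $k=n+c$, dominated convergence plus Euler's identity give $\sum_r w_r(c)\lambda_r(c)^2=\prod_{j\ge0}(1+2^{1-c-j})$, which equals $1+2^{2-m}+O(4^{-m})$ at $c=m$. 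You must extend the paper's majorant to $\lambda^2$; the exceptional part is $D^{-1}\Tr H_{n,k-1}^2/N_n\to0$.

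\textbf{First gap: the mass $W$.} The gap sits exactly where you flagged the main obstacle. You propose to get $W(m)=\sum_r w_r(m)$ from the dominant terms $u\in\{-1,0,1\}$. This cannot work: $w_r(c)=\omega_{u_r}$ depends on $m$ only through the parity of $u$ and the cutoff $u\ge-m$. The central terms are $m$-independent constants and carry no $2^{-m}$ information. The missing idea is the exact identity $\sum_{u\equiv p\pmod 2}\omega_u=1$ for both parities. The paper proves it from the Jacobi triple product, using $\Theta(1/2)=AQ$ and the functional equations of $\Theta$. It could also be extracted from $\sum_r w_r\lambda_r=1$ once $\lambda_r(m)\to1$ is proved with a dominating bound. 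It gives $1-W(m)=\sum_{u<-m}\omega_u$, which is super-exponentially small. So your $\alpha$ is $0$, not a constant to be matched, and $\beta$ must equal $1$. This precision is not optional. Since $\sum_r w_r(\lambda_r-1)^2=\sum_r w_r\lambda_r^2-2+W$, any error of order $2^{-m}$ in $W$ changes the leading constant.

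\textbf{Second gap: from $\lambda$ to $\sqrt{\lambda}$.} Suppose you have $1-W=o(2^{-m})$ and $\sum_r w_r(\lambda_r-1)^2=4\cdot2^{-m}+o(2^{-m})$. Second moments of $\lambda_r$ alone only give $\sum_r w_r(\sqrt{\lambda_r}-1)^2\le\sum_r w_r(\lambda_r-1)^2$, which overshoots the true value by a factor of four. To replace $(\sqrt\lambda-1)^2$ by $(\lambda-1)^2/4$, you must show the variance is carried by sectors with $\lambda_r$ close to $1$. Your proposal states this only heuristically. The paper proves it from the exact product $\lambda_r(m)=\frac{A}{A_{m-2}}\frac{B_\infty}{B_r}\prod_{j=0}^{r-1}(1+2^{j+2-m})$:
\begin{itemize}
\item for $|u|\le m/4$, this gives $\lambda_r(m)=1+2^{-m/2}L_u+O\bigl(2^{-m}(1+2^{|u|})\bigr)$ with $L_u=4\cdot2^{u/2}-2^{-u/2}$;
\item the global bound $\lambda_r(m)\le2A^2\,2^{K(K+3)/2}$, $K=(r+1-m)_+$, together with the Gaussian decay of $\omega_u$, makes the range $|u|>m/4$ contribute $o(2^{-m})$.
\end{itemize}
A bound $\sum_r w_r|\lambda_r-1|^3=o(2^{-m})$ would also suffice, but it needs a higher trace moment your plan does not compute. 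With both ingredients, your identity gives $1-P_\infty(m)=\tfrac14\cdot4\cdot2^{-m}(1+o(1))$, which is the stated law.
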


\noindent\emph{Proof:} See Appendix~\ref{app:proof-tails}.\par

\section{Numerical study of the critical window}
\label{sec:numerics}
Sections~\ref{sec:finite}--\ref{sec:tails} establish the exact finite
probability, the fixed-offset curve, and its tail laws.  We now compare
the finite and limiting probabilities across the critical window to
resolve finite-size corrections, and separately simulate the
success/failure marginal of the ideal PGM to display sampling
fluctuations.  An independent frame calculation supplies the sampling
probabilities and checks the finite spectral formula.  These
computations illustrate the approach to the proven limit and
distinguish it from the sampling noise at a fixed trial count.
The calculations use
\begin{equation*}
 \begin{gathered}
 n\in\{4,8,12,16,24,32,64,128,256\},\\
 c=-3,-2,\ldots,7,
 \end{gathered}
\end{equation*}
with $k=n+c$.  These 99 parameter pairs include every residue class of
$k$ modulo four and hence test both the ordinary and exceptional branches.
All computations concern the uniform ensemble of pure stabilizer states
and recovery of the complete label under an ideal collective measurement,
as in Eq.~\eqref{eq:operational}.

\subsection{Finite probabilities and an independent frame calculation}
The finite theoretical curves are evaluated from
Eq.~\eqref{eqS:finite-success}.  Ensemble sizes, sector multiplicities,
and eigenvalues are first computed with exact integer and rational
arithmetic.  Decimal arithmetic enters when taking square roots and
summing the spectral amplitudes.  The dashed curves use
Eq.~\eqref{eq:curve}, with the products and series truncated as specified
below.

To obtain the probabilities used for sampling, we also compute the PGM
success directly from a symmetry restriction of the unnormalized frame
$S_{n,k}=\sum_x\rho_x^{\otimes k}$ in
Eq.~\eqref{eqS:gram-frame}.  This calculation uses affine supports and
binary-code counts, without importing the closed eigenvalues or
multiplicities of Theorem~\ref{thmS:finite-spectrum}.
Diagonal Clifford phases select binary row codes whose codewords have
weight divisible by four.  Linear changes of the $n$ computational
coordinates and stochastic orthogonal changes of the $k$ copy
coordinates then group the orbit sums by the code dimension $r\leq n$
and by whether the code contains $\one$.
Normalized orbit sums give a real symmetric matrix $B_{n,k}$ on an
invariant subspace containing the seed $|0^n\rangle^{\otimes k}$.

The matrix entries are assembled by counting affine supports and pairs
of row codes.  For two computational-basis $k$-tuples, let $a$ be the
affine dimension of their combined points in $\F_2^n$.
After the phase selection, their frame entry has weight
\begin{equation}
 W_a(n,k)=\sum_{m=a}^{n}
 \gbinom{n-a}{m-a}{2}\,2^{m(m+3)/2-mk}.
 \label{eq:num-support-weight}
\end{equation}
Here the Gaussian binomial counts the $m$-dimensional affine supports
containing that affine span; the remaining factor counts the stabilizer
phase choices and includes the $k$-copy amplitude normalization.
The implementation combines Eq.~\eqref{eq:num-support-weight} with exact
counts of the code pairs and normalizes the orbit sums before
diagonalization.  Across the grid, the reduced matrices have dimensions
between 1 and 258.

Let $B_{n,k}v_j=\eta_jv_j$, with orthonormal eigenvectors, and let
$v_j(0)$ denote the component along the normalized seed orbit.
The frame computation returns
\begin{equation}
 q_{n,c}=\left[\sum_{\eta_j>0}
             \frac{|v_j(0)|^2}{\sqrt{\eta_j}}\right]^2,
 \qquad k=n+c.
 \label{eq:num-frame-success}
\end{equation}
This is the squared diagonal matrix element of $S_{n,k}^{-1/2}$;
the inverse is restricted to its support.  By
Lemma~\ref{lemS:PGM}, it equals the optimal conditional and average
success probability.  The use of the unnormalized frame is essential:
Eq.~\eqref{eq:num-frame-success} requires no additional factor of $N_n$.

The recorded diagonalizations use 100--2606 decimal digits, depending on
the matrix.  At working precision $d$, eigenvalues below
$\eta_{\max}10^{-d+15}$ are discarded, and the projection of the seed
onto the retained eigenspaces is checked.  The largest recorded seed
projection error and relative eigenvalue-equation residual are
$8.6\times10^{-101}$ and $4.6\times10^{-101}$, respectively.
The saved frame probabilities agree with the finite spectral formula
to within $5\times10^{-71}$ at all 99 points.  This comparison checks two
computational constructions; the analytical justification remains the
preceding theorems.

\smallskip
\noindent\emph{Code and data availability.}
The versioned computational supplement is supplied in
\path{anc/critical_curve_supplement/} with this submission.  It provides the independent frame construction, saved high-precision
probabilities and diagnostics, all 99,000 binary trial outcomes, and the
figure programs~\cite{ComputationalSupplement}.
From that supplement directory, \texttt{python recompute\_frame.py}
rebuilds the reduced frame matrices from exact combinatorial counts and
diagonalizes them, without using stored probabilities or the closed
spectrum.  In contrast, \texttt{python reproduce.py --resample}, with a
new directory specified by \texttt{--output-dir}, regenerates the fixed-seed Bernoulli trials from the packaged frame
probabilities and redraws the figures; it does not recompute a frame
matrix or automatically read newly recomputed results.

The file \path{records/frame_diagnostics.csv} and the saved frame records
give per-point working precisions and residuals; the recomputation script
specifies the precision policy.
For the logarithmic comparison, the \texttt{--recompute-log} option to
\texttt{reproduce.py} rebuilds the differences using saved 90-digit probabilities for $n<256$,
reevaluates the $n=256$ formulas at 160 digits, and repeats three checks
at 200 digits.  The accompanying \path{precision_checks.json} records
these checks.
The English \path{REPRODUCIBILITY_GUIDE.pdf} gives installation instructions,
output locations, and the scope of each command.
Release-wide SHA-256 checksums and file sizes are recorded in
\path{release_manifest.json} and checked by
\texttt{python verify\_release.py}.  The supplement also records both the
original and release-validation environments, and preserves the historical
checksum manifests and the pre-release discrepancy investigation.

\subsection{Sampling procedure and confidence intervals}
At each parameter pair we generate $M=1000$ independent Bernoulli
outcomes with success probability $q_{n,c}$ from
Eq.~\eqref{eq:num-frame-success}, giving 99,000 binary outcomes in total.
Writing $s_{n,c}$ for the success count, the plotted frequency is
$\widehat p_{n,c}=s_{n,c}/M$.  Each point uses the seed tuple
$(20260920,n,k)$, passed to NumPy's \texttt{SeedSequence} for its
PCG64 generator.
All outcomes are retained, with no seed selection based on agreement
with theory.  The high-precision frame probability is converted to
floating point for Bernoulli sampling; the finite spectral formula is
used only as a comparator.
These trials simulate the success/failure marginal of the ideal PGM.
They do not construct a measurement circuit or sample its full output
label, and therefore provide no implementation-complexity claim.

The error bars are equal-tailed, pointwise 95\% Clopper--Pearson
intervals.  For a success count $s$, let $Q_{a,b}(z)$ be the
$z$-quantile of a beta distribution with parameters $a,b$.  The endpoints
are
\begin{align*}
 L(s)&=\begin{cases}
 0,&s=0,\\
 Q_{s,M-s+1}(0.025),&s>0,
 \end{cases}\\
 U(s)&=\begin{cases}
 1,&s=M,\\
 Q_{s+1,M-s}(0.975),&s<M.
 \end{cases}
\end{align*}
In the lower strips of Fig.~\ref{fig:num-sampling}, the signed residual
and its interval are
\begin{align}
 \Delta_{n,c}&=100(\widehat p_{n,c}-P^*_{n,n+c}),\notag\\
 I_{n,c}&=100\bigl([L(s_{n,c}),U(s_{n,c})]-P^*_{n,n+c}\bigr).
 \label{eq:num-residual}
\end{align}
Thus Eq.~\eqref{eq:num-residual} measures percentage points, rather
than relative percentage error.  The common vertical scale preserves
the sign and size of the fluctuations across panels.
The reported root mean square error is
\begin{equation*}
 R_n=100\left[\frac1{11}\sum_{c=-3}^{7}
              (\widehat p_{n,c}-q_{n,c})^2\right]^{1/2}
\end{equation*}
in percentage points.  It ranges from 0.52 to 0.91 across the nine
panels.  In this realization, 97 of the 99 intervals contain the
corresponding finite theoretical probability.  The intervals are
pointwise; they do not constitute a simultaneous 95\% confidence band.

\begin{figure*}[!tp]
 \centering
 \includegraphics[width=\textwidth,trim=0 67bp 0 0,clip]
 {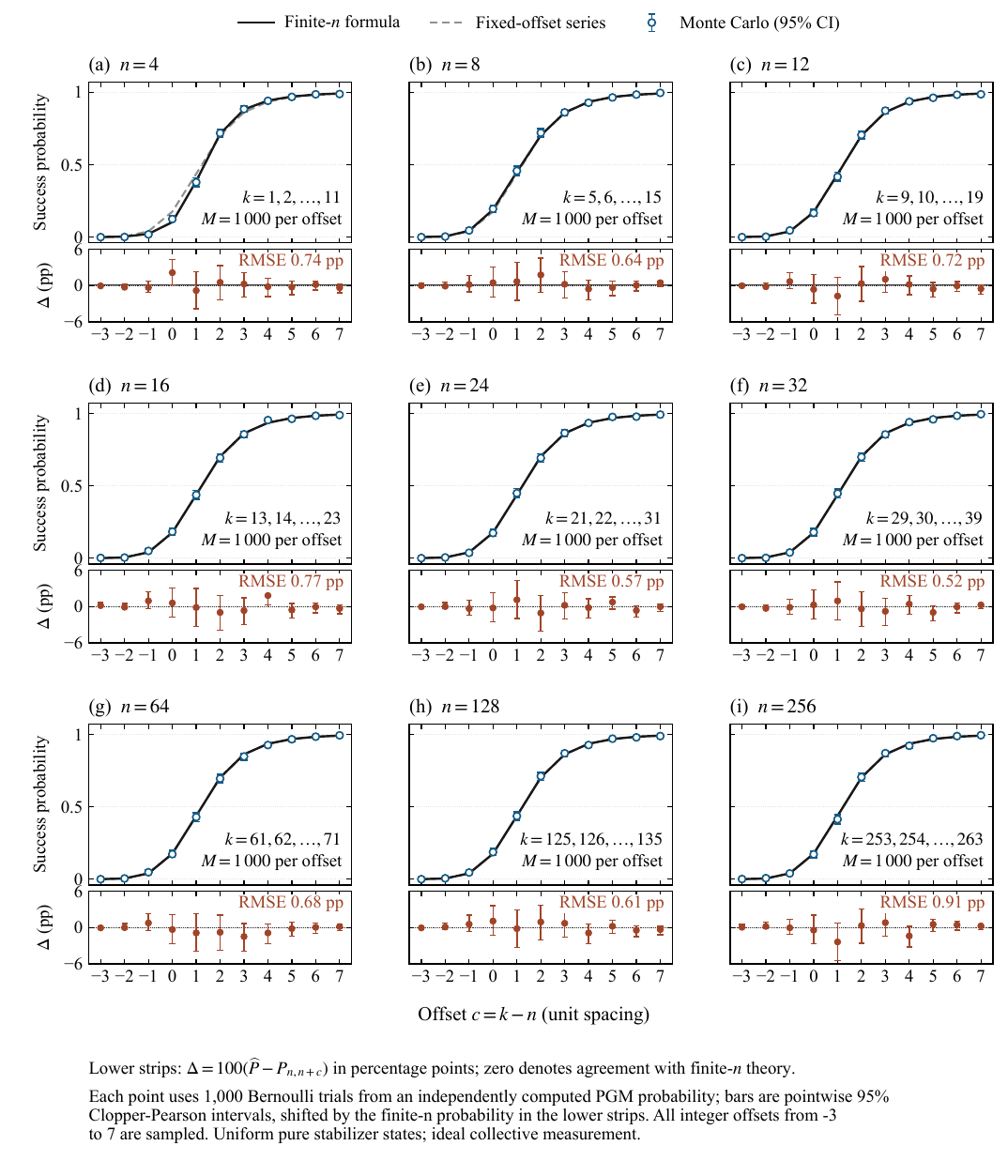}
 \caption{Sampling in the critical window $k=n+c$.  Each panel shows
 $P^*_{n,n+c}$ from the finite spectrum (solid), $P_\infty(c)$ from
 Eq.~\eqref{eq:curve} (dashed), and sample frequencies from 1000 Bernoulli
 trials per integer offset using an independently computed PGM frame
 probability (circles).
 Bars are pointwise 95\% Clopper--Pearson intervals.  Lower strips show
 the signed residual $\Delta_{n,c}$ and its translated interval in
 percentage points (pp); the displayed RMSE is $R_n$.
 Lines connect integer offsets and guide the eye.}
 \label{fig:num-sampling}
\end{figure*}

\subsection{Resolving finite-size corrections}
The sampling residual in Eq.~\eqref{eq:num-residual} is distinct from
the deterministic finite-size correction
\begin{equation}
 d_{n,c}:=P^*_{n,n+c}-P_\infty(c).
 \label{eq:num-finite-deviation}
\end{equation}
Figure~\ref{fig:num-log} compares two logarithmic displays.  A logarithmic
probability axis resolves the low-success side of the crossover, but
large-$n$ curves still overlap.  Plotting $|d_{n,c}|$ instead resolves
their separation from the limiting curve.  The absolute value suppresses
the direction of the correction; the accompanying data retain the
signed differences in Eq.~\eqref{eq:num-finite-deviation}.

For $n<256$, the finite and limiting probabilities are evaluated to
90 decimal digits, retaining $r=0,\ldots,100$ in the limiting series
and $j=1,\ldots,360$ in the products $A,Q$.
For $n=256$, the logarithmic comparison uses 160 digits,
$r=0,\ldots,130$, and products through $j=640$.
The products are accumulated with 12 additional guard digits.
The differences are formed in decimal arithmetic before conversion to
plotting precision.  This order matters: at $n=256$ and
$c=-2,2,6$, the deviations are of order $10^{-79}$--$10^{-78}$,
below the precision of the saved 70-digit frame probabilities.
These three differences were recomputed at 200 digits, with
$r=0,\ldots,150$ and products through $j=800$; the relative changes
were below $4\times10^{-81}$.  These are precision-stability checks,
rather than outward-rounded numerical enclosures.

\begin{figure*}[!tp]
 \centering
 \includegraphics[width=\textwidth,trim=0 33bp 0 0,clip]
 {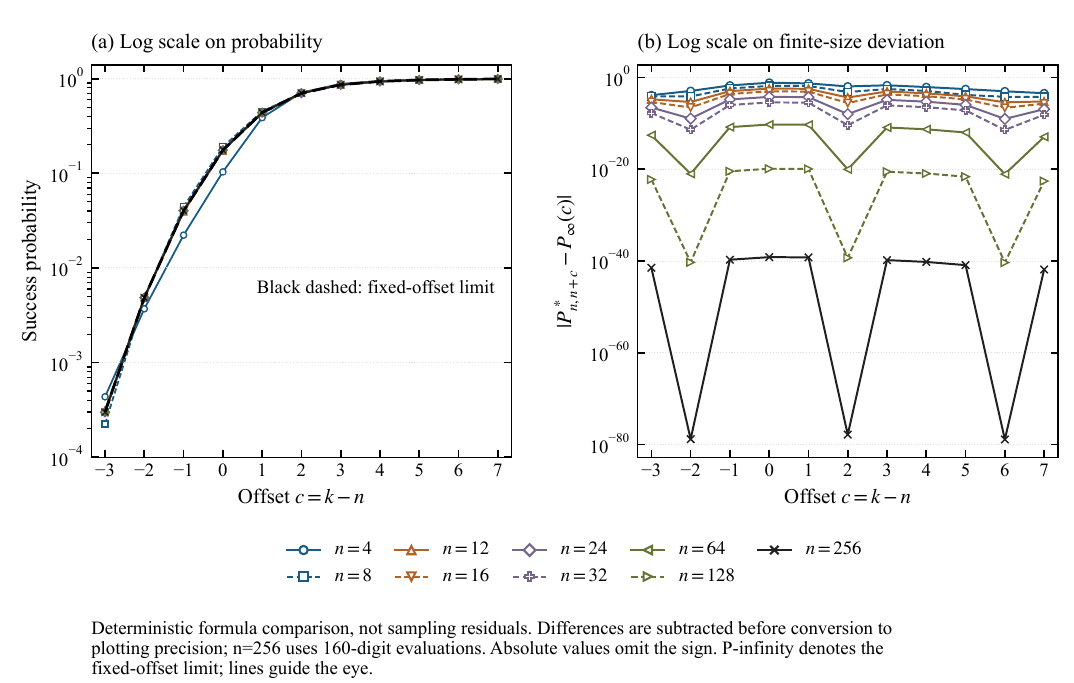}
 \caption{Deterministic comparison of the finite probabilities and the
 fixed-offset limit on the same grid as Fig.~\ref{fig:num-sampling}.
 (a) Success probabilities on a logarithmic axis; the black dashed line
 is $P_\infty(c)$.  (b) Absolute finite-size corrections
 $|P^*_{n,n+c}-P_\infty(c)|$ on a logarithmic axis, computed before
 conversion to plotting precision. The plotted differences come from high-precision
 formula evaluations.  Lines connect integer offsets and guide the eye.}
 \label{fig:num-log}
\end{figure*}

Table~\ref{tab:num-center} gives a numerical cross-section at $c=0$.
The finite probabilities need not approach the limit from one side:
$n=4,12$ lie below $P_\infty(0)$, whereas $n=8,16$ lie above it.
For $n\geq32$, the finite-size correction is already much smaller than
the sampling fluctuations at $M=1000$.  In particular, at the limiting
central probability the Bernoulli standard error is
\begin{equation*}
 \sqrt{\frac{P_\infty(0)(1-P_\infty(0))}{1000}}
 \simeq0.0120,
\end{equation*}
or about 1.20 percentage points, compared with
$|d_{32,0}|\simeq3.64\times10^{-6}$.
The Monte Carlo panels illustrate the observable sampling scale,
while the deterministic comparison resolves the much smaller
finite-size corrections.  The existence and value of the fixed-offset
limit follow from Theorem~\ref{thm:curve}.

\begin{table}[t]
 \centering
 \caption{Central probability and finite-size correction at $c=0$.
 The last column is the sampling RMSE over all eleven offsets, in
 percentage points.  Probabilities are rounded to ten decimal places.}
 \label{tab:num-center}
 \begin{tabular}{rccc}
 \toprule
 $n$ & $P^*_{n,n}$ & $|d_{n,0}|$ & $R_n$ (pp)\\
 \midrule
 4   & 0.1035661492 & $7.252\times10^{-2}$  & 0.74\\
 8   & 0.1902706150 & $1.419\times10^{-2}$  & 0.64\\
 12  & 0.1723097886 & $3.774\times10^{-3}$  & 0.72\\
 16  & 0.1770128693 & $9.291\times10^{-4}$  & 0.77\\
 24  & 0.1761420253 & $5.824\times10^{-5}$  & 0.57\\
 32  & 0.1760874300 & $3.640\times10^{-6}$  & 0.52\\
 64  & 0.1760837896 & $5.555\times10^{-11}$ & 0.68\\
 128 & 0.1760837896 & $1.293\times10^{-20}$ & 0.61\\
 256 & 0.1760837896 & $7.011\times10^{-40}$ & 0.91\\
 \bottomrule
 \end{tabular}
\end{table}


\section{Conclusion}
\label{sec:conclusion}

We have determined the exact finite-copy optimum for identifying a uniformly distributed pure $n$-qubit stabilizer state from $k$ identical copies under arbitrary collective measurements.  The pretty-good measurement is optimal for every finite pair $(n,k)$, including the linearly dependent regime, and its success probability is obtained from the complete positive spectrum of the stabilizer Gram matrix.  The spectral decomposition includes both the ordinary defect sectors and the exceptional all-ones sectors that occur when $4\mid k$; sectors of zero multiplicity at a given physical dimension are removed only after specialization.  Thus the result is not restricted to the stable commutant range and gives an exact formula at every physical dimension $D=2^n$.

The proof separates the spectral problem into several structural components.  Clifford tensor-power duality expresses the averaged frame operator in the algebra of stochastic orthogonal relations.  A defect filtration then identifies the physical graded sectors and separates the ordinary and exceptional branches.  Mixed stabilizer moments and weighted-flag inversion determine the ordinary eigenvalues, while shortening reduces the exceptional branch to the ordinary spectrum at one fewer copy.  Invariant-code counting gives the corresponding multiplicities.  Finally, polynomial continuation of the finite candidate spectrum extends the stable-range calculation to all physical dimensions without discarding labels prematurely.  This separation is important because the eigenvalue formulas, the multiplicities, and the passage from the stable range to arbitrary $n$ require logically distinct arguments.

As a consequence of the finite spectrum, the full integer-scale critical window can be resolved.  For $k=n+c$ with fixed $c\in\mathbb Z$, the optimal success probability converges to the explicit series $P_\infty(c)$.  In particular, the crossover is genuinely nontrivial at unit offsets: it is neither a step function nor a consequence of order-level copy bounds alone.  The limiting curve is normalized by its first spectral moment, and the exchange of the finite-sector limit with the infinite sum follows from uniform tail control.  The exceptional contribution vanishes in this limit, although it remains necessary for the exact finite-copy spectrum.

The two tails of the limiting curve exhibit different but explicit behavior.  Below threshold, the success probability has a quadratic-exponential decay governed by the dominant left-tail sector.  Above threshold, the failure probability satisfies
\[
  1-P_\infty(c)\sim 2^{-c},
  \qquad c\to+\infty,
\]
with leading coefficient one.  Hence the exact curve refines the statement that the sample complexity is $n+O(1)$: it specifies both the finite offset dependence and the asymptotic cost of increasing the target success probability within the critical window.

The present work concerns the uniform ensemble of pure qubit stabilizer states and unrestricted collective measurements.  It does not by itself provide a restricted-depth, few-copy, or otherwise implementation-constrained realization of the optimal measurement.  Such constraints define a different optimization problem and need not preserve the exact spectrum or the PGM optimality proved here.  Natural further questions include determining which parts of the defect-filtration method persist for nonuniform priors, noisy stabilizer ensembles, restricted measurement architectures, and odd-prime qudit stabilizer states.  The finite-dimensional continuation argument may also be useful more broadly when a commutant calculation is first available only in a stable range but the physically relevant dimensions extend beyond it.

\section*{Declaration of AI Assistance}
AI-assisted tools were used extensively during proof exploration, technical derivation, computational checking, and manuscript preparation. The human authors reviewed the mathematical statements and proofs included in the manuscript and take responsibility for the content of the paper.

\clearpage
\onecolumn
\twocolumn
\appendices
\renewcommand{\thesubsection}{\thesection.\Roman{subsection}}
\renewcommand{\thesubsectiondis}{\appendixname~\thesubsection.}

\section{Optimal measurement and minimax success}
\label{app:measurement}

\subsection{Proof of Lemma~\ref{lemS:PGM}}
\label{app:proof-pgm}

\begin{proof}
Put $S=\sum_x|\phi_x\rangle\!\langle\phi_x|$ and
$K=\ran S$.  All inverses in this proof are taken on $K$.  The PGM vectors
are $|\mu_x\rangle=S^{-1/2}|\phi_x\rangle$, and
\begin{equation}
 \sum_x|\mu_x\rangle\!\langle\mu_x|=\Pi_K.
 \label{eqS:PGMresolution}
\end{equation}
The covariance of $S^{-1/2}$ implies that
\begin{equation}
 a_x:=\langle\phi_x|S^{-1/2}|\phi_x\rangle=a
 \label{eqS:a-constant}
\end{equation}
is independent of $x$.  Taking a trace gives
$Na=\Tr(S^{-1/2}S)=\Tr\sqrt S$.  The dual operator associated with the PGM
is
\begin{equation}
 \Gamma:=\frac1N\sum_x
 |\phi_x\rangle\!\langle\phi_x|
 |\mu_x\rangle\!\langle\mu_x|
 =\frac{a}{N}S^{1/2}.
 \label{eqS:dual}
\end{equation}
For a positive definite $A$ on $K$, the rank-one criterion
$A\geq|v\rangle\!\langle v|$ is equivalent to
$\langle v|A^{-1}|v\rangle\leq1$.  Applying it with
$A=aS^{1/2}$ gives equality:
\begin{equation}
 \langle\phi_x|(aS^{1/2})^{-1}|\phi_x\rangle
 =a^{-1}\langle\phi_x|S^{-1/2}|\phi_x\rangle=1.
 \label{eqS:rankone}
\end{equation}
Consequently $\Gamma-N^{-1}|\phi_x\rangle\!\langle\phi_x|\geq0$ for every
$x$.  Together with Eq.~\eqref{eqS:dual}, these are the
Yuen--Kennedy--Lax optimality conditions
\cite{YuenKennedyLax1975,EldarMegretskiVerghese2004}.  Thus the PGM is
optimal.  Finally, polar decomposition gives
$\langle\mu_x|\phi_y\rangle=(\sqrt H)_{xy}$, and transitivity makes the
diagonal entries of $\sqrt H$ equal.  Therefore
$P_{\rm PGM}=(\Tr\sqrt H/N)^2$.
\end{proof}

\smallskip\noindent\emph{Minimax consequence.}\phantomsection
\label{app:minimax}

The same measurement also optimizes the worst-case conditional success
probability for a transitive pure-state orbit.  Indeed, every PGM outcome
has conditional success $a^2$, so its worst-case and average success agree.
For any other POVM, the worst-case success is at most its uniform average,
which is at most the optimal average in Lemma~\ref{lemS:PGM}.  Thus the
uniform-prior value used here is also the unrestricted minimax value.

\section{Physical code sectors}
\label{app:filtration}

\subsection{Proof of Lemma~\ref{lemS:orbit-tail}}
\label{app:proof-orbit-tail}

\begin{proof}
The code isometry of Lemma III.8 of
Ref.~\cite{MontealegreGross2025} identifies $C_N$ with the $s$-fold
mixed-signature model for $(Q,q_N)$; Lemma III.11 identifies the quotient
action with its stochastic group $G_N$.  Since $s\leq k\leq n$, Theorem
V.6 and Lemma V.18 of that reference apply inside this quotient and give
the actual orthogonal Clifford-module decomposition
\begin{equation}
 C_N^{G_N}\simeq
 \bigoplus_{[L]}\eta_{Q,L}(\bm1),
 \label{eqS:quotient-invariant-decomposition}
\end{equation}
where $[L]$ runs once over the ordinary and exceptional stochastic
isotropic-defect orbits of $Q$.  This is an actual orthogonal decomposition
of represented subspaces of $C_N$, not a relation-algebra dimension count.

We identify the entire tail in Eq.~\eqref{eqS:quotient-invariant-decomposition}.
Whenever it is nonempty, choose an ordinary isotropic line
$\ell<Q$ and form $Q'=\ell^\perp/\ell$, of dimension $s-2$.
Inverse image under $\ell^\perp\to Q'$ sends a defect $L'<Q'$ to a
defect $\widetilde L<Q$ containing $\ell$.  Conversely, every nonzero
ordinary defect orbit of $Q$ has a representative containing $\ell$, and
every exceptional orbit of dimension at least two has one as well:
apply the single-orbit/Witt-extension Lemma III.4 of the same reference
to an ordinary line in that defect.  That lemma says precisely that orbit
classes are determined by dimension and by containment of the Wu vector.
Thus quotient and inverse image are mutual on orbit classes, preserve Wu
containment, and give the
orbit bijection
\begin{equation}
 \begin{aligned}
 \{\text{defect orbits of }&Q'\} \longleftrightarrow\{\text{defect orbits of }Q\}
   \setminus\mathcal E_Q,\\
 &\mathcal E_Q:=\begin{cases}
 \{[0],[\langle[\one_k]_N\rangle]\},&4\mid k,\\
 \{[0]\},&4\nmid k.
 \end{cases}
 \end{aligned}
 \label{eqS:defect-orbit-tail-bijection}
\end{equation}
The excluded Wu line is isotropic exactly when
$q_N([\one_k]_N)=q_k(\one_k)=k=0\pmod4$.

For each corresponding pair, let
$\pi_N:N^\perp\to Q$, set
$N_\ell:=\pi_N^{-1}(\ell)$, and set
$N_L:=\pi_N^{-1}(\widetilde L)$.  Then
\begin{equation}
 C_{N_L}\subset C_{N_\ell}\subset C_N,\qquad
 N_L^\perp/N_L
 \simeq(\widetilde L^\perp/\widetilde L)
 \simeq(L'^\perp/L').
 \label{eqS:double-quotient-code}
\end{equation}
We use the Clifford rank convention of Sec.~IV.1 of
Ref.~\cite{MontealegreGross2025}: rank is the largest rank of a symmetric
weight form for the diagonal Clifford subgroup.  In a quotient code, a
computational label $f$ has weight form $f^*b_Q$, the pullback of the
quotient polar form.  Write $\Delta^{(j)}$ for the rank-$j$ Clifford
component.
Apply Lemma III.8 first to the ordinary defect $N$ and again to the
ordinary defect $N_\ell$.  Under these two code isometries, the single
physical subcode $C_{N_L}$ is, respectively, the quotient code
$C_{\widetilde L}^{(Q)}$ and the quotient code $C_{L'}^{(Q')}$.
This can also be read directly from the CSS coset
basis~\eqref{eqS:CSS-coset}: the direct and iterated embeddings are both
the normalized uniform sum over
$\operatorname{Hom}(\F_2^n,N_L)$.  Lemma V.10 now identifies, in both
descriptions, the same intrinsic maximal-rank subspace
\begin{equation}
 C_{N_L}\cap
 \Delta^{(\dim(N_L^\perp/N_L))}.
 \label{eqS:same-maximal-rank-subspace}
\end{equation}
The trivial residual stochastic isotype in Theorem V.1, and its induced
placement in Theorem V.14, therefore identify the paired modules
$\eta_{Q,\widetilde L}(\bm1)\simeq\eta_{Q',L'}(\bm1)$.
Summing the right side of
Eq.~\eqref{eqS:defect-orbit-tail-bijection} and applying
Eq.~\eqref{eqS:quotient-invariant-decomposition} to $Q'$ gives exactly
$a_{s-2}(D)$.  If no ordinary line exists, both tails are empty and this
is the stated terminal convention.  Separating the one or two head
orbits proves Eq.~\eqref{eqS:orbit-tail-recursion}.
\end{proof}

\subsection{Proof of Proposition~\ref{propS:filtration-bridge}}
\label{app:proof-filtration-bridge}

\begin{proof}
The cited decomposition is applied directly to represented subspaces of the
physical tensor-power Hilbert space.  The complete logical chain is as
follows.

\begin{enumerate}
\item \emph{Local filtration and direct sums.}
Here is the precise dictionary with Ref.~\cite{MontealegreGross2025}.  Its
quadratic space $T$ is our $(E_k,q_k,\one_k)$, its
$\operatorname{St}(T)$ is $G$, and its $C_N$, $\mathcal K_r$, and
$\mathcal L_r$ are, respectively, our $C_N$, $\mathcal K_{k,r}$, and
$\mathcal L_{k,r}$.  Its $C'_r$ and $D_r$ are sums of code projectors whose
ranges are our $\mathcal F'_r$ and $\mathcal F_r^{\Exc}$.  With these
projector-sum conventions, the individual-projector identity above gives
\begin{equation}
 P_{\one}C'_r=2^rD_{r+1}.
 \label{eqS:projector-sum-normalization}
\end{equation}
Indeed, each exceptional $(r+1)$-space has $2^r$ ordinary hyperplanes
missing $\one_k$, the graphs of linear maps from a fixed complement to
$\langle\one_k\rangle$.  This supplies the scalar missing from identity
(31) of that reference.  Taking ranges gives the required unchanged relation
$P_{\one}\mathcal F'_r=\mathcal F_{r+1}^{\Exc}$.
Lemma V.7 supplies ideal/kernel invariance on the associated grades;
Lemma V.9 gives local spanning, using the represented defect ideals and
compressed action of Proposition III.5 and Lemma III.6; Lemma V.10 identifies
each local grade with the maximal-rank part of the actual code $C_N$; and
Lemma V.12 proves that these local subspaces are linearly independent.
Spanning plus that independence gives exactly the global represented direct sums
in Eq.~\eqref{eqS:local-grades}; pairwise inequivalence is supplied by the
induced-module theorem used in the next step, not inferred from dimension
counting alone.

\item \emph{Induced stochastic decomposition.}
The quotient action $P_N\to G_N$ is the surjection of Lemma III.11 of that
reference.  With this action, Theorem V.14 proves the individual induced-grade
formula~\eqref{eqS:induced-grade} when its residual-rank hypothesis
$s=k-2r<n$ holds, including pairwise inequivalence of the displayed Clifford
modules.  Theorem V.6 assembles those grades into the full represented
decomposition under the hypothesis $k\leq n$, which is precisely our stable
range; its stated range also includes the remaining case
$s=n$ (necessarily $n=k$ and $r=0$).  Both results use the present stochastic
quadratic space $(E_k,q_k,\one_k)$, and Theorem V.6 likewise includes the
exceptional filtration when $4\mid k$.

\item \emph{Fixed parts are represented modules.}
Frobenius reciprocity gives
\begin{equation}
 \begin{aligned}
 \dim\!\left[\operatorname{Ind}_{P_N}^{G}(\tau)\right]^G
 &=\dim\operatorname{Hom}_{P_N}(\bm1,\tau)\\
 &=\begin{cases}1,&\tau=\bm1,\\0,&\tau\ne\bm1.\end{cases}
 \end{aligned}
 \label{eqS:Frobenius}
\end{equation}
and proves Eq.~\eqref{eqS:fixed-grade}.  Because
Eq.~\eqref{eqS:induced-grade} is a decomposition of the represented code
subspace itself, each surviving $G$-fixed factor is an actual
Clifford-invariant subspace there, not an abstract multiplicity space.

\item \emph{Adjacent differences and the exceptional head.}
Under the quotient-code isometry, Lemma V.10 identifies the zero-defect
module $\eta_{Q,0}(\bm1)$ in Lemma~\ref{lemS:orbit-tail} with
$\eta^{\Ord}_{k,r}(\bm1)$.  When $4\mid k$, inverse image of the quotient Wu
line is $N+\langle\one_k\rangle$.  Lemma V.10 and Theorem V.14 applied to
that local exceptional grade, together with the range identity just proved
and the direct sum of Lemma V.12, identify it with
$\eta^{\Exc}_{k,r+1}(\bm1)$.  Combining these identifications with
Eq.~\eqref{eqS:orbit-tail-recursion} gives
Eqs.~\eqref{eqS:adjacent-difference} and
\eqref{eqS:ordinary-exceptional-split}.  The orbit-tail bijection is complete
at the terminal conventions stated in Lemma~\ref{lemS:orbit-tail}; hence the
shift by one neither omits nor duplicates an all-ones boundary sector.

\item \emph{Multiplicity-free exhaustion.}
Lemma V.18 of Ref.~\cite{MontealegreGross2025} identifies
$\operatorname{span}\{|S\rangle^{\otimes k}:S\in\Stab_n\}$ with the
trivial-$G$ isotype and decomposes it as one copy of $\eta(\bm1)$ for every
nonempty ordinary or exceptional defect orbit.  Together with the preceding
terminal audit, this proves multiplicity-free exhaustion and absence of
overlap.  No faithfulness of a formal relation algebra is used: the cited
theorems concern the represented physical code subspaces, and any abstract
character killed by physical compression contributes no nonempty module to
this sum.

\item \emph{Vanishing on the complement and scalar action.}
Set
\begin{equation}
 \Pi_k:=|O_k^{\rm st}(2)|^{-1}
 \sum_{O\in O_k^{\rm st}(2)}R_n(O).
\end{equation}
This is the projector onto the trivial $G$-isotype.  Left and right
multiplication by every defect-zero relation permute the complete set
$\Sigma_{k,k}(2)$, so the represented operator identity is
\begin{equation}
 A_k(D)=\Pi_kA_k(D)\Pi_k.
 \label{eqS:bi-invariance}
\end{equation}
It annihilates every nontrivial $G$-type.  In addition,
$A_k(D)=Z_{n,k}\rho_{n,k}$, so its range lies in the stabilizer-tensor span
and it annihilates that span's orthogonal complement.  Finally, $A_k(D)$
commutes with the Clifford action; Schur's lemma therefore makes it scalar on
each single module in Eq.~\eqref{eqS:fixed-grade}.
\end{enumerate}

Only the represented-module decomposition in this proof uses $n\geq k$.
For $n<k$, Lemma~\ref{lemS:polynomial-continuation}, proved in
Appendix~\ref{app:proof-continuation}, continues the spectral multiset
through actual trace moments; it does not assert an all-$n$
physical-module decomposition.
\end{proof}

\section{Ordinary eigenvalues}
\label{app:ordinary}

\subsection{Proof of Lemma~\ref{lemS:mixed-moment}}
\label{app:proof-mixed-moment}

\begin{proof}
Start with the ordinary $t$-copy identity
\eqref{eqS:moment-GNW}.  Write a row and column index as
$(x_A,x_B;y_A,y_B)$ and define the coordinate switch
\begin{equation}
 \tau_B(x_A,x_B;y_A,y_B):=(x_A,y_B;y_A,x_B).
 \label{eqS:coordinate-switch}
\end{equation}
It is an involution, fixes the diagonal pointwise, and obeys
\begin{equation}
 \begin{aligned}
 &(q_t\oplus-q_t)(x_A,x_B;y_A,y_B)\\
 &\quad=(q_{a,b}\oplus-q_{a,b})\tau_B(x_A,x_B;y_A,y_B).
 \end{aligned}
\end{equation}
It therefore bijects $\Sigma_{t,t}(2)$ with
$\Sigma(q_{a,b},q_{a,b})$ and preserves
$\dim(T\cap\Delta_V)$.  Partial transpose in the $B$ tensor slots, in the
computational basis, gives entry by entry,
\begin{equation}
 \Theta_B(r(T))=r_{q_{a,b}}(\tau_BT),
 \qquad
 \Theta_B(R_n(T))=R_{n,q_{a,b}}(\tau_BT).
 \label{eqS:partial-transpose-relations}
\end{equation}
On the other side it sends
$\rho_S^{\otimes t}$ to
$\rho_S^{\otimes a}\otimes\overline{\rho_S}^{\otimes b}$, the projector
onto $|S;q_{a,b}\rangle$.  Applying $\Theta_B$ to
Eq.~\eqref{eqS:moment-GNW} proves
Eq.~\eqref{eqS:mixed-moment-lemma}, with the same coefficient for every
relation and with no stable-range restriction.  Changing $U_S$ only
changes the represented stabilizer ray by a phase, which disappears from
this projector.

For completeness, in the stable range $t\leq n-1$ the transported
relations are the linearly independent commutant basis of Proposition
III.3 of Ref.~\cite{MontealegreGross2025}; thus the coefficient separation
is unique there.  Outside that range Eq.~\eqref{eqS:partial-transpose-relations}
transports the all-$n$ identity of Theorem 5.3 of
Ref.~\cite{GrossNezamiWalter2021}, so no independence assumption is being
made.  The same involution transports the ordinary relative-diagonal
count, Eq. (4.8) in the proof of Theorem 4.10 there, giving
Eq.~\eqref{eqS:mixed-diagonal-count}.  Hence
\begin{align}
 \sum_T\Tr R_{n,q_{a,b}}(T)
 &=\sum_{\ell=0}^{t-1}2^{\binom\ell2}
   \gbinom{t-1}{\ell}{2}D^{t-\ell}\notag\\
 &=D\prod_{j=0}^{t-2}(D+2^j)=Z_{n,t},
\end{align}
including $t=1$ by the empty-product convention.  This independently
checks the normalization.  Conjugating every relation operator by the
basis permutation induced by a quadratic-space isometry proves the last
assertion.
\end{proof}

\subsection{Proof of Lemma~\ref{lemS:weighted-flags}}
\label{app:proof-weighted-flags}

\begin{proof}
The supports are
\begin{equation}
 F+\operatorname{Hom}(\F_2^n,N),\qquad
 OF+\operatorname{Hom}(\F_2^n,M).
\end{equation}
Equality of one vector in the two supports implies
$(O-I)F\in\operatorname{Hom}(\F_2^n,N+M)$, and the converse follows by
lifting a map into $N+M$ through the surjection $N\oplus M\to N+M$.
Since $\operatorname{im}F=N^\perp$, this is exactly
Eq.~\eqref{eqS:weighted-overlap-condition}.  A nonempty intersection is a
coset of $\operatorname{Hom}(\F_2^n,K)$ and has $D^{\dim K}$ points.
The two CSS normalizations contribute $D^{-r}$, proving the overlap
$D^{\dim K-r}=D^{-\ell}$ in both directions.

The left and right kernels of the pairing on $N/K\times M/K$ are
$(N\cap M^\perp)/K$ and $(M\cap N^\perp)/K$.  Thus it is nondegenerate
exactly when both intersections equal $K$.  We now derive this condition
from the affine overlap, rather than assume it.  If
Eq.~\eqref{eqS:weighted-overlap-condition} holds, then for every
$x\in N^\perp$ one can write
\begin{equation}
 (O-I)x=n_x+m_x,\qquad n_x\in N,\quad m_x\in M.
\end{equation}
As $Ox\in M^\perp$ and $M\subseteq M^\perp$, this gives
$x+n_x=Ox+m_x\in M^\perp$.  The restrictions to $M/K$ of the functionals
$b(x,\,\cdot\,)$ with $x\in N^\perp$ span all of $(M/K)^*$: the
annihilator of $N^\perp$ inside $M$ is
$M\cap(N^\perp)^\perp=M\cap N=K$, so this restriction map has rank
$\dim(M/K)=\ell$.  Since every one of these functionals is also the
restriction of some $n_x\in N$, the map
\begin{equation}
N/K\longrightarrow(M/K)^*,\qquad[n]\longmapsto b(n,\,\cdot\,)
\end{equation}
is surjective.  Both spaces have dimension $\ell$, hence it is an
isomorphism.  Equivalently, the unquotiented map $N\to(M/K)^*$ has kernel
$N\cap M^\perp=K$; because the pairing is square, its right kernel is also
zero and $M\cap N^\perp=K$.  Thus the pairing is nondegenerate.

Conversely, whenever the pairing is nondegenerate the displayed map is an
isomorphism.  For $x\in N^\perp$, the functional
$b(x,\,\cdot\,)$ on $M$ vanishes on $K$, so it selects a unique class
$[n_x]\in N/K$ satisfying $b(n_x,m)=b(x,m)$ for every $m\in M$.
Equivalently $x+n_x\in M^\perp$.  This proves existence and uniqueness in
Eq.~\eqref{eqS:weighted-canonical-map}.  Replacing $x$ by $x+n$ with
$n\in N$ changes $n_x$ by the same class, and changing $n_x$ by $K$
changes the output by $M$; hence $J_{M,N}$ is well defined.  It is linear
and bijective by symmetry.  Because $x\perp N$ and $N$ is totally
isotropic,
\begin{equation}
 q_k(x+n_x)=q_k(x)+q_k(n_x)+2b(x,n_x)=q_k(x),
\end{equation}
so it preserves $q_k$ and therefore its polar form.  It fixes the
stochastic vector: $b(\one_k,M)=0$ permits $n_{\one_k}=0$.

Under Eq.~\eqref{eqS:weighted-overlap-condition}, the decomposition above
has $[Ox]_M=[x+n_x]_M$, so $O$ induces $J_{M,N}$.  Conversely, if $O$
induces $J_{M,N}$, then
$[Ox]_M=[x+n_x]_M$ for all $x\in N^\perp$, which is
Eq.~\eqref{eqS:weighted-overlap-condition}.  This proves both directions.
Lemma III.4 of
Ref.~\cite{MontealegreGross2025} supplies one stochastic isometry carrying
$N$ to $M$, and Lemma III.11 makes the stabilizer map onto $G_N$.
Composing the two proves that every stochastic quotient isometry has one
coset of the kernel as its lift fibre, of cardinality $\kappa_N$.  The
direct quotient-lift statement is Lemma 4.20 of
Ref.~\cite{GrossNezamiWalter2021}.  In its qubit Witt-extension step
(Lemma 4.18 there), one must also check the mod-four obstruction.  Here the
extension is already a $q_k$-isometry fixing $\one_k$, and for
$I=\{x:b(x,x)=0\}=\one_k^\perp$ one has
$I^\perp=\langle\one_k\rangle$; hence both extended defect spaces meet
$I^\perp$ in the same fixed Wu line.  All hypotheses are therefore
satisfied.

It remains to prove Eq.~\eqref{eqS:weighted-flag-count}, including
uniqueness.  Pass to $\overline E=K^\perp/K$, put $U=N/K$, and choose a
stochastic Witt decomposition
\begin{equation}
 \overline E=U\oplus W\oplus U^\vee,
 \qquad \dim W=s,\qquad\one_k\in W.
\end{equation}
The projection $M/K\to U^\vee$ is an isomorphism exactly when the displayed
pairing is nondegenerate.  For dual bases $u_i,v_i$ this forces, rather than
merely permits, the unique normalized basis
\begin{equation}
 m_i=v_i+x_i+\sum_{j=1}^{\ell}c_{ji}u_j,
 \qquad x_i\in W_0:=W\cap\one_k^\perp.
 \label{eqS:weighted-normal-form}
\end{equation}
The stochastic condition gives $x_i\in W_0$, while total $q_k$-isotropy is
equivalent to
\begin{equation}
 c_{ii}=q_k(x_i)/2,
 \qquad c_{ij}+c_{ji}=b(x_i,x_j)\quad(i<j).
 \label{eqS:weighted-normal-form-equations}
\end{equation}
Here $q_k(x_i)$ is even because $x_i\perp\one_k$.  Thus the $x_i$ give
$2^{\ell(s-1)}$ choices, the strict upper triangle of $C$ gives
$2^{\binom\ell2}$ choices, and all other entries are forced.  Conversely
substitution verifies the stochastic and quadratic equations, so no
objects are missed.  When $\ell=0$ the unique choice is $M=N$; when
$W_0=0$ all $x_i$ vanish and the same equations retain exactly the
alternating-matrix choices.  These also cover the terminal Witt index.
\end{proof}

\subsection{Proof of Lemma~\ref{lemS:flags}}
\label{app:proof-flags}

\begin{proof}
We compute the normalization lost by using the nonorthogonal family of CSS
codes.  Regard a $k\times n$ binary matrix as a map
$F:\F_2^n\to E_k$.  Choose $F$ with image $N^\perp$ and set
\begin{equation}
 |\psi_{N,F}\rangle
 :=D^{-r/2}\sum_{Z\in\operatorname{Hom}(\F_2^n,N)}|F+Z\rangle.
 \label{eqS:max-rank-coset}
\end{equation}
This is a normalized CSS coset vector.  Its quotient map
$f=[F]:\F_2^n\to Q_N=N^\perp/N$ is onto.  Since the polar form $b_Q$
on $Q_N$ is nondegenerate,
\begin{equation}
 \rank(f^*b_Q)=s=k-2r.
 \label{eqS:surjective-weight-rank}
\end{equation}
In matrix form this is $\rank(f^{\mathsf T}B_Qf)=s$: $f$ is surjective,
$B_Q$ is invertible, and $f^{\mathsf T}$ is injective.  Thus this weight
vector lies in the maximal-rank component of $C_N$.  Lemma V.10 of
Ref.~\cite{MontealegreGross2025}, whose hypothesis $s\leq n$ holds here,
identifies that component with the local ordinary grade $\mathcal K_N$.
As a support check, if $L\supsetneq N$ is a
strict isotropic extension, every computational string in $C_L$ has
quotient image contained in the proper subspace $L^\perp/N<Q_N$, whereas
every string in the support of $|\psi_{N,F}\rangle$ induces the surjection
$[F]:\F_2^n\to Q_N$.  The supports are therefore disjoint, so this vector
is orthogonal to every code $C_L$ from a strict extension $L\supsetneq N$.

Write $G=O_k^{\rm st}(2)$, $P_N=\operatorname{Stab}_G(N)$, and
\begin{equation}
 \kappa_N:=|\ker(P_N\longrightarrow G_N)|.
 \label{eqS:kappa}
\end{equation}
We first evaluate the spherical Gram row
\begin{equation}
 S_N(D):=\sum_{O\in G}
 \langle\psi_{N,F}|R_n(O)|\psi_{N,F}\rangle.
 \label{eqS:spherical-row}
\end{equation}
Lemma~\ref{lemS:weighted-flags} proves, in both directions, the support
criterion, its overlap weight, the canonical quotient isometry, and the
complete lift fibre.  There are $\gbinom{r}{\ell}{2}$ choices of
$K<N$ of codimension $\ell$ and, for each $K$,
$2^{\ell(s-1)+\binom\ell2}$ such $M$.  Combining these counts gives
\begin{align}
 \frac{S_N(D)}{\kappa_N}
 &=\sum_{\ell=0}^{r}\gbinom{r}{\ell}{2}
   2^{\ell(s-1)+\binom\ell2}D^{-\ell}\notag\\
 &=D^{-r}\prod_{j=0}^{r-1}(D+2^{s-1+j})
 =:\gamma_{k,r}(D),
 \label{eqS:H-qbinomial}
\end{align}
by the finite binary $q$-binomial theorem.  This overlap-weighted sum supplies
the normalization needed because the CSS code subspaces are not mutually
orthogonal.

We next compute the diagonal matrix element of $A_k(D)$.  Let
$P_N^{\rm CSS}$ denote the orthogonal projector onto $C_N$, and let $q_N$
be the nondegenerate $\mathbb Z_4$-quadratic form induced on
$Q_N=N^\perp/N$.  The code restriction of
Ref.~\cite{MontealegreGross2025} intertwines the Clifford action on $C_N$
with the mixed-signature tensor-power model $\Delta_{q_N}$; in general this
model is \emph{not} the pure tensor power on $s=k-2r$ copies.  Write
\begin{equation}
 \widetilde A_N(D):=
 \sum_{T\in\Sigma(q_N,q_N)}R_{n,q_N}(T).
 \label{eqS:quotient-relation-sum}
\end{equation}
Here $\Sigma(q_N,q_N)$ is the set of $s$-dimensional subspaces of
$Q_N\oplus Q_N$ on which $q_N(x)-q_N(y)$ vanishes and which contain
$([\one_k]_N,[\one_k]_N)$; write
$\Delta_{Q_N}:=\{(x,x):x\in Q_N\}$.
For a projective stabilizer state $|S\rangle=U_S|0^n\rangle$, put
$|S;q_N\rangle:=\Delta_{q_N}(U_S)|0^n\rangle^{\otimes s}$; its rank-one
projector is independent of the choice of $U_S$.  The code isometry
$\iota_N$ obeys
\begin{equation}
 \iota_NP_N^{\rm CSS}|S\rangle^{\otimes k}
 \!\langle S|^{\otimes k}P_N^{\rm CSS}\iota_N^{-1}
 =D^{-r}|S;q_N\rangle\!\langle S;q_N|.
 \label{eqS:CSS-stabilizer-compression}
\end{equation}
For $S=|0^n\rangle$ (choose $U_S=I$) this is the normalization of the
zero coset, and the general case follows by Clifford covariance.  This is
Lemma III.8 of
Ref.~\cite{MontealegreGross2025}.  Proposition II.13 of the same reference,
together with its equivalent-signature convention following Lemma III.1,
identifies $q_N$ with a signed Hamming form of total dimension $s$.
Lemma~\ref{lemS:mixed-moment} therefore applies and gives, for the full
parameter range,
\begin{equation}
 \frac1{N_n}\sum_{S\in\Stab_n}|S;q_N\rangle\!\langle S;q_N|
 =\frac{\widetilde A_N(D)}{Z_{n,s}}.
 \label{eqS:mixed-moment}
\end{equation}
Combining Eqs.~\eqref{eqS:moment-GNW},
\eqref{eqS:CSS-stabilizer-compression}, and~\eqref{eqS:mixed-moment} gives
the exact compression
\begin{equation}
 \begin{gathered}
 \iota_NP_N^{\rm CSS}A_k(D)P_N^{\rm CSS}\iota_N^{-1}
 =\alpha_{k,r}(D)\widetilde A_N(D),\\
 \alpha_{k,r}(D):=D^{-r}\frac{Z_{n,k}}{Z_{n,s}}.
 \end{gathered}
 \label{eqS:CSS-A-compression}
\end{equation}
The vector $\iota_N|\psi_{N,F}\rangle=|f\rangle$ is labelled by a
surjective map $f:\F_2^n\to Q_N$.  Thus
$\langle f|\widetilde A_N(D)|f\rangle=1$: a relation contributing to this
diagonal matrix element contains the full diagonal
$\{(f(v),f(v)):v\in\F_2^n\}=\Delta_{Q_N}$, and the unique
$s$-dimensional Lagrangian relation containing it is $\Delta_{Q_N}$ itself.
Consequently
\begin{equation}
 \langle\psi_{N,F}|A_k(D)|\psi_{N,F}\rangle
 =\alpha_{k,r}(D).
 \label{eqS:A-diagonal-alpha}
\end{equation}

Let $\theta_{k,r}(D)$ be the scalar of $A_k(D)$ on the fixed sector in
question.  Since $A_k=\Pi_kA_k\Pi_k$ and that fixed sector contains a single
Clifford module, Eqs.~\eqref{eqS:spherical-row} and
\eqref{eqS:A-diagonal-alpha} give the \emph{master scalar identity}
\begin{equation}
 \alpha_{k,r}(D)
 =\theta_{k,r}(D)
   \langle\psi_{N,F}|\Pi_k|\psi_{N,F}\rangle
 =\theta_{k,r}(D)\frac{\kappa_N\gamma_{k,r}(D)}{|G|}.
 \label{eqS:spherical-ratio}
\end{equation}
Every factor is now accounted for at the represented-operator level:
$\alpha_{k,r}$ comes from the mixed-signature moment and exact CSS
compression~\eqref{eqS:CSS-A-compression}, with its diagonal entry fixed by
Eq.~\eqref{eqS:A-diagonal-alpha}; $\gamma_{k,r}$ is the complete affine CSS
overlap and weighted-defect-flag sum~\eqref{eqS:H-qbinomial}; $\kappa_N$ is
the full quotient-isometry lift fibre from
Lemma~\ref{lemS:weighted-flags}; and $|G|^{-1}$ is the normalization of the
physical projector $\Pi_k$.  In particular, the flag variable is
$\ell=\operatorname{codim}_N(N\cap ON)$ and necessarily runs through the
complete range $0\leq\ell\leq r$ in Eq.~\eqref{eqS:H-qbinomial}; its upper
limit is not an ansatz.
Finally,
$|G|/\kappa_N=|\cG_{k,r}|\,|G_N|$ and
\begin{equation}
 \frac{\alpha_{k,r}(D)}{\gamma_{k,r}(D)}
 =\prod_{\ell=k-r-1}^{k-2}(D+2^\ell)=H_{k,r}(D).
 \label{eqS:H-ratio}
\end{equation}
This proves Eq.~\eqref{eqS:H}.  Its equivalent flag recurrence is
\begin{equation}
 H_{k,0}=1,\qquad
 H_{k,i}=(D+2^{k-i-1})H_{k,i-1}.
 \label{eqS:H-recurrence}
\end{equation}
The zero transverse parameter is included, so there is no subtraction of
one from the second summand.
\end{proof}

\subsection{A normalization example}
\label{app:normalization-example}

\begin{remark}[Normalization at $k=5,r=1$]
There are five ordinary defect lines
$N_i=\langle\one+e_i\rangle$.  For one such line, the overlap sum and CSS
compression above give
$\gamma=(D+4)/D$ and $\alpha=(D+4)(D+8)/D$.
Their ratio is $H=\alpha/\gamma=D+8$.  Thus the eigenvalue calculation
uses both the code compression and the overlap normalization; the overlap
factor $D+4$ alone does not give the scalar.
\end{remark}

\subsection{Proof of Lemma~\ref{lemS:parabolic}}
\label{app:proof-parabolic}

\begin{proof}
Choose a Witt decomposition
$E_k=N\oplus W\oplus N^\vee$, with
$\dim W=s=k-2r$, adapted also to the stochastic vector $w_0\in W$.  Thus
\begin{equation}
 q_k(x,w,z)=q_W(w)+2x^{\mathsf T}z\pmod4,
 \qquad \one_k=(0,w_0,0),
 \label{eqS:Witt-q}
\end{equation}
and the polar form has matrix
\begin{equation}
 J=\begin{pmatrix}0&0&I_r\\0&J_W&0\\I_r&0&0\end{pmatrix}.
 \label{eqS:Witt-J}
\end{equation}
Begin with a general upper-block element that fixes the induced action on
$Q_N\simeq W$,
\begin{equation}
 g=\begin{pmatrix}A&B&E\\0&I_s&Y\\0&0&Z\end{pmatrix}.
 \label{eqS:parabolic-block}
\end{equation}
The $(1,3)$ and $(2,3)$ blocks of $g^{\mathsf T}Jg=J$ give
\begin{equation}
 Z=A^{-\mathsf T},\qquad B=AY^{\mathsf T}J_W.
 \label{eqS:parabolic-polar-one}
\end{equation}
Writing $E=AC$, the $(3,3)$ block gives the remaining polar equation
\begin{equation}
 C+C^{\mathsf T}=Y^{\mathsf T}J_WY.
 \label{eqS:C-equation}
\end{equation}
Fixing $(0,w_0,0)$ is equivalent, using
Eq.~\eqref{eqS:parabolic-polar-one}, to
\begin{equation}
 Y^{\mathsf T}J_Ww_0=0.
 \label{eqS:parabolic-stochastic}
\end{equation}
If $y_i$ is the $i$th column of $Y$, preservation of $q_k$ on the $i$th
basis vector of $N^\vee$ is exactly
\begin{equation}
 C_{ii}=q_W(y_i)/2\quad\text{in }\F_2.
 \label{eqS:parabolic-diagonal}
\end{equation}
This is meaningful: $w_0$ is the Wu vector of $W$, so
$y_i\perp w_0$ implies $q_W(y_i)\equiv b_W(y_i,w_0)=0\pmod2$.
It also makes the diagonal of
Eq.~\eqref{eqS:C-equation} consistent.

Equations~\eqref{eqS:parabolic-polar-one}--\eqref{eqS:parabolic-diagonal}\noeqref{eqS:C-equation,eqS:parabolic-stochastic}
give a bijection between kernel elements and the independent data
\begin{equation}
 A\in\mathrm{GL}(r,2),\qquad
 Y\in\operatorname{Hom}(\F_2^r,w_0^\perp),\qquad
 (C_{ij})_{i<j}\in\F_2^{\binom r2}.
 \label{eqS:kernel-data}
\end{equation}
Indeed, the diagonal of $C$ is fixed by
Eq.~\eqref{eqS:parabolic-diagonal}, its lower triangle is then fixed by
Eq.~\eqref{eqS:C-equation}, and $B,E,Z$ are fixed by the displayed block
relations.  Conversely, substitution proves polar and stochastic
preservation.  Once the polar form is preserved, the difference
$q_k(gv)-q_k(v)$ is linear in $v$.  On $(x,0,0)$ and $(0,w,0)$ the
refinement is unchanged automatically, because their images $(Ax,0,0)$
and $(Bw,w,0)$ have zero $N^\vee$ coordinate; and
Eq.~\eqref{eqS:parabolic-diagonal} checks $(0,0,e_i)$ for every basis
vector $e_i$ of $N^\vee$.  Thus the full Witt basis has been checked, and
the complete $\mathbb Z_4$ refinement is preserved.

Since
\begin{equation}
 |\mathrm{GL}(r,2)|
 =2^{r(r-1)/2}\prod_{j=1}^{r}(2^j-1),
 \end{equation}
the number of triples is
\begin{align}
 |\mathrm{GL}(r,2)|
 2^{r(s-1)}2^{r(r-1)/2}
 &=2^{r(r-1)+r(s-1)}\prod_{j=1}^{r}(2^j-1)\notag\\
 &=2^{r(k-r-2)}\prod_{j=1}^{r}(2^j-1),
 \end{align}
which proves Eq.~\eqref{eqS:kernel}.  Surjectivity is also explicit: every
stochastic $q$-isometry $h$ of $W\simeq N^\perp/N$ has the lift
$\operatorname{diag}(I_r,h,I_r)$, which preserves
Eq.~\eqref{eqS:Witt-q}; equivalently this is Lemma III.11 of
Ref.~\cite{MontealegreGross2025}.  When $r=0$, all three parameter spaces
are trivial and the kernel has order one.  When $s=1$,
$w_0^\perp=0$, so $Y=0$ and the same formulas apply without deleting an
unexamined block.  These include every ordinary terminal Witt-index case.
Finally,
$|\cG_{k,r}|=|O_k^{\rm st}(2)|/|P_r|$ and
$|P_r|=|G_N||\ker(P_r\to G_N)|$, proving
Eq.~\eqref{eqS:orbit-product}.
\end{proof}

\subsection{Proof of Proposition~\ref{propS:eigenvalues}}
\label{app:proof-eigenvalues}

\begin{proof}
Lemma~\ref{lemS:flags} gives, with the nonorthogonal CSS normalization
already included,
\begin{equation}
 \theta_{k,r}(D)=|\cG_{k,r}|\,|G_N|
 \prod_{\ell=k-r-1}^{k-2}(D+2^\ell).
 \label{eqS:theta-before}
\end{equation}
Insert Eq.~\eqref{eqS:orbit-product} and use
\begin{equation}
 \prod_{j=0}^{r-1}(2^{n+3-k}+2^{1-j})
 =2^{r(3-k)}
 \prod_{\ell=k-r-1}^{k-2}(D+2^\ell).
 \label{eqS:product-conversion}
\end{equation}
This gives Eq.~\eqref{eqS:theta}.  Equations
\eqref{eqS:gram-frame} and \eqref{eqS:moment-GNW} multiply the eigenvalue by
$N_n/Z_{n,k}$, proving Eq.~\eqref{eqS:lambda-finite}.
\end{proof}

\section{Exceptional all-ones sectors}
\label{app:exceptional}

\subsection{Proof of Lemma~\ref{lemS:shortening}}
\label{app:proof-shortening}

\begin{proof}
\smallskip\noindent\emph{Fibre classification.---}
Write $W=\F_2^{k-1}$ and let $\jmath(x)=(x,0)$.  For $x\in W$ set
\begin{equation}
 \epsilon(x)=x\cdot w,\qquad \iota(x)=(x,\epsilon(x))\in u^\perp.
\end{equation}
On one binary column define
\begin{equation}
 \mathcal U_1|x\rangle
 =|\epsilon(x)\rangle\otimes2^{-1/2}
  (|\iota(x)\rangle+|\iota(x)+u\rangle).
 \label{eqS:shortening-one-column}
\end{equation}
The map
$x\mapsto(\epsilon(x),[\iota(x)]_{\langle u\rangle})$ is a bijection from
$W$ to $\F_2\times(u^\perp/\langle u\rangle)$: replacing $x$ by $x+w$
toggles the first entry and adds $u$ to $\iota(x)$.  Hence
$\mathcal U_1$ is unitary.  Tensoring Eq.~\eqref{eqS:shortening-one-column}
over the $n$ physical columns and regrouping the first factors gives
$\mathcal U_n$ in Eq.~\eqref{eqS:shortening-unitary}.

We next define shortening on every relation matrix unit.  Proposition 4.17
of Ref.~\cite{GrossNezamiWalter2021} gives unique defect data
\begin{equation}
 T\longleftrightarrow(K_L,K_R,J_T),\qquad
 J_T:K_R^\perp/K_R\longrightarrow K_L^\perp/K_L.
 \label{eqS:relation-defect-data}
\end{equation}
Equivalently, Eq. (4.17) there reconstructs the matrix unit as
\begin{equation}
 T=\{(x,y)\in K_L^\perp\oplus K_R^\perp:
             [x]_{K_L}=J_T[y]_{K_R}\}.
 \label{eqS:relation-from-defect-data}
\end{equation}
Set
\begin{equation}
 \begin{gathered}
 \widehat K_\alpha=K_\alpha+\langle u\rangle,\\
 M_\alpha:=\{x\in W:\jmath(x)\in\widehat K_\alpha\},
 \quad\alpha=L,R.
 \end{gathered}
 \label{eqS:relation-short-defect}
\end{equation}
$\widehat K_\alpha$ is isotropic because every defect lies in $u^\perp$.
Hence $M_\alpha$ is isotropic and cannot contain $w$, since
$q_k(\jmath(w))=q_{k-1}(w)=3\pmod4$; it is an ordinary shortened defect.
The isometry $J_T$ fixes the class of $u$ and hence induces
$\overline J_T:Q_{\widehat K_R}\to Q_{\widehat K_L}$.  In $Q_{M_\alpha}$
the class $\bar w=[w]$ is nonsingular, since $q_{k-1}(w)=3\pmod4$, and
\begin{equation}
 Q_{M_\alpha}=\langle\bar w\rangle\perp\bar w^\perp,
 \qquad
 \vartheta_\alpha:\bar w^\perp\xrightarrow{\simeq}Q_{\widehat K_\alpha},
 \quad[x]\longmapsto[\jmath(x)].
 \label{eqS:shortened-quotient}
\end{equation}
Define $\operatorname{sh}(T)\in\Sigma_{k-1,k-1}(2)$ to have defects
$M_L,M_R$ and quotient isometry
\begin{equation}
 J_T^-:=\id_{\langle\bar w\rangle}\perp
 \vartheta_L^{-1}\overline J_T\vartheta_R.
 \label{eqS:relation-shortening-map}
\end{equation}
Equations~\eqref{eqS:relation-short-defect}--\eqref{eqS:relation-shortening-map}\noeqref{eqS:shortened-quotient}
show directly that $\operatorname{sh}$ is well defined, including when
$u\in K_\alpha$.

We classify a complete fibre.  Fix
$T^-=(M_L,M_R,J^-)$ and put
$d=\dim M_L=\dim M_R$.  Let
$N_\alpha=\langle u,\jmath(M_\alpha)\rangle$.  Since $J^-$ fixes the
nonsingular Wu vector $\bar w$ (every isometry of a nondegenerate odd-type
space fixes its Wu vector), its restriction to
$\bar w^\perp$ gives an isometry
$\overline J:Q_{N_R}\to Q_{N_L}$.  There is exactly one
\emph{closed} lift: it has defects $N_L,N_R$ and quotient isometry
$\overline J$.

Every other lift is \emph{transverse}.  Independently choose hyperplanes
$K_\alpha<N_\alpha$ not containing $u$; there are $2^d$ on each side.
Use the canonical identifications
$[u]^\perp/\langle[u]\rangle\simeq Q_{N_\alpha}$ inside
$Q_{K_\alpha}$.  For fixed hyperplanes, the stochastic lifts
\begin{equation}
 J:Q_{K_R}\longrightarrow Q_{K_L},\qquad
 J[u]=[u],\qquad
 J|_{[u]^\perp/\langle[u]\rangle}=\overline J
 \label{eqS:transverse-lift}
\end{equation}
form a torsor of size $2^h$, where $h=k-2d-2$.  Indeed, after one lift
identifies the two sides, write
$Q_K=\langle u,v\rangle\perp Q_N$.  Every element in the kernel of
reduction is uniquely
\begin{align}
 g_a(u)&=u,\qquad
 g_a(x)=x+b(a,x)u\quad(x\in Q_N),\notag\\
 g_a(v)&=v+a+\frac{q(a)}2u,\qquad a\in Q_N.
 \label{eqS:shortening-unipotent}
\end{align}
An initial lift exists by choosing $v_L,v_R$ with $b(u,v_\alpha)=1$;
adding $u$ to one of them, if necessary, makes
$q(v_L)=q(v_R)$, after which
$u\mapsto u$, $v_R\mapsto v_L$, and $\overline J$ on the orthogonal
complements define it.  Indeed, $[u]$ is the Wu vector and
$b(u,v_\alpha)=1$, so
$q(v_\alpha)\equiv b(v_\alpha,v_\alpha)=b(u,v_\alpha)=1\pmod2$;
the two odd values differ by $0$ or $2$, and adding $u$ toggles between
them.  The polar form on $Q_N$ is alternating because the
Wu class $u$ has been removed, so $q(a)$ is even.  Substitution in
$q(x+y)=q(x)+q(y)+2b(x,y)$ proves that every $g_a$ preserves $q$, and the
polar equations force this form.  Thus the transverse part of the fibre has
\begin{equation}
 2^d2^d2^h=2^{k-2}
 \label{eqS:shortening-fiber-count}
\end{equation}
elements.

\smallskip\noindent\emph{Disjoint exhaustion.---}
This list is disjoint and exhaustive.  If $u\in K_\alpha$, then
Eq.~\eqref{eqS:relation-short-defect} forces $K_\alpha=N_\alpha$, giving
the unique closed lift.  If $u\notin K_\alpha$, then $K_\alpha$ is exactly
a hyperplane of $N_\alpha$ missing $u$, and its quotient isometry is one of
Eq.~\eqref{eqS:transverse-lift}.  Uniqueness of the defect data
\eqref{eqS:relation-defect-data} also gives
$u\in K_L$ if and only if $u\in K_R$, so no mixed closed--transverse case
occurs.  It proves both disjointness and exhaustion;
there is no unclassified lower-filtration remainder.  Thus the no-double-count
claim follows from uniqueness of the complete defect data, not merely from
agreement of the total cardinality.

\smallskip\noindent\emph{Represented weights.---}
It remains to compute the operator weights.  Lemma 4.23 of
Ref.~\cite{GrossNezamiWalter2021} gives
$r(T)=2^eV_T$ for a relation of defect $e$, with $V_T$ the normalized
partial isometry between its CSS coset bases.  Therefore
$R_n(T)=D^eV_T^{\otimes n}$.  For a transverse hyperplane
$K_\alpha<N_\alpha$,
\begin{equation}
 |N_\alpha,[x]\rangle=2^{-1/2}
 (|K_\alpha,[x]\rangle+|K_\alpha,[x+u]\rangle).
 \label{eqS:coset-hyperplane}
\end{equation}
Because Eq.~\eqref{eqS:transverse-lift} fixes $[u]$, compression by $P_u$
makes every transverse lift induce the same normalized partial isometry
$V_{\overline J}$ from $C_{N_R}$ to $C_{N_L}$.  The closed lift induces it
directly.  Its weight is $D^{d+1}$, whereas each of the $2^{k-2}$
transverse lifts has weight $D^d$.

\smallskip\noindent\emph{Operator and filtration consequences.---}
On the shortened side, Eq.~\eqref{eqS:shortening-one-column} sends the CSS
coset basis tensorwise according to
\begin{equation}
 |M,[x]\rangle\longmapsto
 |\epsilon(x)\rangle\otimes
 |\langle u,\jmath(M)\rangle,[\iota(x)]\rangle.
 \label{eqS:shortening-coset-map}
\end{equation}
Since $J^-=\id_{\langle\bar w\rangle}\perp\overline J$, the normalized
partial isometry of $T^-$ becomes $I_D\otimes V_{\overline J}$ after all
$n$ columns.  Fibre by fibre,
\begin{equation}
 \begin{aligned}
 &I_D\otimes\!\left[\sum_{T:\operatorname{sh}(T)=T^-}
 P_uR_n(T)P_u\right]_{C_{\langle u\rangle}}\\
 &\quad=(D^{d+1}+2^{k-2}D^d)I_D\otimes V_{\overline J}\\
 &\quad=(D+2^{k-2})\mathcal U_nR_n(T^-)\mathcal U_n^\dagger.
 \end{aligned}
 \label{eqS:shortening-fibre-operator}
\end{equation}
Summing Eq.~\eqref{eqS:shortening-fibre-operator} over $T^-$ proves
Eq.~\eqref{eqS:shortening-intertwiner}.  The numerical factor has now been
assigned once and only once: $D^{d+1}$ is the single closed weight,
$2^{k-2}D^d$ is the total transverse weight, and the common $D^d$ is exactly
the weight of the shortened relation $R_n(T^-)$.  Because this is an operator
identity after physical compression on the entire all-ones code, no kernel,
stabilizer, or further multiplicity factor can alter the spectrum.

For any ordinary defect $M<W$, the same coset formula gives
\begin{equation}
 \mathcal U_n C_M
 =\mathbb C^D\otimes C_{\langle u,\jmath(M)\rangle}.
 \label{eqS:code-shortening}
\end{equation}
The map $M\mapsto\langle u,\jmath(M)\rangle$ is an
inclusion-preserving bijection from the ordinary defect poset at $k-1$ to
the exceptional defect poset at $k$, increasing dimension by one.  It maps
the maximal ordinary defect in each orthogonal type to the maximal
exceptional defect of the paired type; therefore the terminal grade is
included and no boundary sector is lost or repeated.  Hence
\begin{equation}
 \mathcal U_n\mathcal F^{\Ord}_{k-1,r-1}
 =\mathbb C^D\otimes\mathcal F^{\Exc}_{k,r},\qquad
 \mathcal U_n\mathcal K_{k-1,r-1}
 =\mathbb C^D\otimes\mathcal L_{k,r}.
 \label{eqS:shortening-filtrations}
\end{equation}
This also accounts for every filtration term: shortened defects of
dimension at least $r$ lie in the deeper spaces on both sides, whereas
smaller defects remain and are matched fibrewise by
Eq.~\eqref{eqS:shortening-fibre-operator}.

Finally,
\begin{equation}
 \frac{Z_{n,k}}{Z_{n,k-1}}=D+2^{k-2}.
 \label{eqS:Z-shortening}
\end{equation}
Thus Eq.~\eqref{eqS:shortening-intertwiner}, normalized by $N_n/Z_{n,k}$,
is an exact unitary equivalence of the frame operators on the associated
grades.  The left side has every target eigenvalue repeated $D$ times,
whereas the right side has the shortened multiplicity.  The Gram--frame
correspondence~\eqref{eqS:gram-frame} now gives both identities in
Eq.~\eqref{eqS:shortening-spectrum}.
\end{proof}

\subsection{Quadratic-refinement and endpoint checks}
\label{app:quadratic-checks}
All binary classifications and lifts above retain the full
$\mathbb Z_4$-valued refinement $q_k$ of Eq.~\eqref{eqS:quadratic}, rather
than only its polar form.  Stochastic relations, their defects, and
$O_k^{\rm st}(2)$ are defined by preservation or vanishing of $q_k$ in
Section~\ref{sec:relations}.  The weighted-defect argument verifies that the canonical quotient
map preserves $q_k$ and checks the mod-four hypothesis in Witt extension
after Eq.~\eqref{eqS:weighted-canonical-map}.  The orbit-tail quotient and
classification use Proposition II.13 and Lemma III.4 of
Ref.~\cite{MontealegreGross2025} with this same refinement and with Wu-line
containment retained.  The parabolic count begins
with the quadratic Witt form~\eqref{eqS:Witt-q}, imposes the diagonal
refinement equation~\eqref{eqS:parabolic-diagonal}, and checks the full Witt
basis before counting.  In the shortening proof,
$q_k(\one_k)=0$ for $4\mid k$, the shortened Wu class has the stated odd
quadratic value, and Eq.~\eqref{eqS:shortening-unipotent} is checked on the
complete transverse torsor using the quadratic addition law.  The signs in
Eqs.~\eqref{eqS:type}, \eqref{eqS:ranges}, and
\eqref{eqS:exceptional-ranges}, together with the zero-dimensional
conventions following Eq.~\eqref{eqS:orthogonal-order}, retain the two binary
orthogonal towers through their terminal layers.  Finally, the CSS vectors
in Eq.~\eqref{eqS:CSS-coset} are explicit equal-amplitude
computational-basis sums, while Lemma~\ref{lemS:mixed-moment} is an identity
of rank-one projectors and Eq.~\eqref{eqS:Rrelation} fixes the relation
matrix units with unit coefficients.  Consequently arbitrary Pauli-lift
phases enter neither the CSS overlaps nor those represented matrix units.

\section{Eigenvalue multiplicities}
\label{app:multiplicity}

\subsection{Proof of Lemma~\ref{lemS:invariant-code}}
\label{app:proof-invariant-code}

We use the fixed-space enumerators defined in
Section~\ref{sec:multiplicity}.
We now connect these enumerators to the Clifford code filtration.  Let $C_N$
be the code representation space attached to a defect $N$ and let
$Q_N=N^\perp/N$.  In the stable range, the code model of
Ref.~\cite{MontealegreGross2025} identifies
\begin{equation}
 C_N\simeq\mathbb C[\operatorname{Hom}(\F_2^n,Q_N)].
 \label{eqS:code-model}
\end{equation}
The action is the finite permutation action
$g\cdot F=g\circ F$ on $\operatorname{Hom}(\F_2^n,Q_N)$.  If
$f=\dim\Fix_{Q_N}(g)$, then a homomorphism is fixed precisely when its image
lies in $\Fix_{Q_N}(g)$, so the summand contributed by $g$ is
$2^{nf}=D^f$.  Burnside's lemma therefore gives the exact average
\begin{equation}
 \begin{aligned}
 \dim C_N^{G_N}
 &=\bigl|G_N\backslash\operatorname{Hom}(\F_2^n,Q_N)\bigr|\\
 &=\frac1{|G_N|}\sum_{g\in G_N}D^{\dim\Fix_{Q_N}(g)}.
 \end{aligned}
 \label{eqS:Burnside-code}
\end{equation}
Thus every term in the average counts the fixed homomorphisms of one
specified group element; no formal character multiplicity is being inserted.

\begin{proof}
For odd $k$, the quotient has a fixed nonsingular Wu line orthogonal to a
$2m$-dimensional $O^\delta$ module.  Hence every fixed-space dimension is
$1+f(g)$, and Burnside gives Eq.~\eqref{eqS:a-odd}.

For $k\equiv2\pmod4$, choose a Witt basis that separates the stochastic
two-plane from a $(2m-2)$-dimensional symplectic module.  Each
$h\in\Sp(2m-2,2)$ has two stochastic lifts.  Solving the fixed-vector
equations in that two-plane shows that their fixed-space contributions are
$D^{f(h)+2}$ and $D^{f(h)+1}$.  Averaging the two lifts gives
$D(D+1)D^{f(h)}/2$, proving Eq.~\eqref{eqS:a-two}.

For $4\mid k$, the quotient stochastic group is
$2^{2m-2}\rtimes O^\delta(2m-2,2)$.  Write an element as $(u,h)$, with
$u$ in the natural orthogonal module $W$ and
$f=\dim\Fix_W(h)$.  The fixed-vector equations in a stochastic Witt basis
have dimension $f+2$ when $u\in\Fix(h)^\perp$ and dimension $f$ otherwise.
Exactly $2^{2m-2-f}$ of the $2^{2m-2}$ vectors $u$ satisfy the first
condition.  Thus the average over $u$, for fixed $h$, is
\begin{equation}
 D^f+(D^2-1)(D/2)^f.
 \label{eqS:semidirect-average}
\end{equation}
Averaging Eq.~\eqref{eqS:semidirect-average} over $h$ proves
Eq.~\eqref{eqS:a-zero}.
\end{proof}

\subsection{Proof of Proposition~\ref{propS:multiplicity}}
\label{app:proof-multiplicity}

Proposition~\ref{propS:filtration-bridge} now turns these Burnside orbit
counts into physical multiplicities.  The quotient-orbit bijection
\eqref{eqS:defect-orbit-tail-bijection} identifies $a_{s-2}$ with the tail
from larger defects inside $a_s$, after excluding the ordinary and, when
present, all-ones heads.  The adjacent difference removes that tail exactly
and leaves each head once.  For $4\nmid k$ the sole head is
the ordinary multiplicity $a_s-a_{s-2}$; for $4\mid k$ the same drop is the
disjoint sum of ordinary grade $r$ and exceptional grade $r+1$.  These are
equalities of represented physical subspaces, so an abstract module killed by
physical compression is never counted as a multiplicity.

\begin{proof}
We give the odd tower continuously from its action to its product formula.
For $s=2m+1$, Eq.~\eqref{eqS:Burnside-code} and the fixed nonsingular Wu line
give $a_{2m+1}=DB_m^{O,\delta}(D)$.  The physical adjacent-tail identity then
gives
\begin{equation}
 \begin{aligned}
 M^{\Ord}_{k,r}(D)
 &=a_{2m+1}(D)-a_{2m-1}(D)\\
 &=D\,[B_m^{O,\delta}(D)-B_{m-1}^{O,\delta}(D)]\\
 &=\frac{D(D-\delta2^{m-1})(D+2^{2m-1})
  \prod_{j=0}^{m-2}(D^2-4^j)}{|O^\delta(2m,2)|},
 \end{aligned}
\end{equation}
where the last equality is the exact increment
\eqref{eqS:Oincrement}.  This is Eq.~\eqref{eqS:M-odd}.  At $m=0$ the plus
tower ends with $a_1=D$ and the zero-by-nonexistence value $a_{-1}=0$, hence
$M=D$; the corresponding minus endpoint is absent.

The other towers use exactly the same action--Burnside--adjacent-difference
mechanism.  For $k\equiv2\pmod4$, the corresponding difference is
\begin{equation}
 \frac{D(D+1)}2[B_{m-1}^{\Sp}(D)-B_{m-2}^{\Sp}(D)].
\end{equation}
For $m\geq2$, insert Eq.~\eqref{eqS:Spincrement}; its factor $D-1$
combines with $D+1$ to produce the missing $j=0$ factor $D^2-1$ in
Eq.~\eqref{eqS:M-two}.  At $m=1$, the convention
$B_{-1}^{\Sp}=0$ gives the endpoint
$M^{\Ord}=D(D+1)/2$ directly.

For $4\mid k$, Eq.~\eqref{eqS:a-zero} is first differenced at $s$ and
$s-2$.  Proposition~\ref{propS:filtration-bridge} gives the exact
no-double-count identity
\begin{equation}
 a_s(D)-a_{s-2}(D)
 =M^{\Ord}_{k,r}(D)+M^{\Exc}_{k,r+1}(D).
 \label{eqS:multiplicity-split}
\end{equation}
To identify the two heads individually, use the operator-shortening
Lemma~\ref{lemS:shortening} in Section~\ref{sec:exceptional}.  Its proof uses relation operators
and code projectors, independently of the present $4\mid k$ multiplicity
formula.  Together with the odd multiplicities already established, it gives
\begin{equation}
 M^{\Exc}_{k,r+1}(D)=D^{-1}M^{\Ord}_{k-1,r}(D).
 \label{eqS:exceptional-head-by-shortening}
\end{equation}
The odd problem at $k-1$ has residual dimension
$(k-1)-2r=s-1=2m-1$; the definitions of $\delta$ agree for the paired
residues $k\equiv0,4\pmod8$ and $k-1\equiv7,3\pmod8$, respectively.
The odd case already established above therefore
gives, before inserting the closed group orders,
\begin{equation}
 D^{-1}M^{\Ord}_{k-1,r}(D)
 =B_{m-1}^{O,\delta}(D)-B_{m-2}^{O,\delta}(D).
 \label{eqS:exceptional-head-increment}
\end{equation}
Thus the first increment in the difference of Eq.~\eqref{eqS:a-zero} is
the exceptional multiplicity, rather than merely one unspecified member
of the two-head sum in Eq.~\eqref{eqS:multiplicity-split}.  Removing it
leaves the ordinary remainder
\begin{equation}
 (D^2-1)[B_{m-1}^{O,\delta}(D/2)
          -B_{m-2}^{O,\delta}(D/2)].
 \label{eqS:ordinary-remainder}
\end{equation}
Insert Eq.~\eqref{eqS:Oincrement} at $z=D/2$.  Pulling a factor $1/4$ from
each quadratic term and $1/2$ from each linear term produces the denominator
$2^{2m-2}$; multiplication by $D^2-1$ completes the product from $j=0$.
This is Eq.~\eqref{eqS:M-zero}.  At the admissible $m=1$ plus endpoint,
$a_2=D^2$ splits as the exceptional contribution $1$ and the ordinary
contribution $D^2-1$; the zero-dimensional minus tower is absent.

For clarity, the terminal conventions used in this derivation are: odd plus
has the $m=0$ value $M=D$ and odd minus has no $m=0$ sector; the symplectic
tower has the $m=1$ value $D(D+1)/2$; and, when $4\mid k$, plus type has the
$m=1$ split $(D^2-1)+1$, while the corresponding minus sector is absent.
All remaining values follow from the same two fixed-space increments
\eqref{eqS:Oincrement}--\eqref{eqS:Spincrement}.
\end{proof}

\section{Complete spectrum at every physical dimension}
\label{app:finite}

\subsection{Proof of Lemma~\ref{lemS:polynomial-continuation}}
\label{app:proof-continuation}

\begin{proof}
\emph{Step 1: actual stable-range power sums.---}
For every integer $p\geq1$, repeated use of the finite relation product
\eqref{eqS:semigroup} and trace formula~\eqref{eqS:relation-trace} expands
\begin{equation}
 \Tr[A_k(D)^p]
 \label{eqS:trace-polynomial}
\end{equation}
as a finite sum of monomials $D^j$, and hence as a polynomial in $D$.  In
the stable range, physical exhaustion from
Proposition~\ref{propS:filtration-bridge} gives its power-sum identity with
the candidate spectrum.

\emph{Step 2: polynomial continuation.---}
The candidate power sum
\begin{equation}
 \widehat p_p(D):=
 \sum_{a\in\mathfrak A_k}M_a(D)\theta_a(D)^p
 \label{eqS:predicted-polynomial}
\end{equation}
is also a polynomial.  Indeed, in Eq.~\eqref{eqS:theta} one writes
$2^{n+3-k}=2^{3-k}D$; all remaining group orders depend only on $k$, and
Eqs.~\eqref{eqS:M-odd}--\eqref{eqS:M-zero}\noeqref{eqS:M-two} have no $D$-dependent
denominator.  On the exceptional branch the apparent $D^{-1}$ in
Eq.~\eqref{eqS:shortening-spectrum} cancels the explicit leading $D$ in the
odd-$(k-1)$ multiplicity, while
$\theta^{\Exc}_{k,r}=(D+2^{k-2})\theta^{\Ord}_{k-1,r-1}$ is polynomial.
At every stable value $D=2^n$, $n\geq k$, the two sides agree.  Since this
is an infinite set of values, the two polynomials are identical and therefore
agree at every physical specialization.

\emph{Step 3: integrality independent of exhaustion.---}
At a physical specialization, every $M_a(D)$ is an integer.  This assertion
does not use the stable-range filtration theorem.  For
$D=2^n$, each $B_G(D)$ is, by Burnside, the number of orbits of $G$ on
$\operatorname{Hom}(\F_2^n,V)$.  The odd and $k\equiv2\pmod4$ formulas
are respectively
\begin{equation}
 \begin{gathered}
 D\,[B_m^{O,\delta}(D)-B_{m-1}^{O,\delta}(D)],\\
 \frac{D(D+1)}2\,[B_{m-1}^{\Sp}(D)-B_{m-2}^{\Sp}(D)].
 \end{gathered}
 \label{eqS:integral-Burnside-differences}
\end{equation}
and the $4\mid k$ ordinary formula is exactly
Eq.~\eqref{eqS:ordinary-remainder}.  Here $B_G(D/2)$ is again an orbit
count because $D/2=2^{n-1}$, including $D/2=1$ when $n=1$.  Thus all
three are integers directly.  On the exceptional branch, division by
$D$ cancels the explicit leading $D$ in the odd multiplicity, leaving an
integer difference of orthogonal Burnside counts.

\emph{Step 4: admissibility, zeros, and nonnegativity.---}
Nonnegativity can be read factor by factor from
Eqs.~\eqref{eqS:M-odd}--\eqref{eqS:M-zero}\noeqref{eqS:M-two}:
within the available rank every factor is nonnegative; once the quotient
rank is too large, the first plus-tower over-rank layer is killed by
$D-2^{m-1}$, and candidates with still larger residual dimension are killed
because $\prod_j(D^2-4^j)$ contains $D^2-4^n$.
The endpoint conventions in Eqs.~\eqref{eqS:M-odd} and
\eqref{eqS:orthogonal-order} cover the remaining zero-dimensional cases,
and formally absent candidates already have $M_a=0$ by definition.  This is
not a blanket ``over-rank implies zero'' rule: the first layers outside the
stable range can survive and are intentionally retained; only the stated
factor analysis annihilates the candidates with sufficiently large residual
dimension or an inadmissible endpoint before the spectral sum is formed.
The exceptional multiplicities inherit the same nonnegativity from
$M^{\Exc}_{k,r}=D^{-1}M^{\Ord}_{k-1,r-1}$.

\emph{Step 5: positivity of every surviving atom.---}
Every candidate $\theta_a(D)$ with positive multiplicity is strictly
positive by Eqs.~\eqref{eqS:theta} and
\eqref{eqS:shortening-spectrum}.  Thus Steps 3--5 establish a positive
candidate measure without using the conclusion of this lemma.

\emph{Step 6: identification, including coincidences.---}
The actual operator $A_k(D)=Z_{n,k}\rho_{n,k}$ is positive.  We have now
constructed a second finite positive atomic measure whose positive power
moments all equal those of its nonzero spectrum.  The two measures coincide:
equivalently, the rational functions
\begin{equation}
 \Tr\!\left[A_k(D)(I-zA_k(D))^{-1}\right]
 \quad\text{and}\quad
 \sum_{a\in\mathfrak A_k}
 \frac{M_a(D)\theta_a(D)}{1-z\theta_a(D)}
 \label{eqS:resolvents}
\end{equation}
have the same Taylor series near zero and hence the same poles and combined
residues, including when two displayed sector eigenvalues coincide.  This
determines every nonzero eigenvalue of $A_k(D)$ and its total multiplicity.

\emph{Step 7: the two zero eigenspaces.---}
Equations~\eqref{eqS:gram-frame} and \eqref{eqS:moment-GNW} give
\begin{equation}
 S_{n,k}=\frac{N_n}{Z_{n,k}}A_k(D),
 \label{eqS:frame-relation-scaling}
\end{equation}
and polar decomposition identifies the nonzero spectrum of $H_{n,k}$ with
that of $S_{n,k}$.  Thus $A_k(D)$ and $H_{n,k}$ have the same rank and the
same nonzero multiplicities, while their corresponding eigenvalues differ
by the factor $N_n/Z_{n,k}$.  Their ambient dimensions are $D^k$ and $N_n$,
respectively.  Subtracting their common rank from these dimensions proves
Eqs.~\eqref{eqS:A-zero-multiplicity} and
\eqref{eqS:H-zero-multiplicity} separately.
\end{proof}

\subsection{Proof of Theorem~\ref{thmS:finite-spectrum}}
\label{app:proof-finite-spectrum}

\begin{proof}
For fixed $k$ and stable $n\geq k$, the multiplicity-free Clifford code
decomposition in Proposition~\ref{propS:filtration-bridge}, together with
Propositions~\ref{propS:eigenvalues} and \ref{propS:multiplicity} and
Lemma~\ref{lemS:shortening}, exhausts the stabilizer-tensor span.  Hence the
theorem holds in that range.
Lemma~\ref{lemS:polynomial-continuation} removes the stable-range
restriction and supplies the zero eigenspaces.  Scaling by $N_n/Z_{n,k}$ and
using Lemma~\ref{lemS:PGM} proves Eq.~\eqref{eqS:finite-success}.
\end{proof}

\section{Fixed-offset limit and normalization}
\label{app:limit}

\subsection{Proof of Lemma~\ref{lemS:limit-bounds}}
\label{app:proof-limit-bounds}

\begin{proof}
We display the finite cancellations for each residue class.  Put
\begin{equation}
 \begin{gathered}
 A_n:=\prod_{j=1}^n(1+2^{-j}),\\
 Q_j:=\prod_{i=1}^j(1-4^{-i}),\qquad Q_0:=1,\\
 T_{n,m}:=\prod_{i=n-m+2}^{n}(1-4^{-i}).
 \end{gathered}
 \label{eqS:finite-products}
\end{equation}
with an empty product equal to one, and set
$u:=2r-c$ and $E(u):=(u^2+3u+4)/2$.  Direct insertion of
Eqs.~\eqref{eqS:M-odd}--\eqref{eqS:M-zero}\noeqref{eqS:M-two} and the group orders gives the
following exact identities in the bulk:
For odd $k$, with $s=2m+1$,
\begin{equation}
 \begin{aligned}
 \frac{M^{\Ord}_{k,r}}{N_n}
 &=\frac{2^{-E(u)}(1+2^{-u-2})}{A_n}\\
 &\quad\times\frac{1-\delta2^{m-1-n}}{1-\delta2^{-m}}
   \frac{T_{n,m}}{Q_{m-1}}.
 \end{aligned}
 \label{eqS:finite-cancel-odd}
\end{equation}
For $k\equiv2\pmod4$, with $s=2m$,
\begin{equation}
 \frac{M^{\Ord}_{k,r}}{N_n}
 =\frac{2^{-E(u)}(1+2^{-u-2})}{A_n}
   \frac{T_{n,m}}{Q_{m-1}}.
 \label{eqS:finite-cancel-two}
\end{equation}
For $4\mid k$, with $s=2m$,
\begin{equation}
 \begin{aligned}
 \frac{M^{\Ord}_{k,r}}{N_n}
 &=\frac{2^{-E(u)}(1+2^{-u-2})}{A_n}\\
 &\quad\times\frac{1-\delta2^{m-1-n}}{1-\delta2^{1-m}}
   \frac{T_{n,m}}{Q_{m-2}}.
 \end{aligned}
 \label{eqS:finite-cancel-zero}
\end{equation}
The first formula is used for $m\geq1$ and the last for $m\geq2$; the
endpoint conventions in Section~\ref{sec:multiplicity} cover smaller $m$.  For fixed $c,r$,
these endpoints never occur once $n$ is large.  Then $m\to\infty$,
$T_{n,m}\to1$: indeed,
\begin{equation}
 0\leq1-T_{n,m}
 \leq\sum_{i=n-m+2}^{n}4^{-i}=o(1).
 \label{eqS:T-tail-bound}
\end{equation}
Both displayed sign ratios tend to one,
$A_n\to A$, and $Q_{m-1},Q_{m-2}\to Q$.  Each of
Eqs.~\eqref{eqS:finite-cancel-odd}--\eqref{eqS:finite-cancel-zero}\noeqref{eqS:finite-cancel-two}
therefore tends to $w_r(c)$.

The eigenvalue cancellation is equally explicit.  Define
\begin{equation}
 C_k:=
 \begin{cases}
 2^{-1},&k=1,\\
 (1-\delta2^{-a})Q_{a-1},&k=2a+1\geq3,\\
 Q_{a-1},&k=2a\equiv2\pmod4,\\
 (1-\delta2^{1-a})Q_{a-2},&k=2a,\ 4\mid k.
 \end{cases}
 \label{eqS:Ck}
\end{equation}
The three order formulas reduce exactly to
\begin{equation}
 |O_k^{\rm st}(2)|=2^{(k^2-3k+4)/2}C_k.
 \label{eqS:order-cancellation}
\end{equation}
Also, with
$P_{n,c}:=\prod_{h=2-c}^{n}(1+2^{-h})$,
\begin{equation}
 Z_{n,k}=2^{nk}P_{n,c},\qquad \frac{A_n}{P_{n,c}}=R_c.
 \label{eqS:Z-fixed-offset-product}
\end{equation}
Using $N_n=2^{(n^2+3n)/2}A_n$ in
Eq.~\eqref{eqS:lambda-finite} at $r=0$ now gives the exact identity
\begin{equation}
 \lambda^{\Ord}_{k,0}(n)
 =2^{(c-1)(c-2)/2+1}R_cC_k.
 \label{eqS:lambda0-finite-cancellation}
\end{equation}
Since $C_k\to Q$, this tends to $\lambda_0(c)$.  Finally,
\begin{equation}
 \frac{\lambda^{\Ord}_{k,r}(n)}{\lambda^{\Ord}_{k,0}(n)}
 =F_r(2^{3-c})
 \label{eqS:eigenvalue-exact-ratio}
\end{equation}
at every finite $n$, proving the second limit without another asymptotic
cancellation.

For the uniform estimates, extend every ordinary summand by zero when $r$ is
not admissible, and put
\begin{equation}
 \begin{gathered}
 B:=\prod_{j=1}^{\infty}(1-2^{-j})=\frac QA,\\
 u:=2r-c,\qquad s:=k-2r=n-u.
 \end{gathered}
 \label{eqS:B-u}
\end{equation}
Let
\begin{equation}
 r_0(c):=\max\left\{0,\left\lceil\frac{c-2}{2}\right\rceil\right\}.
\end{equation}
For $r\geq r_0(c)$ one has $s-2\leq n$.  Hence every additive power in the
multiplicity formulas is at most $D$:
\begin{equation}
 \begin{gathered}
 |D-\delta2^{m-1}|\leq2D,\qquad D+2^{s-2}\leq2D,\\
 0\leq D^2-4^j\leq D^2.
 \end{gathered}
 \label{eqS:multiplicity-factor-bounds}
\end{equation}
After extracting their leading powers of two, the group-order denominators
obey
\begin{align}
 |O^\delta(2m,2)|
 &\geq \frac Q2\,2^{2m^2-m+1},
 &&m\geq1,\label{eqS:O-lower}\\
 2|\Sp(2m-2,2)|
 &\geq Q\,2^{2m^2-3m+2},
 &&m\geq1,\label{eqS:Sp-lower}\\
 2^{2m-2}|O^\delta(2m-2,2)|
 &\geq \frac Q2\,2^{2m^2-3m+2},
 &&m\geq2.\label{eqS:Ozero-lower}
\end{align}
Indeed, the remaining finite initial products of $(1-4^{-j})$ are bounded
below by $Q$, and $1-\delta2^{-m}\geq1/2$ in every admissible orthogonal
case.

Writing $g(s):=(s^2-3s+4)/2$, the leading denominator exponent in all three
cases is $g(s)$, while the numerator is at most $4D^s$.  Since
$N_n\geq2^{(n^2+3n)/2}$, this gives
\begin{align}
 \frac{M^{\Ord}_{k,r}}{N_n}
 &\leq\frac8Q\,2^{ns-g(s)-(n^2+3n)/2}\notag\\
 &=\frac8Q\,2^{-(u^2+3u+4)/2}.
 \label{eqS:Mbound}
\end{align}
The two endpoints not covered by the group-order bounds are nonzero but obey
the same estimate after enlarging its constant by at most two.  Namely, at
the odd plus-type endpoint $s=1$ one has $M^{\Ord}_{k,r}=D$, and at the
$4\mid k$ plus-type endpoint $s=2$ one has
$M^{\Ord}_{k,r}=D^2-1$; substitute $u=n-1$ and $u=n-2$, respectively, in
Eq.~\eqref{eqS:Mbound}.

For the eigenvalues, the group orders also give
\begin{equation}
 |O_k^{\rm st}(2)|\leq2^{(k^2-3k+4)/2+1}.
\end{equation}
Together with
$N_n\leq A\,2^{(n^2+3n)/2}$ and $Z_{n,k}\geq D^k$, this implies
\begin{equation}
 \lambda^{\Ord}_{k,0}(n)
 \leq L_c:=2A\,2^{(c^2-3c+4)/2}.
 \label{eqS:lambda0-uniform}
\end{equation}
The correct uniform denominator bound in $F_r$ is
\begin{equation}
 \prod_{j=1}^{r}(2^j-1)
 =2^{r(r+1)/2}\prod_{j=1}^{r}(1-2^{-j})
 \geq\frac QA\,2^{r(r+1)/2}.
 \label{eqS:Fbounds}
\end{equation}
With $x_c=2^{3-c}$, Eq.~\eqref{eq:Fr} therefore yields
\begin{align}
 F_r(x_c)
 &\leq\frac AQ\,2^{(r^2-3r)/2}(x_c+2)^r,\notag\\
 \lambda^{\Ord}_{k,r}(n)
 &\leq\frac{AL_c}{Q}\,2^{r^2/2+\eta_cr},
 \qquad
 \eta_c:=\log_2(x_c+2)-\frac32.
 \label{eqS:lambdabound}
\end{align}
Since
\begin{equation}
 -\frac{u^2+3u+4}{2}
 =-2r^2+(2c-3)r-\frac{c^2-3c+4}{2},
\end{equation}
Eqs.~\eqref{eqS:Mbound} and \eqref{eqS:lambdabound} give
\begin{align}
 \frac{M^{\Ord}_{k,r}}{N_n}\sqrt{\lambda^{\Ord}_{k,r}}
 &\leq K_c\,2^{-7r^2/4+(2c-3+\eta_c/2)r},\notag\\
 \frac{M^{\Ord}_{k,r}}{N_n}\lambda^{\Ord}_{k,r}
 &\leq K'_c\,2^{-3r^2/2+(2c-3+\eta_c)r}.
\end{align}
A negative quadratic dominates either linear term, proving
Eqs.~\eqref{eqS:amplitude-majorant} and
\eqref{eqS:moment-majorant} for $r\geq r_0(c)$.  The finitely many
$r<r_0(c)$ are uniformly bounded for all sufficiently large $n$ by their
termwise convergence; enlarging $C_c$ incorporates them.

To make the moving-boundary quantifiers explicit, the uniformity check is:
\begin{itemize}
\item Eqs.~\eqref{eqS:finite-cancel-odd}--\eqref{eqS:finite-cancel-zero}\noeqref{eqS:finite-cancel-two}
      treat all three residue classes and retain both orthogonal signs;
\item inadmissible ordinary summands are extended by zero before any bound is
      applied, including candidates that disappear at a physical
      specialization;
\item Eqs.~\eqref{eqS:O-lower}--\eqref{eqS:Ozero-lower}\noeqref{eqS:Sp-lower} are sign-uniform,
      and every denominator stays positive in its stated admissible range;
\item the terminal multiplicities at $s=1$ and $s=2$ are evaluated directly
      in the endpoint calculation following Eq.~\eqref{eqS:Mbound}, while
      their eigenvalues use Eqs.~\eqref{eqS:lambda0-uniform} and
      \eqref{eqS:lambdabound}, rather than a bulk denominator estimate;
\item the bulk argument is used only for $r\geq r_0(c)$, while the finite
      head $r<r_0(c)$ is controlled by termwise convergence; and
\item the exceptional branch is not absorbed into these ordinary estimates:
      its complete amplitude obeys the separate exact bound
      \eqref{eqS:exceptional-bound}, and its first moment is evaluated
      separately in Eq.~\eqref{eqS:exceptional-first-moment} below.
\end{itemize}
Consequently, for fixed $c$ and all sufficiently large $n$, one constant
$C_c$ works uniformly in every admissible $r$, residue class, orthogonal
sign, and distance to the terminal layer.  Coincident displayed eigenvalues
do not affect the estimate, which is proved sector by sector; the resolvent
identity~\eqref{eqS:resolvents} is used only to combine their multiplicities
when identifying the spectrum.  Disappearing candidates already have zero
multiplicity.  Thus both majorants hold for every $r\geq0$, including every
admissible boundary sector.
\end{proof}

\subsection{Proof of Theorem~\ref{thm:curve}}
\label{app:proof-curve}

\begin{proof}
Theorem~\ref{thmS:finite-spectrum} writes the square root of the finite
success probability as the sum of ordinary and, possibly, exceptional
amplitudes.  Lemma~\ref{lemS:limit-bounds} and dominated convergence give
\begin{equation}
 \lim_{n\to\infty}
 \sum_r\frac{M^{\Ord}_{k,r}}{N_n}
 \sqrt{\lambda^{\Ord}_{k,r}}
 =\sum_{r\geq0}w_r(c)\sqrt{\lambda_r(c)}.
\end{equation}
The exceptional amplitude is at most $2^{-n}$ by
Eq.~\eqref{eqS:exceptional-bound}, and therefore vanishes.  Squaring proves
Eq.~\eqref{eq:curve}.

For normalization, the exceptional first-moment contribution is also known
exactly.  When $4\mid k$, Lemma~\ref{lemS:shortening} and the fact that the
complete positive spectrum at the odd tensor power $k-1$ is ordinary give
\begin{align}
 \sum_r\frac{M^{\Exc}_{k,r}}{N_n}\lambda^{\Exc}_{k,r}
 &=D^{-1}\sum_q\frac{M^{\Ord}_{k-1,q}}{N_n}
                 \lambda^{\Ord}_{k-1,q}
 =D^{-1}.
 \label{eqS:exceptional-first-moment}
\end{align}
For $4\nmid k$ this contribution is zero.  Hence the ordinary first-moment
sum equals either $1$ or $1-D^{-1}$ and in both cases tends to one.  Applying
dominated convergence with Eq.~\eqref{eqS:moment-majorant} proves
Eq.~\eqref{eq:normalization}.
\end{proof}

\subsection{Limiting spectral mass}
\label{app:spectral-mass}

For every $n$, the total ordinary positive multiplicity obeys
\begin{equation}
 \sum_r\frac{M^{\Ord}_{k,r}}{N_n}\leq1.
\end{equation}
Fatou's lemma and Eq.~\eqref{eqS:termwise-limit} therefore give
$\sum_rw_r(c)\leq1$.  Moreover, Eq.~\eqref{eqS:Mbound} supplies a
summable majorant after a finite head, so dominated convergence gives
$\sum_rM^{\Ord}_{k,r}/N_n\to\sum_rw_r(c)$; the exceptional multiplicity
is at most $D^{-1}N_n$ by shortening.  Thus $\sum_rw_r(c)$ is the total
nonzero limiting spectral mass and the missing mass is the zero
eigenvalue.  At $c=0$, for example, the positive spectral mass is
$0.1954251444668857\ldots$.

\section{Tail asymptotics}
\label{app:tails}

\subsection{Proof of Proposition~\ref{propS:tails}}
\label{app:proof-tails}

Let
\begin{equation}
 a_r(c):=w_r(c)\sqrt{\lambda_r(c)}.
 \label{eqS:a-limit}
\end{equation}
The defining products give the exact adjacent-term ratio
\begin{equation}
 \begin{aligned}
 \frac{a_{r+1}(c)}{a_r(c)}
 &=2^{-(2u+5)}\frac{1+2^{-u-4}}{1+2^{-u-2}}\\
 &\quad\times\sqrt{\frac{2^{2r}(2^{3-c}+2^{1-r})}{2^{r+1}-1}},
 \qquad u=2r-c.
 \end{aligned}
 \label{eqS:term-ratio}
\end{equation}

\begin{proof}
For $c=-m$, Eq.~\eqref{eqS:term-ratio} gives
$a_1/a_0=O(2^{-3m/2})$, and the ratio decreases by a fixed geometric
factor with each subsequent $r$.  Thus
$\sum_ra_r=a_0[1+O(2^{-3m/2})]$.  Directly
multiplying Eqs.~\eqref{eq:weight} and
\eqref{eq:lambda0} at $r=0,c=-m$ gives
\begin{equation}
 a_0(-m)^2=\frac{2^{-(m^2+3m+4)/2}}{AQ}
 \frac{R_{-m}}{A}(1+2^{-m-2})^2.
 \label{eqS:left-leading-exact}
\end{equation}
Here $R_{-m}=\prod_{j=1}^{m+1}(1+2^{-j})$, whence
\begin{equation}
 \begin{aligned}
 \frac{R_{-m}}A&=1-2^{-m-1}+O(2^{-2m}),\\
 (1+2^{-m-2})^2&=1+2^{-m-1}+O(2^{-2m}).
 \end{aligned}
\end{equation}
The first-order corrections cancel.  Squaring the amplitude estimate and
using Eq.~\eqref{eqS:left-leading-exact} proves
Eq.~\eqref{eqS:left-asymptotic}.

For the right tail put $p=m\bmod2$,
$\mathbb Z_p:=\{u\in\mathbb Z:u\equiv p\pmod2\}$, and define
\begin{equation}
 \begin{gathered}
 \omega_u:=\frac{2^{-(u^2+3u+4)/2}(1+2^{-u-2})}{AQ},\\
 L_u:=4\,2^{u/2}-2^{-u/2}.
 \end{gathered}
 \label{eqS:omega-L}
\end{equation}
The Jacobi triple-product identity
\begin{equation}
 \sum_{\nu\in\mathbb Z}q^{\nu^2}z^\nu
 =\prod_{j=1}^{\infty}(1-q^{2j})
  (1+zq^{2j-1})(1+z^{-1}q^{2j-1})
 \label{eqS:JTP}
\end{equation}
gives a convenient generating function for the required moments.
Write $\Theta(z):=\sum_{\nu\in\mathbb Z}4^{-\nu^2}z^\nu$ and
\begin{equation}
 G_p(t):=\sum_{u\in\mathbb Z_p}\omega_u2^{tu/2}.
 \label{eq:theta-generator}
\end{equation}
The substitution $u=p+2\nu$ in this absolutely convergent sum gives
\begin{equation}
 \begin{aligned}
 G_p(t)&=\frac{2^{-(p^2+3p+4)/2+tp/2}}{AQ}\\
 &\quad\times\left[\Theta(2^{t-2p-3})
       +2^{-p-2}\Theta(2^{t-2p-5})\right].
 \end{aligned}
 \label{eq:theta-substitution}
\end{equation}
Use Eq.~\eqref{eqS:JTP} with $q=1/4$.  Its product at $z=1/2$
is $\Theta(1/2)=AQ$: the two positive-factor products run over the odd
powers of $2^{-1}$, and the remaining factor is
$\prod_{j\geq1}(1-2^{-4j})$.
The identities $\Theta(z)=\Theta(z^{-1})$ and
$\Theta(z/16)=4z^{-1}\Theta(z)$, obtained by reversing or shifting the
summation index, reduce Eq.~\eqref{eq:theta-substitution} at
$t=0,\pm1,\pm2$ to
\begin{equation}
 \begin{gathered}
 G_p(0)=1,\qquad G_p(-1)=4G_p(1),\\
 G_p(2)=\frac38,\qquad G_p(-2)=6,
 \qquad p\in\{0,1\}.
 \end{gathered}
 \label{eq:theta-generator-values}
\end{equation}
Since $L_u=4\,2^{u/2}-2^{-u/2}$, its first moment is
$4G_p(1)-G_p(-1)=0$, and its second is
$16G_p(2)-8G_p(0)+G_p(-2)=4$.  Thus, for both parities,
\begin{equation}
 \sum_{u\in\mathbb Z_p}\omega_u=1,\qquad
 \sum_{u\in\mathbb Z_p}\omega_uL_u=0,\qquad
 \sum_{u\in\mathbb Z_p}\omega_uL_u^2=4.
 \label{eqS:theta-moments}
\end{equation}
All three identities are exact and absolutely convergent.

Let $A_\ell:=\prod_{j=1}^\ell(1+2^{-j})$,
$B_\ell:=\prod_{j=1}^\ell(1-2^{-j})$, and
$B_\infty=Q/A$.  For $u\in\mathbb Z_p$, $u\geq-m$, set
$r=(m+u)/2$ and $h=2^{-m/2}$.  Direct cancellation in
Eqs.~\eqref{eq:lambda0}, \eqref{eq:Fr}, and
\eqref{eq:Rc} gives the exact expression
\begin{equation}
 \lambda_r(m)=
 \frac{A}{A_{m-2}}\frac{B_\infty}{B_r}
 \prod_{j=0}^{r-1}(1+2^{j+2-m}).
 \label{eqS:right-lambda-product}
\end{equation}
Uniformly for $|u|\leq m/4$, expansion of the logarithms in
Eq.~\eqref{eqS:right-lambda-product} yields
\begin{equation}
 \lambda_r(m)=1+hL_u+O\!\left(h^2(1+2^{|u|})\right).
 \label{eqS:right-lambda-expansion}
\end{equation}
The implied constant is absolute.  The omitted range can be controlled
together with the eigenvalues, not only with its $\omega$-mass.  Put
$K=(r+1-m)_+$.  Splitting the last product in
Eq.~\eqref{eqS:right-lambda-product} at $j=m-2$, using
$B_\infty/B_r\leq1$, and applying $1+2^j\leq2^{j+1}$ gives
\begin{equation}
 \lambda_r(m)\leq2A^2\,2^{K(K+3)/2}.
 \label{eqS:right-global-bound}
\end{equation}
Combined with the Gaussian exponent in $\omega_u$, this implies, for
$j=0,1,2$ and every fixed $J$,
\begin{equation}
 \sum_{\substack{u\in\mathbb Z_p,\ u\geq-m\\ |u|>m/4}}
 \omega_u\lambda_r(m)^j=o(h^J).
 \label{eqS:right-tail-moments}
\end{equation}
Indeed, for $-m\leq u<-m/4$ the right-hand side of
Eq.~\eqref{eqS:right-global-bound} is bounded.  For $u>m/4$ it is bounded
until $u=m-2$; thereafter its quadratic exponent is at most
$j(u-m+2)^2/8+O(u+m)$, which for $j\leq2$ is strictly dominated by the
$-u^2/2$ exponent of $\omega_u$.  The absent range $u<-m$ similarly has
$o(h^J)$ contribution after multiplication by any fixed power of $L_u$.
In particular, both the omitted amplitude and every moment used below are
$o(h^2)$.

Write $\delta_u=\lambda_r(m)-1$.  Equations
\eqref{eqS:right-lambda-expansion} and \eqref{eqS:theta-moments}, together
with the exact first moment, give
\begin{equation}
 \sum_{u\geq-m}\omega_u\delta_u=o(h^2),\qquad
 \sum_{u\geq-m}\omega_u\delta_u^2=4h^2+o(h^2).
 \label{eqS:right-delta-moments}
\end{equation}
On $|u|\leq m/4$ one has $\sup_u|\delta_u|=o(1)$, and the Taylor remainder
in $\sqrt{1+x}=1+x/2-x^2/8+O(|x|^3)$ is summable there against
$\omega_u$ by the super-Gaussian bound.  Equation~\eqref{eqS:right-tail-moments}
removes the complementary range before this Taylor expansion is used.
Therefore
\begin{align}
 \sum_{r\geq0}a_r(m)
 &=1-\frac18(4h^2)+o(h^2)\notag\\
 &=1-\frac12\,2^{-m}+o(2^{-m}).
 \label{eqS:amplitude-right}
\end{align}
Squaring Eq.~\eqref{eqS:amplitude-right} proves
Eq.~\eqref{eqS:right-asymptotic}.
\end{proof}

\clearpage


\begin{thebibliography}{10}
\providecommand{\url}[1]{#1}
\csname url@samestyle\endcsname
\providecommand{\newblock}{\relax}
\providecommand{\bibinfo}[2]{#2}
\providecommand{\BIBentrySTDinterwordspacing}{\spaceskip=0pt\relax}
\providecommand{\BIBentryALTinterwordstretchfactor}{4}
\providecommand{\BIBentryALTinterwordspacing}{\spaceskip=\fontdimen2\font plus
\BIBentryALTinterwordstretchfactor\fontdimen3\font minus \fontdimen4\font\relax}
\providecommand{\BIBforeignlanguage}[2]{{%
\expandafter\ifx\csname l@#1\endcsname\relax
\typeout{** WARNING: IEEEtran.bst: No hyphenation pattern has been}%
\typeout{** loaded for the language `#1'. Using the pattern for}%
\typeout{** the default language instead.}%
\else
\language=\csname l@#1\endcsname
\fi
#2}}
\providecommand{\BIBdecl}{\relax}
\BIBdecl

\bibitem{AaronsonGottesman2004}
S.~Aaronson and D.~Gottesman, ``Improved simulation of stabilizer circuits,'' \emph{Phys. Rev. A}, vol.~70, p. 052328, 2004.

\bibitem{Gross2006}
D.~Gross, ``{Hudson}'s theorem for finite-dimensional quantum systems,'' \emph{J. Math. Phys.}, vol.~47, p. 122107, 2006.

\bibitem{GrossNezamiWalter2021}
D.~Gross, S.~Nezami, and M.~Walter, ``{Schur--Weyl} duality for the {Clifford} group with applications: Property testing, a robust {Hudson} theorem, and {de Finetti} representations,'' \emph{Commun. Math. Phys.}, vol. 385, pp. 1325--1393, 2021.

\bibitem{HayashiLu2026}
\BIBentryALTinterwordspacing
M.~Hayashi and Y.~Lu, ``Partial stabilizer learning under fixed commuting constraints and the absence of a copy-rate discount,'' 2026, arXiv:2609.13923v1. [Online]. Available: \url{https://arxiv.org/abs/2609.13923v1}
\BIBentrySTDinterwordspacing

\bibitem{Montanaro2017}
\BIBentryALTinterwordspacing
A.~Montanaro, ``Learning stabilizer states by {Bell} sampling,'' 2017, arXiv:1707.04012v2. [Online]. Available: \url{https://arxiv.org/abs/1707.04012}
\BIBentrySTDinterwordspacing

\bibitem{ChangCaroLaroccaWest2026}
\BIBentryALTinterwordspacing
R.~Chang, M.~C. Caro, M.~Larocca, and M.~West, ``Sample-optimal learning of stabilizer states,'' 2026, arXiv:2609.10974v1. [Online]. Available: \url{https://arxiv.org/abs/2609.10974}
\BIBentrySTDinterwordspacing

\bibitem{YuenKennedyLax1975}
H.~P. Yuen, R.~S. Kennedy, and M.~Lax, ``Optimum testing of multiple hypotheses in quantum detection theory,'' \emph{IEEE Trans. Inf. Theory}, vol.~21, no.~2, pp. 125--134, 1975.

\bibitem{EldarMegretskiVerghese2004}
Y.~C. Eldar, A.~Megretski, and G.~C. Verghese, ``Optimal detection of symmetric mixed quantum states,'' \emph{IEEE Trans. Inf. Theory}, vol.~50, no.~6, pp. 1198--1207, 2004.

\bibitem{MontealegreGross2025}
F.~Montealegre-Mora and D.~Gross, ``Duality theory for {Clifford} tensor powers,'' \emph{J. Math. Phys.}, vol.~66, p. 092202, 2025.

\bibitem{FulmanStanton2016}
J.~Fulman and D.~Stanton, ``On the distribution of the number of fixed vectors for the finite classical groups,'' \emph{Ann. Comb.}, vol.~20, no.~4, pp. 755--773, 2016.

\bibitem{ComputationalSupplement}
M.~Hayashi and Y.~Lu, ``Computational supplement to Exact Critical Curve
for Uniform Stabilizer-State Identification,'' version~1.0.0,
Sep.~24, 2026. Ancillary files accompanying this submission,
\path{anc/critical_curve_supplement/}.

\end{thebibliography}
\end{document}